\documentclass{article}
\pdfoutput=1
\usepackage[utf8]{inputenc} 
\usepackage[OT4]{fontenc} 
\usepackage{amsmath}
\usepackage{amssymb}
\usepackage{amsthm} 
\usepackage{mathtools}
\usepackage{hyperref}
\usepackage{subdepth} 
\usepackage{fullpage}
\usepackage{float} 
\usepackage[shortlabels]{enumitem} 
\usepackage{tikz}
\usepackage{tikz-3dplot}
\usepackage{multicol}
\usetikzlibrary{calc, positioning, arrows.meta, shapes.geometric, backgrounds, angles, quotes}
\usepackage{subcaption}
\usepackage{caption}
\usepackage{multirow}
\usepackage{doi}
\usepackage{cellspace} 
\usepackage{booktabs}
\usepackage{needspace}
\DeclareMathOperator{\conv}{conv}
\DeclareMathOperator{\tr}{tr}
\DeclareMathOperator{\aff}{aff}
\DeclareMathOperator{\interior}{int}
\DeclareMathOperator{\lin}{lin}
\DeclareMathOperator{\sgn}{sgn}

\theoremstyle{plain}
\newtheorem{theorem}{Theorem}
\newtheorem{corollary}[theorem]{Corollary}

\newtheorem{proposition}[theorem]{Proposition}
\newtheorem{lemma}[theorem]{Lemma}
\theoremstyle{remark}
\newtheorem{remark}{Remark}
\theoremstyle{definition}
\newtheorem{definition}[theorem]{Definition}

\makeatletter
\def\th@plain{
  \normalfont 
}
\makeatother

\title{From minimal informationally complete measurements to orthocentric simplices and back again}
\author{Piotr Bereza, Wojciech S\l{}omczy\'{n}ski, Anna Szymusiak\\[4pt]
\normalsize Institute of Mathematics, Jagiellonian University,\\
\normalsize \L{}ojasiewicza 6, 30-348 Krak\'{o}w, Poland\\[4pt]
\normalsize\href{mailto:piotr.bereza@doctoral.uj.edu.pl}{\texttt{piotr.bereza@doctoral.uj.edu.pl}},
\href{mailto:wojciech.slomczynski@uj.edu.pl}{\texttt{wojciech.slomczynski@uj.edu.pl}},
\href{mailto:anna.szymusiak@uj.edu.pl}{\texttt{anna.szymusiak@uj.edu.pl}}}
\date{\today}

\begin{document}
\maketitle

\begin{abstract}
The reconstruction of unknown quantum states via minimal informationally complete measurements (MICs) is a cornerstone of quantum tomography.
Although the statistical properties of these measurements are well-understood, their geometric structure has remained elusive.
In this work, we establish a correspondence between the class of minimal $s$-tight informationally complete measurements, encompassing, among others, tight IC and morphophoric measurements, and the classical geometry of orthocentric simplices.
In particular, we prove a three-way equivalence: a MIC is $s$-tight if and only if its measurement vectors, upon suitable rescaling, form the vertices of an acute orthocentric simplex with the orthocentre at the origin, and such simplices are precisely the homothetically self-dual ones.
This geometric manifestation of operational ``tightness'' provides a bridge between the physical world and Euclidean geometry.
Furthermore, the $s$-tight class is fully characterised by its measurement directions: the angles between them must be obtuse and satisfy a cross-ratio condition.
We determine the space of admissible direction configurations: modulo rotations, every such configuration is encoded by a single probability vector, the \emph{skeleton} of the measurement, together with an orientation class, so that the moduli space of $s$-tight MIC directions is $\Delta^{\circ}_{d+1}\times\{\pm 1\}$.
Conversely, every acute orthocentric simplex with the orthocentre at the origin can be anchored in the state space, generating a class of minimal $s$-tight IC measurements that contains exactly one tight IC measurement up to overall rescaling.
Although our primary formalism is constructed for geometric generalised
probabilistic theories, the resulting classification yields new insights
into the geometry of quantum measurements and provides a geometric toolkit for the design and analysis of minimal tomographic procedures in
both quantum and beyond-quantum theories.
\end{abstract}

\section{Introduction}

The search for the fundamental principles governing quantum theory has led to the development of \textit{generalised probabilistic theories} (\textit{GPTs}) -- a broad operational framework that treats quantum mechanics as one of many possible statistical physical theories~\cite{masanes2011derivation,janotta2014generalized,muller2021probabilistic,takakura2022convexity,plavala2023general}.
While standard approaches formulate these theories using ordered vector spaces, equipping the state space with an inner product -- or analogous geometric structures like angles and orthogonality -- naturally yields the concept of a \textit{geometric GPT} (\textit{GGPT}), as developed in~\cite{szymusiak2025can}.
This perspective distinguishes between the \textit{internal geometry} of the state space -- its intrinsic metric properties -- and its \textit{external geometry}, which describes how the state space is embedded within a larger vector space.
In this work, we propose a novel approach to GGPTs that focuses exclusively on the internal geometry. Here, states are represented as elements of a convex subset of a finite-dimensional vector space, while effects are described as pairs consisting of a positive real scalar (between $0$ and $1$) and a vector from the dual convex set.
This framework can be viewed as a generalisation of the \textit{Bloch representation} in quantum theory~\cite{bloch1946nuclear,kimura2003bloch,scott2006tight,appleby2007symmetric,bertlmann2008bloch,bengtsson2017geometry,eltschka2021shape}, extended to incorporate effects, measurements, and, eventually, instruments. 

To investigate this internal geometry, we require only three elements: a convex body of \textit{states} $S$ in a finite-dimensional real inner product space, a \textit{distinguished state}  $0$, and \textit{measurements} represented by outcome probabilities $c$ and vectors $\Psi$ from the dual set $S^\star$ that satisfy a closure condition. Formally:

\begin{itemize} 
    \item a convex and compact set (convex body) of states $S \subset V_0$, where $(V_0, \langle \cdot, \cdot \rangle)$ is a real inner product vector space;
    \item a distinguished state $0 \in \interior S$; 
    \item measurements, represented by pairs $(\Psi, c)$, where $c=(c_j)_{j=1}^n$ is a vector of non-zero \textit{measurement probabilities} (i.e., $\sum_{j=1}^n c_j=1$), and $\Psi=(\psi_j)_{j=1}^n$ is a sequence of \textit{measurement vectors} satisfying two conditions:
    \begin{itemize} 
        \item $\Psi \subset S^\star \coloneqq \{x \in V_0 : \langle x,y \rangle \geq -1 \quad \text{for all } y \in S\}$; 
        \item $\sum_{j=1}^n c_j \psi_j = 0$ (the \textit{closure condition}).
    \end{itemize}
\end{itemize}

Such a pair $(\Psi,c)$ generates the probability $\pi_j(x)$ of obtaining the measurement outcome $j \in \{1, \dots, n\}$ when the system is prepared in the state $x \in S$. This is given by the affine formula
\begin{equation} \label{eq:pj}
\pi_j(x) = c_j(\langle x, \psi_j \rangle + 1).
\end{equation}
This relation follows directly from the fact that probability $\pi$ is an affine function of the state, i.e., from the \textit{law of total probability}. In particular, $\pi(0)=c$. If $c = (1/n,\dots,1/n)$, we call the measurement \textit{unbiased}.

It should be noted that $S^\star$ is always closed and convex, while the condition $0 \in \interior S$ is equivalent to the boundedness of $S^\star$ (and thus its compactness).
Moreover, this framework allows us to distinguish between three types of GGPTs, depending on the relative positioning of the state set and its dual: \textit{infra-dual} ($S \subset S^\star$), \textit{supra-dual} ($S^\star \subset S$), and, most significantly, \textit{self-dual} ($S^\star = S$), the latter of which encompasses both the \textit{classical} and \textit{quantum} cases~\cite{bengtsson2013geometry}.

The measurement vectors $\psi_j$ admit a direct physical interpretation as \textit{post-measurement states}.
Making this precise requires the notion of a \textit{measurement instrument}, and it is this interpretation that will ultimately motivate the classes of measurements studied in this paper.
In this approach, an \textit{instrument} $\mathcal{I} \coloneqq ({\mathcal{I}}_j)_{j=1}^n$ is generated by a sequence of pairs $(F_j, \pi_j)_{j=1}^n$, where $\pi_j \colon S \to \mathbb{R}^+$, $F_j \colon S_j = \{x \in S \colon \pi_j(x) \neq 0\} \to S$ for $j = 1,\dots,n$, and $\sum_{j=1}^n \pi_j = 1$, where both $\pi_j$ and $\mathcal{I}_j \coloneqq \pi_j F_j$ are assumed to be affine. 
If the functions $\pi_j$, $j=1,\dots,n$, are defined by eq.~\eqref{eq:pj}, which implies $\pi_j(0) = c_j$, then we say that the \textit{instrument} $\mathcal{I} \coloneqq ({\mathcal{I}}_j)_{j=1}^n$ is an \textit{instrument for the measurement} $(\Psi,c)$. 
Let us interpret the notion of an instrument in the language of GGPT. If $x \in S$ represents the state of the system before the measurement, $\pi_j(x)$ denotes the probability that the measurement outcome equals $j$, and $F_j(x)$ represents the state of the system after the measurement, assuming that the outcome was actually $j$, then the fact that $\pi_j$ is affine is equivalent to the \textit{law of total probability}, and the property that $\mathcal{I}_j = \pi_j F_j$ is affine is equivalent to the (classical) \textit{Bayes' rule} (see \cite[Sec. 6]{slomczynski2003dynamical}, \cite[Sec. 2.4.3]{breuer2002theory}, and Appendix~\ref{app:instrument}). This approach was considered earlier in
\cite[Definition 6.1]{slomczynski2003dynamical}, see also~\cite{szczepanek2019dynamical}, where a sequence of pairs $(F_j, \pi_j)_{j=1}^n$ was called \textit{homogeneous partial iterated function system}.

Now, if we assume that the GGPT is supra-dual ($S^\star \subset S$) and an instrument $\mathcal{I}$ is \textit{balanced at} $0$, i.e., $F_j(0)=\psi_j$ for every $j=1,\dots,n$ (see \cite[Sec. 6.1]{szymusiak2025can}), the measurement vectors $\psi_j \in S$ can acquire a direct physical interpretation as \textit{post-measurement states} provided that the system was initially in state $0$.
The assumption that $\mathcal{I}$ is balanced at $0$ is highly natural in the quantum case, for instance, it holds for \textit{measure-and-prepare schemes} or \textit{generalised L\"uders instruments with rank-1 effects}.
Observe that under this identification, the fact that the instrument is balanced at $0$ implies together with the geometrical closure condition that $0 = \sum_{j=1}^n c_j \psi_j = \sum_{j=1}^n \mathcal{I}_j(0)$, meaning that the overall, non-selective measurement process leaves the distinguished state invariant.
Thus, the geometry of $\Psi$ ensures the stability of this state under such measurement update rules, see \cite[Sec. 6.1]{szymusiak2025can} for details.
Indeed, it is in this case that we were able to prove a general form of the so-called \textit{Urgleichung} (\textit{primal equation}), well known from QBism, in \mbox{\cite[Sec. 6.2]{szymusiak2025can}} and \cite[Sec. 5]{szymusiak2025morphophoricity}, see below.

Within this landscape, \textit{informationally complete} (\textit{IC}) measurements, defined by $\lin \Psi = V_0$, are of paramount importance, as they allow for the full operational reconstruction of any state from measurement statistics, i.e., \textit{quantum tomography}. For such measurements, the closure condition implies that $0 \in \interior(\conv\Psi)$.
However, the significance of IC measurements extends far beyond the practical demands of state reconstruction; in certain interpretational frameworks, they constitute the foundational axioms of the theory itself. This geometric perspective proves particularly elegant and natural in the context of QBism~\cite{fuchs2013quantum}. In this approach, ``Urgleichung'' serves as the primary law of probability, relating an agent's beliefs to the outcomes of a \textit{symmetric informationally complete} (\textit{SIC}) measurement. 
Building upon this foundational perspective, recent work has systematically expanded the geometric framework of quantum measurements. This progression began with \textit{morphophoric} quantum measurements~\cite{slomczynski2020morphophoric} -- encompassing \textit{2-design} (rank-1 and equal trace) POVMs such as SICs, MUBs (\textit{Mutually Unbiased Bases}) and MUB-like POVMs, as well as SI-POVMs (equal trace informationally complete POVMs of arbitrary rank~\cite{appleby2007symmetric}). Then, it was naturally extended to \textit{morphophoric} measurements in arbitrary GGPTs~\cite{szymusiak2025can}. Ultimately, this scope was broadened to capture the entire class of \textit{$s$-tight} IC measurements~\cite{szymusiak2025morphophoricity}, a comprehensive family that includes both morphophoric and \textit{tight IC}~\cite{scott2006tight} measurements as special cases, thus completing the line of development opened by the Urgleichung.
One further conceptual point deserves to be stated at the outset: tightness in all its variants is not an absolute attribute of a measurement alone, but of the pair consisting of a measurement and an inner product. The classification developed in this paper is therefore relative to a fixed inner product. In practice this relativity is harmless whenever the theory comes with a distinguished inner product, as quantum mechanics does: there the
Hilbert--Schmidt structure is the natural choice, being compatible with the self-duality of the cone and invariant under unitary
symmetries~~\cite{bengtsson2017geometry}. The broader implications of this relativity, including the construction of Euclidean structures from reference measurements, are deferred to Section~\ref{sec:conclusions}.

In all these scenarios, the crucial mathematical feature underpinning the operational structure is the \textit{tightness} of specific frames associated with the measurement vectors.
Translated into the geometric language introduced above, a measurement represented by a sequence of pairs $(\psi_j, c_j)_{j=1}^n$ is classified as follows: it is morphophoric (resp. tight IC) if $(c_j \psi_j)_{j=1}^n$ (resp. $(\sqrt{c_j} \psi_j)_{j=1}^n$) forms a tight frame for $V_0$, and $s$-tight IC if $(\psi_j)_{j=1}^n$ forms a scalable frame for $V_0$.

At first glance, the operational quest for optimal information extraction
-- rooted in the statistical and informational foundations of physical theories -- seems far removed from the concrete intuition of classical
Euclidean geometry. Yet it is exactly here that a crucial connection emerges. Namely, the primary objective of this paper is to establish a correspondence between a class of \textit{minimal} (i.e., consisting of exactly $d+1$ outcomes, where $d = \dim V_0$) IC measurements, specifically, the \textit{minimal $s$-tight IC measurements}, and the geometry of \textit{orthocentric simplices}. Note that the class of minimal IC measurements, known simply as \textit{MIC}s, plays an important role in quantum mechanics~\cite{debrota2020informationally,debrota2020symmetric,debrota2021varieties}.
Let us observe that for such a measurement $c$ is fully determined as the vector of \textit{barycentric coordinates} of $0$ with respect to $\conv \Psi$ and, moreover, it has a clear geometric interpretation (see Figure~\ref{fig:measurement_probability_unified}).
We reveal that the frame-theoretic condition shaping the state reconstruction formula, the scalability that guarantees a dual frame of the canonical form, dictates a spatial harmony: it physically manifests
as the concurrence of the altitudes of the simplex generated by the \emph{suitably rescaled} measurement vectors, with the orthocentre at the origin.

\begin{figure}[htbp]
    \centering
    \begin{tikzpicture}[
        scale=1.8,
        >=stealth,
        box/.style={draw, thick, minimum width=2.2cm, minimum height=2.8cm, rounded corners=5pt, fill=gray!10, align=center}
        ]

        \def\sa{1.8} \def\sb{1.1}
        \def\pax{0.55} \def\pby{0.9}

        \draw[fill=yellow!30, draw=yellow!70!black, thin] (0,0) ellipse ({\sa} and {\sb});
        \node[text=yellow!70!black, font=\bfseries] at (1.5, 0.8) {$S$};

        \draw[black, thin, dash dot] (0,0) circle (1);
        \node[text=black, scale=0.75, font=\bfseries] at (1.1, -0.65) {$B(0,1)$};

        \draw[fill=violet!20, draw=violet, thin] (0,0) ellipse ({\pax} and {\pby});
        \node[text=violet, font=\bfseries] at (0.6, 0.6) {$S^\star$};

        \coordinate (O) at (0,0); 

        \coordinate (P1) at (0.4, 0.25);
        \coordinate (P2) at (-0.38, 0.35);
        \coordinate (P3) at (-0.04, -0.65);

        \draw[gray!60, dotted, thick] (P1) -- (P2) -- (P3) -- cycle;

        \fill[red!10, opacity=0.6] (O) -- (P2) -- (P3) -- cycle;
        \node[text=red!50!black, scale=0.7] at (-0.12, -0.1) {$c_1$};

        \fill[green!10, opacity=0.6] (O) -- (P1) -- (P3) -- cycle;
        \node[text=green!50!black, scale=0.7] at (0.17, 0.0) {$c_2$};

        \fill[blue!10, opacity=0.6] (O) -- (P1) -- (P2) -- cycle;
        \node[text=blue!50!black, scale=0.7] at (0.0, 0.15) {$c_3$};

        \draw[->, thin, red!50!black] (O) -- (P1);
        \node[text= red!50!black, scale=0.8, anchor=west] at (0.3, 0.06) {$\psi_1$};
        
        \draw[->, thin, green!50!black] (O) -- (P2);
        \node[text= green!50!black, scale=0.8, anchor=east] at (-0.28, 0.16) {$\psi_2$};
        
        \draw[->, thin, blue!50!black] (O) -- (P3);
        \node[text=blue!50!black, scale=0.8, anchor=north] at (0.05, -0.58) {$\psi_3$};

        \coordinate (X) at (-1.3, -0.1);
        \draw[->, thin, brown] (O) -- (X) node[above left, scale=0.9] {$x$};
        
        \draw[->, thick, double, double distance=1pt, gray] (2.0, 0) -- (2.5, 0);

     \node[box] (device) at (3.5,0) {
    \begin{tabular}{c} 
        \tikz[scale=0.5]{
            \draw[thick] (-1,0) -- (1,0);
            \draw[thick] (-1,0) arc (180:0:1);
            \foreach \a in {180, 150, 120, 90, 60, 30, 0} \draw[thin] (\a:0.8) -- (\a:1);
            \draw[->, thick, red!80!black] (0,0) -- (60:0.9);
            \fill (0,0) circle (2pt);
        } \\[1ex]
        Measurement \\ 
        device \\ 
        $(\Psi, c)$ 
    \end{tabular}
};

        \draw[->, thick] (device.east) ++(0.05, 0.6) -- ++(0.5,0) node[right, align=left, scale=0.85,red!50!black] {$\pi_1(x) = c_1(\langle x, \psi_1 \rangle + 1)$};
        
        \draw[->, thick] (device.east) ++(0.05, 0.0) -- ++(0.5,0) node[right, scale=0.85,green!50!black] {$\pi_2(x) = c_2(\langle x, \psi_2 \rangle + 1)$};
        
        \draw[->, thick] (device.east) ++(0.05, -0.7) -- ++(0.5,0) node[right, scale=0.85, blue!50!black] {$\pi_3(x) = c_3(\langle x, \psi_3 \rangle + 1)$};
        
\filldraw[blue] (O) circle (0.8pt) node[below right, scale=0.7, text=blue!50!black] {$0$};

    \end{tikzpicture}
    \caption{
    Conceptual geometric representation of states and measurements in $V_0$. The compact convex set of states $S$ contains $0$ in its interior. The measurement vectors $\psi_j$ belong to the polar dual $S^\star$ and satisfy the closure condition, meaning the origin lies strictly within their convex hull. For a minimal measurement $c$ is uniquely determined by $\Psi$ and serves as the set of barycentric coordinates of the origin with respect to $\Psi$. Crucially, in this case barycentric coordinates admit a direct spatial interpretation: each $c_j$ is the ratio of the hypervolume of the sub-simplex formed by the origin and the facet opposite to $\psi_j$ to the total volume of the measurement simplex.}
    \label{fig:measurement_probability_unified}
\end{figure}

A key technical insight of our work is the \textit{rescaling} procedure necessary to reveal this correspondence. A~minimal $s$-tight IC measurement $(\Psi,c)$ yields an \textit{acute orthocentric simplex with its orthocentre at the origin} only after a specific rescaling. The measurement vectors $\psi_j$ must maintain their directions but be scaled by uniquely determined factors. These factors are governed by the ratio of the measurement probabilities $c_j$ to the target barycentric coordinates $p_j$ of the origin. Only under this transformation do they form the vertices of an acute orthocentric simplex $\conv (\frac{c}{p}\Psi)$ centred at the origin. Note that both morphophoric measurements and tight IC measurements lead to orthocentric simplices, 
but the scaling factors differ: we have a trivial scaling ($p_j=c_j$, thus $c_j/p_j=1$) for a tight IC measurement, and a non-trivial scaling ($p_j=1/(d+1)$, thus $c_j/p_j=(d+1)\cdot c_j$) for morphophoric measurements. Consequently, orthocentric simplices corresponding to the latter case must be regular, as their orthocentres and centroids coincide~\cite{hajja2006coincidences}. 
Note that a MIC that is both tight IC and morphophoric is necessarily unbiased.

The cornerstone of our results is established in our main theorem (Theorem~\ref{th:equiv} in Section~\ref{sec:mainresults}), which formalises this rescaling procedure and provides a dictionary between the operational framework of GGPTs and classical convex geometry. Specifically, we prove a three-way equivalence showing that a MIC is $s$-tight if and only if its suitably rescaled measurement vectors form the vertices of an acute orthocentric simplex centred at the origin, which in turn is equivalent to the simplex being homothetically self-dual (i.e., identical to its own polar dual up to a negative homothety). Crucially, this theorem is highly constructive: it provides explicit algebraic formulas linking the scalability constants and measurement probabilities of the measurement to the purely geometric barycentric coordinates and the Gram matrix of the associated simplex. 

Theorem~\ref{th:s_tight_criterion} shows that $s$-tightness of a minimal IC measurement admits a purely angular characterisation.
Namely, a MIC measurement $(\Psi,c)$ is $s$-tight if and only if it fulfils two
conditions:
\begin{itemize}
  \item[(I)] \emph{(obtuse angle condition)}
    $\langle\psi_j,\psi_k\rangle<0$ for $j\neq k$, $j,k=1,\dots,d+1$;
  \item[(II)] \emph{(cross-ratio rule)}
    $\langle\psi_i,\psi_j\rangle\langle\psi_k,\psi_l\rangle
     =\langle\psi_i,\psi_l\rangle\langle\psi_k,\psi_j\rangle$
    for pairwise distinct $i,j,k,l=1,\dots,d+1$.
\end{itemize}

This theorem thus recasts measurement minimality and tightness as purely angular conditions on the measurement vectors. 
As we explain in Section~\ref{sec:skeletons}, this characterisation can be pushed to a complete classification: modulo rotations, the admissible direction configurations are parametrised bijectively by the open probability simplex $\Delta^{\circ}_{d+1}$ -- each configuration being encoded by a single probability vector, which we call the \emph{skeleton} of the measurement -- together with an orientation class. A skeleton, an orientation, and a sequence of admissible lengths determine the measurement uniquely.

The paper is organised as follows. In Section~\ref{sec:newapproach}, we introduce the equivalent geometric approach to \mbox{GGPTs} and recall basic facts concerning finite frames. 
Section~\ref{sec:orthocentric} is devoted to the geometry of orthocentric simplices and their subclasses. 
In Section~\ref{sec:mainlemma}, we prove the main lemma connecting scalable frames to homothetically self-dual orthocentric simplices.
Section~\ref{sec:mainresults} establishes the core equivalence theorems between measurements and simplices, in both directions. 
In Section~\ref{sec:algebraiccriterion}, an algebraic criterion for $s$-tightness of MICs is introduced. Section~\ref{sec:examples} provides examples, including an explicit analysis of the qubit case.
Section~\ref{sec:conclusions} concludes; in particular, Section~\ref{sec:skeletons} summarises the resulting classification of minimal $s$-tight IC measurements by their skeletons. 
Appendix~\ref{app:instrument} relates the affinity of instruments to the Bayes rule, and Appendix~\ref{app:equivalence} proves the equivalence of the internal-geometry approach with the standard formulation of GGPTs.

\section{Equivalent approach to GGPT} \label{sec:newapproach}

There are three basic ingredients in any \textit{generalised probabilistic theory} (\textit{GPT}): \textit{states} ($B$), \textit{effects} ($\mathcal E$) and a \textit{function assigning probabilities} $\mathfrak{p}:B\times\mathcal E\to [0,1]$. The standard approach is to introduce them in the language of \textit{ordered vector spaces}, i.e., the whole theory can be derived from a triple $(V,C,B)$, where $V$ is a vector space ordered by the convex cone $C$ and $B$ is a convex base of $C$. Since such a base is necessarily contained in the 1-level set of a unique positive functional $e\in C^*$ and every such functional defines a unique convex base of $C$, one can equivalently start with $(V,C,e)$. For the purpose of this paper we assume that $V$ is finite dimensional and $\dim V=d+1$. Now, the states are represented by the elements of $B$ and the set of effects $\mathcal E$ consists of the elements of the dual cone, $g\in C^*$, such that $g\leq e$. Finally, the probability assignment works simply as $\mathfrak p(x,g)=g(x)$. In this setup we can also define a \textit{measurement} as a sequence of non-zero effects $(g_j)_{j=1}^n$ such that $\sum_{j=1}^n g_j=e$. 

If the vector space $V$ is additionally equipped with an \textit{inner product} $\langle\cdot,\cdot\rangle$, we call such a GPT \textit{geometric} (\textit{GGPT}). In such a case, we can decompose $V$ as an orthogonal direct sum of $V_0=\aff B -\aff B$ and $\lin(\{m\})$, where $m\in \aff B$ is a unique vector orthogonal to $V_0$. The inner product thus consists of two parts: the inner product on $V_0$, $\langle\cdot,\cdot\rangle_0 \coloneqq \langle\cdot,\cdot\rangle_{|_{V_0\times V_0}}$, and the indication of the \textit{orthogonal direction}, $m$, together with the \textit{size parameter} $\mu \coloneqq \lVert m \rVert^2$. The first part is an intrinsic property of the theory, as it characterises the \textit{internal geometry} of the set of states $B$. The second one describes the \textit{external geometry} of $B$, i.e., how it is positioned in the space $V$ (with respect to $\langle\cdot,\cdot\rangle$). Consequently, the starting point for making a GPT geometric can be to fix the inner product on $V_0$ and only then choose $m \in \aff B$ and $\mu>0$ in a way that we find the most appropriate. In particular, we can assume that $m\in\interior B$ and $\mu$ is such that the inner product is compatible with the order structure in one of the following ways: $C\subset C^+$ or $C^+\subset C$, where $C^+ \coloneqq \{y\in V:\langle x,y\rangle\geq 0 \quad \text{for all }x\in C\}$ is the \textit{positive dual cone} of $C$ in $V$. In the former case we say that the cone $C$ (and the theory) is \textit{infra-dual} while in the latter one we call them \textit{supra-dual}. If it is possible to make the theory both infra- and supra-dual, we call it then self-dual (e.g., both classical and quantum theories are self-dual).

The relation between the cone $C$ and its positive dual $C^+$ can also be expressed in terms of dual sets. For a convex subset $A$ of an affine space $\mathcal A=\aff A$ we define its \textit{dual set} $A_{a,r}^\star$ in $\mathcal A$ with respect to the sphere with centre at $a\in\mathcal A$ of radius $r>0$ as $A_{a,r}^\star\coloneqq \{x\in\mathcal A:\langle x-a,y-a\rangle\geq-r^2 \quad \text{for all }y\in A\}$, also called \textit{dual polar} or \textit{negatively polar set} by some authors. In particular, we will use $A^\star \coloneqq A^\star_{0,1}$. Note that if $A$ is a polytope and $a \in \interior A$, then $A_{a,r}^\star$ is also a polytope. It is easy to see \cite[Proposition 1.ii.]{szymusiak2025can} that $B^\star_{m,\sqrt{\mu}}$ is a convex base of the positive dual cone $C^+$. Consequently, infra-, supra- and self-duality can be expressed as appropriate inclusions between $B$ and $B^\star_{m,\sqrt{\mu}}$: $B \subset B^\star_{m,\sqrt{\mu}}$, $B^\star_{m,\sqrt{\mu}} \subset B$, and $B = B^\star_{m,\sqrt{\mu}}$, respectively.

If the GGPT is supra-dual, the isometric linear isomorphism $T:V\to V^*$ induced by the inner product via $T(y)(x)=\langle x,y\rangle$ allows the identification of effects with the (unnormalised) states. Indeed, since \mbox{$T(C^+)=C^*$,} $T^{-1}(g)\in C^+\subset C$ for any $g\in C^*$. In particular, $T^{-1}(e)=\frac{1}{\mu}m$. As a consequence, the probability assignment for state $x$ and effect $g$ can also be expressed by the inner product: $\mathfrak p(x,g)=g(x)=\langle x,T^{-1}(g)\rangle$. In particular, $g(m)=\langle m, T^{-1}(g)\rangle=\mu\langle T^{-1}(e),T^{-1}(g)\rangle=\mu e(T^{-1}(g))$. 

In the following, we propose an approach to GGPTs that focuses on the intrinsic properties of the theory, the internal geometry. 
It turns out that we can build a GGPT from just three components: a \textit{finite-dimensional vector space} $V_0$, an \textit{inner product} $\langle\cdot,\cdot\rangle_0$ on $V_0$, and a \textit{convex closed neighbourhood of} $0$, $S\subset V_0$. This approach can be viewed as a generalisation of the \textit{Bloch representation} in quantum theory, but extending it to effects and measurements as well.

The path from the GGPT $(V,C,B,\langle\cdot,\cdot \rangle_0,m,\mu)$ to this new representation $(V_0,\langle\cdot,\cdot\rangle_0,S)$ is as follows. Obviously, $V_0$ and the inner product in it are already defined. The set $S$ now represents states and is defined as $S \coloneqq \frac{1}{\sqrt{\mu}}(B-m)=\frac{1}{\sqrt{\mu}}P_0(B)$, where $P_0:V\to V_0$ is the orthogonal projection onto $V_0$. It resembles the Bloch representation in the sense that we replace objects from an affine space with the ones from the vector space of the same dimension. However, we are going to take it one step further and represent effects by using the elements of $V_0$. 

To be precise, since the set of effects is no longer contained in any non-trivial affine space,  we need a pair $(\psi,c)\in V_0\times[0,1]$  to represent an effect $g$. Indeed, we use the isomorphism $T^{-1}$ to consider the effect first as an element of $V$ and put $v \coloneqq T^{-1}(g)$, and then decompose it in the following way: $$v=(v-e(v)m)+e(v)m=P_0(v)+\frac{g(m)}{\mu} m=\frac{g(m)}{\sqrt{\mu}}\left(\frac{1}{\sqrt{\mu}}P_0\left(\frac{\mu}{g(m)}v\right)+\frac{1}{\sqrt{\mu}}m\right).$$ Note that $\frac{\mu}{g(m)}v\in \aff B$.  It is now easy to see that by putting $\psi \coloneqq \frac{1}{\sqrt{\mu}}P_0\left(\frac{\mu}{g(m)}v\right)$ and $c \coloneqq g(m)$   we uniquely describe the effect $g$, see Figure~\ref{fig:newggpt}. While the normalisation factors might look counter-intuitive, they result in the elegant $\mu$-independent description of effects:
\begin{proposition} \label{pr:effect}
    For a pair $(\psi,c)\in (V_0\times(0,1])\cup\{(0,0)\}$ the corresponding vector $g=T(\frac{c}{\sqrt{\mu}}\psi+\frac{c}{\mu}m)$ is an effect if and only if $\psi\in S^\star$ and $\frac{c}{c-1}\psi\in S^\star$ if $c\in(0,1)$, or $\psi=0$ if $c=1$; for $c=0$, $g=0$ is trivially an effect.
\end{proposition}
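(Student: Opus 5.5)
Writing $v \coloneqq \frac{c}{\sqrt{\mu}}\psi+\frac{c}{\mu}m$, the plan is to note that $g=T(v)$ is an effect precisely when $g\in C^*$ and $e-g\in C^*$. Because $T(C^+)=C^*$ and $T^{-1}(e)=\frac1\mu m$, these two conditions become $v\in C^+$ and $\frac1\mu m - v\in C^+$. Both will be tested against the base $B$ of $C$: a vector $w$ lies in $C^+$ if and only if $\langle x,w\rangle\ge 0$ for all $x\in B$. Every $x\in B$ can be written as $x=m+\sqrt{\mu}\,y$ with $y\in S$, by the definition $S=\frac{1}{\sqrt\mu}(B-m)$. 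Since $\psi\in V_0\perp m$ and $\langle m,m\rangle=\mu$, we obtain
\[
\langle x, v\rangle = \frac{c}{\sqrt\mu}\,\sqrt\mu\,\langle y,\psi\rangle_0 + \frac{c}{\mu}\,\mu = c\bigl(\langle y,\psi\rangle_0+1\bigr).
\]
Notice that $\mu$ cancels completely. This is the source of the $\mu$-independence, and it also recovers formula \eqref{eq:pj}.

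The first condition, $v\in C^+$, asks that $c(\langle y,\psi\rangle+1)\ge0$ for every $y\in S$. When $c>0$ this is exactly $\psi\in S^\star$. For the second condition, $\frac1\mu m-v = \frac{1-c}{\mu}m - \frac{c}{\sqrt\mu}\psi$, and the same computation gives $\langle x,\frac1\mu m - v\rangle = (1-c) - c\langle y,\psi\rangle$. The case split is then as follows.

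If $c\in(0,1)$, dividing by $1-c>0$ turns the requirement into $\langle y,\frac{c}{c-1}\psi\rangle \ge -1$ for all $y\in S$, that is, $\frac{c}{c-1}\psi\in S^\star$.

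If $c=1$, the requirement becomes $\langle y,\psi\rangle\le 0$ for all $y\in S$. Since $0\in\interior S$, the set $S$ contains a small ball around $0$, so this forces $\psi=0$. Conversely, $\psi=0$ satisfies both conditions trivially.

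If $c=0$ (paired with $\psi=0$), then $g=0$, which is an effect.

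The step that needs care is the reduction from "$w\in C^+$" to "$\langle x,w\rangle\ge 0$ for all $x\in B$". It holds because $B$ is a base of $C$, so every element of $C$ is a nonnegative multiple of an element of $B$. The other point to watch is the use of $0\in\interior S$ in the case $c=1$. Everything else is the linear computation displayed above.
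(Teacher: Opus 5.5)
Your proof is correct and is essentially the paper's argument: the paper evaluates $g$ at $x=m+\sqrt{\mu}\,y\in B$ to get $g(x)=c(\langle\psi,y\rangle+1)$ and requires $g(x)\in[0,1]$ for all $x\in B$, which is exactly your pair of conditions $v\in C^+$ and $\frac{1}{\mu}m-v\in C^+$ tested on the base $B$. Your explicit use of $0\in\interior S$ to force $\psi=0$ when $c=1$ fills in a step the paper leaves implicit.
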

\begin{proof} Let us assume that $c\in(0,1]$ and let $x\in B$. Then 
    \begin{align*}
        g(x)&=\left\langle \frac{c}{\sqrt{\mu}}\psi+\frac{c}{\mu}m,x\right\rangle=\left\langle\frac{c}{\sqrt{\mu}}\psi+\frac{c}{\mu}m,(x-m)+m\right\rangle\\
        &=c\left\langle\frac{1}{\sqrt{\mu}}\psi,x-m\right\rangle+c=c\left(\left\langle\psi,\frac{1}{\sqrt{\mu}}(x-m)\right\rangle+1\right).
    \end{align*}
    Since $x\in B$ if and only if $\frac{1}{\sqrt{\mu}}(x-m)\in S$, the condition $g(x)\in[0,1]$ for all $x\in B$ can be now stated as $-1\leq\langle\psi,y\rangle\leq\frac{1-c}{c}$ for all $y\in S$. The left-hand side inequality means that $\psi\in S^\star$. For $c=1$ the right-hand side inequality means that $\psi=0$ while for $c<1$ it can be expressed as $\frac{c}{c-1}\psi\in S^\star$.
 \end{proof}

We can now describe the set of effects as $\mathcal E \coloneqq \{(\psi,c)\in S^\star\times(0,1):\frac{c}{c-1}\psi\in S^\star\}\cup\{(0,0),(0,1)\}$.
The function assigning probability now takes the form
\begin{equation} \label{eq:prob-affine}
    \mathfrak{p}(y,(\psi,c))=c(\langle\psi,y\rangle+1)
\end{equation}
 for $y\in S$ and $(\psi,c)\in\mathcal E$. As a corollary we also obtain that the theory is supra- (infra-) dual whenever $S^\star\subset S$ ($S\subset S^\star$, respectively).

\begin{figure}[htb]
    \centering
    \includegraphics[width=0.8\linewidth]{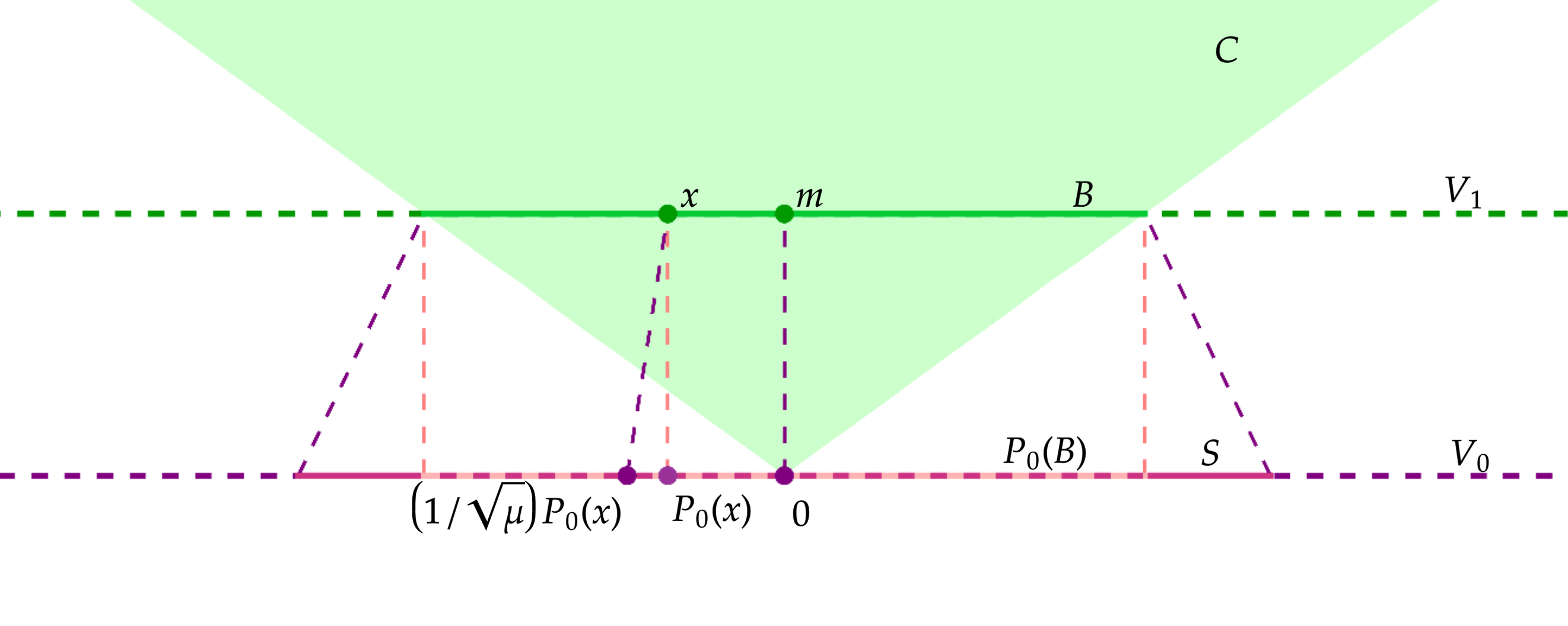}
    \includegraphics[width=0.8\linewidth]{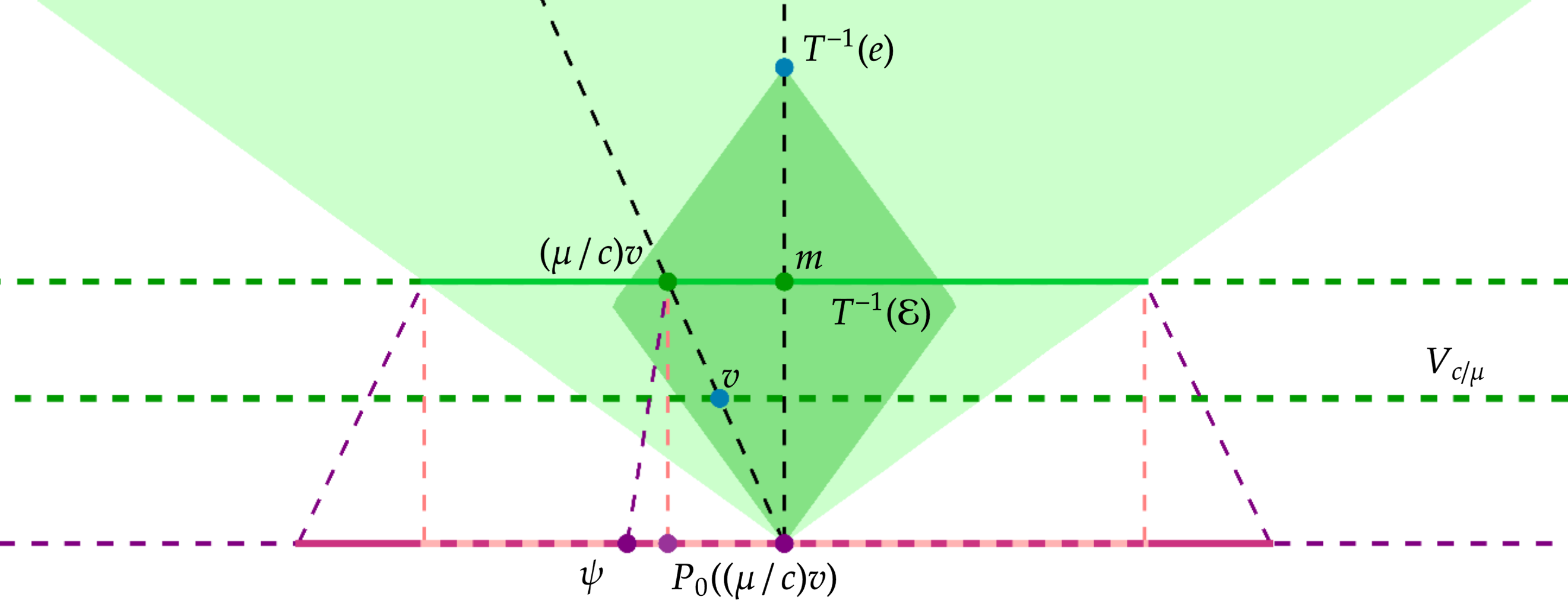}  
    \caption{Above: The standard description of states is by a convex base $B$ of a cone $C$. In our approach states are projected from $V_1=\aff B$ to the corresponding linear subspace $V_0$ and then rescaled by $1/\sqrt{\mu}$ in order to make the theory independent of the norm of the distinguished state $m$. Below: The effects can be transferred from the dual space to the space $V$ via the unique isometric isomorphism $T^{-1}$. An effect $g$ mapped by $T^{-1}$ to $v$ can be uniquely represented by the ray containing it and the affine subspace $V_{c/\mu}$ parallel to $V_0$, where $c=g(m)$. Instead of the whole ray we can consider its intersection with $V_1$, i.e., the point $(\mu/c)v$, and, in consequence, the whole information about the effect $g$ can be stored as a pair $(\psi,c)$, where $\psi=(1/\sqrt{\mu})P_0((\mu/c)v)$ is constructed in the same way as the states in our approach.}
    \label{fig:newggpt}
\end{figure}

Finally, two simple conditions are required  in order to describe a measurement:
\begin{proposition}
    A sequence $(\psi_j,c_j)_{j=1}^n\subset V_0\times (0,1]$ forms a measurement if and only if $\psi_j\in S^\star$ for $j=1,\dots,n$, $\sum_{j=1}^nc_j=1$, and $\sum_{j=1}^nc_j\psi_j=0$.
\end{proposition}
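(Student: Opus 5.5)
The plan is to translate the standard definition of a measurement, a sequence of non-zero effects $(g_j)_{j=1}^n$ with $\sum_{j=1}^n g_j=e$, into the $(\psi,c)$ language using Proposition~\ref{pr:effect}. Each effect $g_j$ corresponds to the pair $(\psi_j,c_j)$ with $g_j=T\bigl(\frac{c_j}{\sqrt{\mu}}\psi_j+\frac{c_j}{\mu}m\bigr)$ and $c_j=g_j(m)$. Since $T$ is a linear isomorphism and $T^{-1}(e)=\frac{1}{\mu}m$, the normalisation $\sum_j g_j=e$ becomes
\[
\frac{1}{\sqrt{\mu}}\sum_{j=1}^n c_j\psi_j+\frac{1}{\mu}\Bigl(\sum_{j=1}^n c_j\Bigr)m=\frac{1}{\mu}m .
\]
The first term lies in $V_0$ and the second is a multiple of $m\perp V_0$. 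Comparing components in the orthogonal decomposition $V=V_0\oplus\lin(\{m\})$ therefore gives exactly $\sum_j c_j\psi_j=0$ and $\sum_j c_j=1$. The same computation run backwards shows that these two identities imply $\sum_j g_j=e$.

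It remains to match the effect conditions. For the forward direction, each $g_j$ is an effect, so Proposition~\ref{pr:effect} gives $\psi_j\in S^\star$; the assumption $c_j\in(0,1]$ is consistent with the $g_j$ being non-zero. For the converse, the hypotheses only give $\psi_j\in S^\star$, whereas Proposition~\ref{pr:effect} also asks for $\frac{c_j}{c_j-1}\psi_j\in S^\star$ when $c_j<1$. This upper bound $g_j\le e$ is the one step that is not immediate.

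I would handle it directly rather than by checking the dual set condition. Each $g_j$ is positive on $B$ because $g_j(x)=c_j(\langle\psi_j,y\rangle+1)\ge0$ for $y\in S$, using $\psi_j\in S^\star$; hence $g_j\in C^*$. Then
\[
e-g_j=\sum_{k\neq j}g_k\in C^*,
\]
so $g_j\le e$ and $g_j$ is an effect. Equivalently, for $y\in S$ we have $c_j(\langle\psi_j,y\rangle+1)=1-\sum_{k\ne j}c_k(\langle\psi_k,y\rangle+1)\le 1$, which is the inequality $\langle\psi_j,y\rangle\le\frac{1-c_j}{c_j}$ appearing in the proof of Proposition~\ref{pr:effect}. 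Finally, $g_j\neq0$ because $c_j>0$, since $g_j(m)=c_j$.

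The case $c_j=1$ is handled by the same argument. Then all the other $c_k$ must be $0$, which is impossible when $n\ge2$ with non-zero $c_k$. When $n=1$, the identity $c_1\psi_1=0$ forces $\psi_1=0$, which agrees with the requirement in Proposition~\ref{pr:effect}.
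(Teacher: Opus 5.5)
Your proof is correct and follows the paper's route: you split the normalisation $\sum_j g_j=e$ along $V=V_0\oplus\lin(\{m\})$, and you match $g_j\in C^*$ with $\psi_j\in S^\star$ via the computation in Proposition~\ref{pr:effect}. Your explicit check that $g_j\le e$ follows from $e-g_j=\sum_{k\neq j}g_k\in C^*$ fills in a step the paper's proof leaves implicit.
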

\begin{proof}
    We have already shown that $T(\frac{c_j}{\sqrt{\mu}}\psi_j+\frac{c_j}{\mu}m)\in C^*$ if and only if $\psi_j\in S^\star$. To complete the proof it suffices to observe that
    \begin{align*}
        \sum_{j=1}^nT(\frac{c_j}{\sqrt{\mu}}\psi_j+\frac{c_j}{\mu}m)=e &\iff \frac{1}{\sqrt{\mu}}\sum_{j=1}^nc_j\psi_j+\frac{1}{\mu}\sum_{j=1}^nc_jm=\frac{1}{\mu}m\\ &\iff \sum_{j=1}^nc_j\psi_j=0 \textnormal{ and }\sum_{j=1}^nc_j=1.\qedhere
    \end{align*}
\end{proof}

\subsection{Measurements (frames, classes)}

In order to introduce the class of $s$-tight IC measurements and its subclasses of tight IC and morphophoric measurements, let us start with recalling some basic definitions and facts concerning finite frames (for general background see, e.g.,~\cite{waldron2018introduction}, scalable frames:~\cite{kutyniok2013scalable}, probabilistic tight frames:~\cite{ehler2013probabilistic}).

\begin{definition} Let  $\mathcal H$ be a $d$-dimensional real inner product space.
    We say that a sequence $(h_j)_{j=1}^n$ of vectors in $\mathcal H$ is
    \begin{itemize}
        \item a \textit{frame} for $\mathcal H$ if there exist $A,B>0$ such that 
        \begin{equation} \label{eq:frame}
            A \lVert h \rVert^2\leq\sum_{j=1}^n|\langle h,h_j\rangle|^2\leq B \lVert h \rVert^2\quad \textnormal{for all }h\in\mathcal H,
        \end{equation}
        \item a \textit{tight frame} for $\mathcal H$ if there exists $A>0$ (called frame bound) such that 
        \begin{equation} \label{eq:tightframe}
            \sum_{j=1}^n|\langle h,h_j\rangle|^2= A\lVert h \rVert^2\quad \textnormal{for all }h\in\mathcal H,
        \end{equation}
        or, equivalently 
        \begin{equation} \label{eq:tightframe2}
           \frac{1}{A} \sum_{j=1}^n\langle h,h_j\rangle h_j= h\quad \textnormal{for all }h\in\mathcal H,
        \end{equation}
        \item a \textit{(strictly) scalable frame} for $\mathcal H$ if there exist positive constants $(s_j)_{j=1}^n$ such that $(s_jh_j)_{j=1}^n$ is a tight frame for $\mathcal H$.
    \end{itemize}
    Two frames $(h_j)_{j=1}^n$ and $(h_j')_{j=1}^n$ are said to be \textit{dual} if
    \begin{equation} \label{eq:dualframe}
        h=\sum_{j=1}^n\langle h, h_j'\rangle h_j=\sum_{j=1}^n\langle h, h_j\rangle h_j' \quad \textnormal{for all } h\in\mathcal H.
    \end{equation}
\end{definition}

The intuitions behind these notions are as follows: inequalities \eqref{eq:frame} mean that $(h_j)_{j=1}^n$ is nothing more than a spanning sequence for $\mathcal H$. Eqs. \eqref{eq:tightframe} and \eqref{eq:tightframe2} can be viewed respectively as a version of Parseval's identity and a Fourier-type expansion,  which in turn allows tight frames to be interpreted as natural generalisations of orthonormal bases. 
Finally, the scalability of a frame guarantees the existence of a dual frame of a specific form, namely if the frame $(h_j)_{j=1}^n$ is scalable with the scalability constants $(s_j)_{j=1}^n$ to a tight frame with the frame bound $A$, then $(s_j^2A^{-1}h_j)_{j=1}^n$ is its dual frame. 

Scalable frames are strictly related to probabilistic tight frames:

\begin{definition}
    A Borel probability measure $\nu$ on $\mathcal H$ is called a \textit{probabilistic tight frame} if there exists $A>0$ such that \begin{equation*}
        \int_{\mathcal H}|\langle h, g\rangle|^2d\nu(g) = A \lVert h \rVert^2.
    \end{equation*}
\end{definition}
   
Let $\Delta_n$ denote the \emph{probability simplex}, i.e., $\Delta_n \coloneqq \{p\in\mathbb R^n:p_j\geq0 \textnormal{ for }j=1,\dots,n \textnormal{ and }\sum_{j=1}^n p_j=1\}$ and $\Delta_n^\circ$ its relative interior, i.e., $\Delta_n^\circ \coloneqq \{p\in \Delta_n:p_j>0 \textnormal{ for }j=1,\dots,n \}$. If $\nu$ is a probability measure on $\mathcal H$ with finite support, i.e., $\nu \coloneqq \sum_{j=1}^np_j\delta_{h_j}$ for some $p=(p_1,\dots,p_n)\in\Delta_n^\circ$ and $(h_j)_{j=1}^n\subset\mathcal H$, then $\nu$ is a probabilistic tight frame if and only if $(\sqrt{p_j}h_j)_{j=1}^n$ is a tight frame with frame bound $A$. As a consequence, if $(h_j)_{j=1}^n$ is a scalable frame with scalability constants $(s_j)_{j=1}^n$, then $\left(\sum_{k=1}^ns_k^2\right)^{-1}\sum_{j=1}^n s_j^2\delta_{h_j}$ is a probabilistic tight frame.

\begin{remark} \label{re:nonuniqueconstants}
    The scalability constants $(s_j)_{j=1}^n$ in the definition of a scalable frame are not unique. Obviously, we can always multiply them by a common factor. Furthermore, whenever there exist some scalable subframes, any convex combination with full support of the scalability constants for such subframes is a scalability constants sequence for the frame in question. In fact, the set of scalability constants is a convex cone with a polyhedral convex base. The vertices of this polyhedron are given by the scalability constants for subframes that are scalable but do not contain any scalable subframe themselves.
\end{remark}

One of the most useful tools to determine whether a sequence of vectors forms a tight frame  is the \textit{variational characterisation} \cite[Theorem 6.1]{waldron2018introduction}:

\begin{theorem}[variational characterisation] \label{th:variationalcharacterisation}
    Let $H \coloneqq (h_j)_{j=1}^n$ be a sequence of vectors in $\mathcal H$ (not all zero). Then 
    \begin{equation*}
      v(H) \coloneqq  \sum_{j=1}^n\sum_{k=1}^n|\langle h_j,h_k\rangle|^2- \frac{1}{d}\left(\sum_{j=1}^n \lVert h_j \rVert^2\right)^2\geq 0
    \end{equation*}
    and equality holds if and only if $(h_j)_{j=1}^n$ is a tight frame for $\mathcal H$.
\end{theorem}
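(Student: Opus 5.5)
The variational characterisation rests on the frame operator $F \coloneqq \sum_{j=1}^n h_j\otimes h_j$, i.e.\ $Fh=\sum_{j=1}^n\langle h,h_j\rangle h_j$. This is a self-adjoint, positive semidefinite operator on $\mathcal H$ with eigenvalues $\lambda_1,\dots,\lambda_d\ge 0$. I will rewrite both terms of $v(H)$ as spectral invariants of $F$, so that the inequality becomes Cauchy--Schwarz for the eigenvalues.

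First, $\tr F=\sum_j \lVert h_j\rVert^2=\sum_i\lambda_i$, since each rank-one term $h_j\otimes h_j$ has trace $\lVert h_j\rVert^2$. Second,
\[
\tr F^2=\sum_{j,k}\tr\bigl((h_j\otimes h_j)(h_k\otimes h_k)\bigr)=\sum_{j,k}\langle h_j,h_k\rangle^2=\sum_i\lambda_i^2 .
\]
Substituting these gives
\[
v(H)=\sum_{i=1}^d\lambda_i^2-\frac1d\Bigl(\sum_{i=1}^d\lambda_i\Bigr)^2=\sum_{i=1}^d\bigl(\lambda_i-\bar\lambda\bigr)^2\ge 0,
\]
where $\bar\lambda$ denotes the mean of the eigenvalues.

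Equality therefore holds exactly when all $\lambda_i$ equal a common value $A$, i.e.\ when $F=A\,\mathrm{Id}$. Since the $h_j$ are not all zero, $\tr F>0$, so this common value satisfies $A>0$. Finally, $F=A\,\mathrm{Id}$ is precisely condition \eqref{eq:tightframe2}, and taking $\langle Fh,h\rangle$ shows it is equivalent to \eqref{eq:tightframe}. Hence $v(H)=0$ if and only if $(h_j)_{j=1}^n$ is a tight frame.

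There is no real obstacle in this argument. The only points needing care are the trace identity for $\tr F^2$ and checking that $A>0$, which is where the hypothesis that the $h_j$ are not all zero is used.
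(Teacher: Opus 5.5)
Your proof is correct and is essentially the standard argument. The paper gives no proof of its own but cites Waldron's Theorem 6.1, which likewise identifies $\sum_j\lVert h_j\rVert^2$ and $\sum_{j,k}|\langle h_j,h_k\rangle|^2$ with $\tr F$ and $\tr F^2$ for the frame operator $F$, applies Cauchy--Schwarz to the eigenvalues, and gets equality exactly when $F=A\,\mathrm{Id}$; your sum-of-squared-deviations identity is that Cauchy--Schwarz step made explicit, and the non-vanishing hypothesis is correctly used to get $A>0$.
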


The frame language turns out to be very useful in describing some special classes of measurements. 
Measurements with the property that the statistics of their outcomes uniquely determine the pre-measurement state are of particular importance in any GPT and are known as \textit{informationally complete} (\textit{IC} for short) measurements. In the standard approach, such a measurement is characterised by the requirement that its effects span the dual space $V^*$. This property can be translated into our approach by demanding that the measurement vectors $\psi_j$ span $V_0$. 

The notions of $s$-tight IC measurements and of the subclasses of tight IC and morphophoric measurements have been stated originally in the language of the standard approach in~\cite{szymusiak2025can,szymusiak2025morphophoricity}. The requirement is always some kind of scalability for the frame given by $(P_0(v_j))_{j=1}^n$, where the effects $g_j=T(v_j)$ form the measurement in question. Since in the new approach $\psi_j$ are proportional to $P_0(v_j)$, the scalability requirement remains but with updated scalability constants:

\begin{definition}
    The measurement  $(\psi_j,c_j)_{j=1}^n$ is said to be
    \begin{itemize}
    \item \textit{informationally complete (IC)} if $(\psi_j)_{j=1}^n$ is a  frame for $V_0$,
    \item \textit{minimal IC (MIC)} if it is IC and $n=d+1$,
            \item \textit{$s$-tight IC} if $(\psi_j)_{j=1}^n$ is a scalable frame for $V_0$,
        \item \textit{tight IC} if $(\sqrt{c_j}\psi_j)_{j=1}^n$ is a tight frame for $V_0$,
        \item \textit{morphophoric} if $(c_j\psi_j)_{j=1}^n$ is a tight frame for $V_0$.
    \end{itemize}
\end{definition}

\begin{remark} \label{re:scalability}
If the measurement $(\psi_j,c_j)_{j=1}^n$ is $s$-tight IC, the
scalability constants are in general not unique, as explained in
Remark~\ref{re:nonuniqueconstants}.
However, they become unique (up to rescaling by a common factor) in the
minimal case, provided that the dimension $d\geq 2$.
Indeed, according to \cite[Theorem 3.1]{chan2017minimal} it is
sufficient to show that $\Psi$ has no proper scalable subframe.
Any subframe must span $V_0$, hence contain at least $d$ vectors, so a
proper one consists of exactly $d$ of them; relabelling the outcomes, we
may assume it is $(\psi_j)_{j=1}^{d}$.
Being a tight frame of $d$ vectors after rescaling, it must be an
orthogonal basis, i.e., $(\psi_j)_{j=1}^{d}$ are pairwise orthogonal.
However, it follows from \cite[Theorem 3.1]{chan2017minimal} and
$s_{d+1} > 0$ that there exists a proper scalable subframe of $\Psi$
containing $\psi_{d+1}$; by the above it consists of $d$ vectors, so it
omits exactly one $\psi_i$ with $i\leq d$ -- say $i=1$, after
relabelling the remaining indices.
Then $(\psi_j)_{j=2}^{d+1}$ are pairwise orthogonal as well. Hence both
$\psi_1$ and $\psi_{d+1}$ are orthogonal to
$\lin(\psi_2,\dots,\psi_d)$, whose orthogonal complement in $V_0$ is
one-dimensional, so $\psi_{d+1}=a\psi_1$ for some $a\neq 0$.
From the closure condition we now obtain that $\psi_1,\dots,\psi_d$ are
linearly dependent, a contradiction.
Therefore, $s$ is uniquely determined by $\Psi$ up to rescaling by a
common factor.
\end{remark}

\section{Orthocentric simplices} \label{sec:orthocentric} 

That the altitudes of any planar triangle intersect at a single point is a classical result known since ancient Greek geometry. However, this elegant property does not trivially carry over to higher dimensions. In fact, as observed by Steiner~\cite{steiner1826fortsetzung} as early as 1826, the altitudes of a general $d$-simplex ($d \ge 3$) are not necessarily concurrent. 
The geometry of orthocentric simplices -- the special class of simplices that do preserve this concurrency -- has a rich mathematical history. According to Court~\cite{court1934notes}, the 3-dimensional case, originally termed a ``rectangular'' or ``orthogonal'' tetrahedron, was first considered by Lhuilier in 1782~\cite{lhuilier1782relatione}. The contemporary term \textit{orthocentric tetrahedron} was later proposed by de Longchamps in 1890~\cite{longchamps1890tetraedre}. The formal generalisation to arbitrary $n$-dimensional simplices was pioneered by Schoute at the turn of the 20th century~\cite{schoute1902mehrdimensionale}, and systematically codified by Sommerville~\cite{sommerville1929introduction}.
In the 20th century, these structures were further explored by Egerv\'{a}ry~\cite{egervary1940orthocentric}, Fiedler~\cite{fiedler1956geometrie}, and Gerber~\cite{gerber1975orthocentric}, who revealed their unique geometric properties. For a comprehensive modern treatment, and for the perspective that orthocentric simplices serve as the ``true'' multi-dimensional generalisations of triangles, we refer the reader to Edmonds et al.~\cite{edmonds2005orthocentric}, Fiedler~\cite{fiedler2011matrices}, and Hajja and Martini~\cite{hajja2013orthocentric}.
In this paper we show that a concept originating in 18th- and 19th-century classical geometry, studied for generations primarily for its intrinsic elegance, finds an unexpected application in the modern foundations of quantum theory.

From now on, we assume that $\mathcal{H}$ is a real inner product space of dimension $d \geq 2$.
A \textit{(non-degenerate) $d$-simplex} $\Sigma$ in $\mathcal{H}$ is the convex hull of $d+1$ affinely independent points $R_{1},\dots,R_{d+1} \in \mathcal{H}$, i.e., $\Sigma = \conv (\{R_{1},\dots,R_{d+1}\})$. 
The points $R_{1},\dots,R_{d+1}$ are called the \textit{vertices} of the simplex $\Sigma$ and the line segments connecting distinct vertices are called \textit{edges}. 
The \textit{centroid} of $\Sigma$ is the average of its vertices, $(R_1 + \dots + R_{d+1})/(d+1)$. 
A simplex is called \textit{regular} if and only if all its edges have the same length; then, any permutation of its vertices gives an isometry of the simplex \cite[Proposition 9.7.1]{berger1994geometry}. 
The \textit{facets} of $\Sigma$ are \mbox{$(d-1)$-simplices} whose vertices are arbitrary $d$ vertices of $\Sigma$. 
The \textit{altitudes} of the simplex $\Sigma$ are line segments from a vertex $R$ that are perpendicular to the affine span of the facet $F$ determined by the remaining vertices of $\Sigma$, i.e., the facet opposite to $R$. For a given vertex $R$ and the facet $F$ opposite to it, the altitude from $R$ to $F$ is the shortest line segment from the vertex $R$ to the affine span of $F$. The point of intersection between this altitude and $\aff F$ is the orthogonal projection of the vertex $R$ onto the facet $F$.

In the case of triangles, which are $2$-simplices, the point in which the altitudes intersect is called the orthocentre of the triangle. 
The fact that for $d \ge 3$ the altitudes of a $d$-simplex need not be concurrent distinguishes a special class of simplices, which can be considered as the generalisations of triangles to higher dimensions.
The simplex is called \textit{orthocentric} if its altitudes have a common point, which is then called the \textit{orthocentre} of the simplex.
An orthocentric simplex is called \textit{rectangular} if its orthocentre coincides with one of the vertices and \textit{non-rectangular} (or \textit{oblique}~\cite{kabluchko2026angles}) otherwise. A triangle is rectangular if and only if it is right.

Let $\Sigma = \conv (\{R_{1},\dots,R_{d+1}\}) \subset \mathcal{H}$ be a $d$-simplex in a $d$-dimensional real inner product space $\mathcal{H}$, i.e., the vertices $R_{1},\dots,R_{d+1} \in \mathcal{H}$ are affinely independent. 
The following theorem characterises orthocentric simplices.
\begin{theorem}[{{\cite[Theorem 3.1]{edmonds2005orthocentric}}}] \label{th:orthocentric}
    Let $\Sigma = \conv (\{R_{1},\dots,R_{d+1}\}) \subset \mathcal{H}$ be a $d$-simplex, where $d \coloneqq \dim \mathcal{H} \ge 2$.
    \begin{enumerate}[(a)]
    \item $\Sigma$ is orthocentric if and only if for every $k = 1,\dots,d+1$ the quantity $\langle R_{i}-R_{k}, R_{j}-R_{k} \rangle$ does not depend on $i,j = 1,\dots,d+1$ as long as $i,j,k$ are pairwise distinct. 
    \item If $G \in \aff \Sigma$, then $\Sigma$ is orthocentric with orthocentre $G$ if and only if $\langle R_{j} - G, R_{k} - G \rangle$ does not depend on $j,k = 1,\dots,d+1$ as long as $j \ne k$. In such a case, this quantity equals zero if and only if $\Sigma$ is rectangular.
    \end{enumerate}
\end{theorem}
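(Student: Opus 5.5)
The plan is to derive both parts from one identity. For vectors $a,b,g$ in an inner product space,
\begin{equation*}
\langle a-g,b-g\rangle=\langle a,b\rangle-\langle a,g\rangle-\langle b,g\rangle+\lVert g\rVert^2 .
\end{equation*}
Two further facts will be used throughout. The altitude from $R_k$ is the line through $R_k$ in the direction orthogonal to $\lin\{R_i-R_j: i,j\neq k\}$. A point $G\in\aff\Sigma=\mathcal H$ lies on it if and only if $\langle G-R_k, R_i-R_j\rangle=0$ for all $i,j\neq k$, since the facet directions span a hyperplane whose orthogonal complement is one-dimensional.

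For part (b) I will translate so that $G=0$ and write $R_j$ for $R_j-G$. The condition that $0$ lies on the altitude from $R_k$ becomes $\langle R_k,R_i\rangle=\langle R_k,R_j\rangle$ for all $i,j\neq k$. So $\langle R_k,R_i\rangle=:\lambda_k$ is independent of $i\neq k$. Imposing this for every $k$ and using symmetry, $\lambda_k=\langle R_k,R_i\rangle=\langle R_i,R_k\rangle=\lambda_i$. Hence all off-diagonal inner products share a common value $\lambda$. Conversely, if all off-diagonal inner products equal $\lambda$, then $\langle R_k,R_i-R_j\rangle=0$ for $i,j\neq k$, so $0$ lies on every altitude. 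Thus the simplex is orthocentric with orthocentre $0$, the orthocentre being unique because two distinct altitudes meet in at most one point.

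For the rectangular clause, suppose $\lambda=0$. Take any vertex $R_k$. The vectors $R_i$ with $i\neq k$ are pairwise orthogonal, and they are nonzero because at most one vertex equals $G$. Therefore $d$ of them cannot all be nonzero together with a further vector orthogonal to all of them unless one vanishes. If no $R_i$ vanished, we would have $d+1$ nonzero pairwise orthogonal vectors in dimension $d$, which is impossible. So some $R_i=0$, meaning $G$ is a vertex. Conversely, if $G=R_{i_0}$, then $\lambda=\langle R_{i_0},R_j\rangle=0$.

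For part (a), the forward direction applies (b) at the orthocentre $G$. The identity gives
\begin{equation*}
\langle R_i-R_k,R_j-R_k\rangle=\lambda-2\lambda+\lVert R_k-G\rVert^2=\lVert R_k-G\rVert^2-\lambda,
\end{equation*}
which is independent of $i,j$. For the converse, assume $\langle R_i-R_k,R_j-R_k\rangle=\mu_k$ for distinct $i,j,k$. Then for distinct $i,j,l\neq k$,
\begin{equation*}
\langle R_k-R_l,R_i-R_j\rangle=\langle R_i-R_l,R_k-R_l\rangle-\langle R_j-R_l,R_k-R_l\rangle=\mu_l-\mu_l=0 ,
\end{equation*}
where the first equality expands $(R_k-R_l)$ against $(R_i-R_l)-(R_j-R_l)$. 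This says the edge $R_kR_l$ is orthogonal to the opposite edges, which is the classical characterisation of orthocentricity.

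To produce $G$ explicitly I will construct it rather than cite. I define $G$ by the linear system $\langle G-R_k,R_i-R_j\rangle=0$ for all $k$ and all $i,j\neq k$. First I solve for $G$ on two altitudes, say those from $R_1$ and $R_2$. Then I check, using the edge orthogonality $\langle R_k-R_l,R_i-R_j\rangle=0$, that $\langle G-R_k,R_i-R_j\rangle$ is independent of the choice of $k$ among indices outside $\{i,j\}$. Pairs containing $1$ or $2$ are then handled by transferring through such a $k$; this needs $d\ge3$, while $d=2$ is the classical triangle case.

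I expect this step, showing that two altitudes meet at all, to be the main obstacle. I would handle it by taking $G$ as the point satisfying $\langle G, R_i-R_j\rangle=c_{ij}$ with suitable constants, solving for $G$ in the basis $R_i-R_{d+1}$ via the Gram matrix, and verifying the remaining equations from the $\mu_k$ relations.
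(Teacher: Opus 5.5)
The paper does not prove this statement; it quotes it from Edmonds, Hajja and Martini. I am therefore judging your argument on its own terms.

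\textbf{What is correct.} Part (b) is correct. In the rectangular clause the middle sentence is muddled and can be deleted; the final count of $d+1$ nonzero pairwise orthogonal vectors in $\mathcal H$ is the whole argument. Your uniqueness remark needs one line: the altitudes from $R_k\neq R_l$ are different lines, since otherwise $R_k-R_l$ would be orthogonal to both $R_i-R_l$ and $R_i-R_k$, hence to itself. The forward half of (a) is also correct, and so is your derivation of $\langle R_k-R_l,R_i-R_j\rangle=0$ for pairwise distinct $i,j,k,l$.

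\textbf{The gap.} It is the step you flag yourself: you never show that the altitudes actually meet, and this is the entire content of the converse of (a). Appealing to the ``classical characterisation'' assumes essentially what is to be proved. The plan ``first solve for $G$ on two altitudes'' is circular, because for $d\ge 3$ two lines in $\mathcal H$ need not intersect. Your transfer argument only helps once such a $G$ is known to exist. The fallback with ``suitable constants'' is the right idea, but it omits both the choice of constants and the reason the overdetermined system is consistent.

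\textbf{How to close it.} For $i\neq j$ put $c_{ij}=\langle R_k,R_i-R_j\rangle$ with any $k\notin\{i,j\}$. This is independent of $k$ precisely by your edge orthogonality, and trivially so for $d=2$, where $k$ is unique. Also $c_{ji}=-c_{ij}$. For pairwise distinct $i,j,l$, evaluate the three constants at $k=l$, $k=i$, $k=j$ respectively; symmetry of the inner product then gives
\[
c_{ij}+c_{jl}+c_{li}=\langle R_l,R_i-R_j\rangle+\langle R_i,R_j-R_l\rangle+\langle R_j,R_l-R_i\rangle=0 .
\]
Now let $G$ be the unique vector with $\langle G,R_i-R_{d+1}\rangle=c_{i,d+1}$ for $i=1,\dots,d$; it exists because the vectors $R_i-R_{d+1}$ form a basis. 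The identity above with $l=d+1$ gives $\langle G,R_i-R_j\rangle=c_{i,d+1}-c_{j,d+1}=c_{ij}$ for all $i\neq j$. Hence $\langle G-R_k,R_i-R_j\rangle=0$ whenever $k\notin\{i,j\}$, i.e.\ $G$ lies on every altitude. This completes the converse of (a). Because the consistency identity involves only three indices, the same argument covers $d=2$, so you do not need a separate appeal to triangle geometry.
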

In particular, if $\Sigma = \conv (\{R_{1},\dots,R_{d+1}\})$ is an orthocentric $d$-simplex with orthocentre $G$, the quantity 
\begin{equation} \label{eq:sigma}
    \sigma \coloneqq -\langle R_{j} - G, R_{k} - G \rangle, \; j \ne k
\end{equation}
is well-defined and $\sigma = 0$ if and only if $\Sigma$ is rectangular. 
If $\Sigma$ is non-rectangular, the barycentric coordinates of the orthocentre $G$ with respect to $\Sigma$, i.e., the unique numbers $p_{1},\dots,p_{d+1} \in \mathbb{R}$ (not necessarily positive) such that
\begin{equation*}
    \sum\limits_{j=1}^{d+1}p_{j}R_{j} = G, \quad \sum\limits_{j=1}^{d+1}p_{j} = 1, 
\end{equation*}
satisfy $p_{j} \notin \{0,1\}$ and we have \cite[Theorem 3.3]{edmonds2005orthocentric}: 
\begin{align} 
    \lVert R_{j} - G \rVert^{2} &= \sigma \frac{1-p_{j}}{p_{j}} \label{eq:vertexnorm} \\
    \lVert R_{j} - R_{i} \rVert^{2} &= \sigma \left( \frac{1}{p_{j}} + \frac{1}{p_{i}} \right) \label{eq:edgenorm}
\end{align}
where $i,j = 1,\dots,d+1$ and $j \neq i$.
Thus, the barycentric coordinates $p_{1},\dots,p_{d+1}$ and the quantity $\sigma \neq 0$ determine the distances $\lVert R_{j} - G \rVert$ and the angles between the vectors $R_{j}-G$ for $j = 1,\dots,d+1$, for a non-rectangular orthocentric simplex with the orthocentre $G$. 

Let us note that in the two-dimensional case, the criterion in Theorem~\ref{th:orthocentric}(a) is always satisfied, as there are only three possible values for each of the indices $i,j,k$.
Furthermore, it follows from Theorem~\ref{th:orthocentric}(a) that every regular simplex is orthocentric.
Indeed, let us assume that $\Sigma = \conv (\{R_{1},\dots,R_{d+1}\})$ is a regular simplex. 
For any triple of pairwise distinct indices $i,j,k \in \{1,\dots,d+1\}$, it follows from the regularity of the simplex $\Sigma$ that there exists an isometry $\iota$ such that $\iota(R_1) = R_i$, $\iota(R_2) = R_j$, $\iota(R_3) = R_k$. 
Therefore, $\langle R_i - R_k, R_j - R_k \rangle = \langle \iota(R_1 - R_3), \iota(R_2 - R_3) \rangle = \langle R_1 - R_3, R_2 - R_3 \rangle$ is constant as long as $i,j,k$ are pairwise distinct.

By eq.~\eqref{eq:edgenorm}, a non-rectangular orthocentric simplex $\Sigma$ is regular if and only if the barycentric coordinates are equal, $p_{j} = 1/(d+1)$. Moreover, by eq.~\eqref{eq:vertexnorm}, this condition is equivalent to $\lVert R_{j}-G \rVert=\text{const}$. Rectangular orthocentric simplices cannot be regular, as the edges that contain the orthocentre are shorter than those which do not.

The signs of the barycentric coordinates of a non-rectangular orthocentric simplex are related to the sign of $\sigma$ and to the vertex angles of $\Sigma$.
\begin{definition}
    Let $R_{k}$ be one of the vertices of a simplex $\Sigma = \conv (\{R_{1},\dots,R_{d+1}\})$. The \textit{vertex angle} at $R_{k}$ is the polyhedral angle with the vertex at $R_{k}$ and the arms being the edges $R_{k}R_{j}$ of the simplex $\Sigma$, where $j \neq k$. This vertex angle is called \textit{strongly acute} (respectively, \textit{right}, \textit{strongly obtuse}) if and only if all angles $\angle(R_{i}R_{k}R_{j})$ are acute (respectively: right, obtuse) for $i,j \neq k$, $i \neq j$. 
\end{definition}
In general, these three classes of vertex angles do not exhaust all possibilities  --  consider a tetrahedron (3-simplex) built from a right triangle by adding a fourth vertex lying above its relative interior. Then, there are two acute angles and one right angle at one of the vertices of the initial triangle (the one at which the angle is right).
However, the vertex angles of a non-rectangular orthocentric simplex are either strongly acute or strongly obtuse. 
\begin{proposition}[{{\cite[Theorem 3.8]{edmonds2005orthocentric}}}] \label{pr:barsigns}
    The numbers $p_{1},\dots,p_{d+1} \in \mathbb{R} \setminus \{0\}$ with $\sum_{j=1}^{d+1}p_{j}=1$ occur as the barycentric coordinates of the orthocentre of a non-rectangular orthocentric simplex if and only if they are all positive, or if exactly one of these numbers is positive and the others negative. The first case corresponds to $\sigma > 0$ and all the vertex angles of $\Sigma$ being strongly acute, while the second case is equivalent to $\sigma < 0$ and one of the vertex angles being strongly obtuse and the others being strongly acute.
\end{proposition}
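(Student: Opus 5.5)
We work with the orthocentre moved to the origin: set $v_j \coloneqq R_j - G$. By Theorem~\ref{th:orthocentric}(b) and eq.~\eqref{eq:sigma}, $\langle v_j, v_k\rangle = -\sigma$ for $j\neq k$, and by eq.~\eqref{eq:vertexnorm}, $\lVert v_j\rVert^2 = \sigma(1-p_j)/p_j$. Since $G$ has barycentric coordinates $p$, we also have $\sum_j p_j v_j = 0$ with $\sum_j p_j = 1$. The sign pattern of the $p_j$ and the vertex-angle type are both read off from inner products of edge vectors. The angle at $R_k$ between the edges towards $R_i$ and $R_j$ has cosine of the sign of
\begin{equation*}
\langle v_i - v_k, v_j - v_k\rangle = -\sigma + \sigma + \sigma + \lVert v_k\rVert^2 = \sigma + \sigma\frac{1-p_k}{p_k} = \frac{\sigma}{p_k}.
\end{equation*}
This formula is the key identity. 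It shows at once that every vertex angle is strongly acute or strongly obtuse according to the sign of $\sigma/p_k$, because the value does not depend on $i,j$.

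\emph{Necessity.} Given a non-rectangular orthocentric simplex ($\sigma\neq0$), the $p_j$ are nonzero as recalled before the statement. Suppose $\sigma>0$. From $\lVert v_j\rVert^2 = \sigma(1-p_j)/p_j > 0$ (strict, since $v_j=0$ would make the simplex rectangular), we get $(1-p_j)/p_j>0$, i.e.\ $0<p_j<1$. So all $p_j$ are positive, and by the identity all angles are strongly acute. Suppose instead $\sigma<0$. Then $(1-p_j)/p_j<0$, i.e.\ $p_j<0$ or $p_j>1$. Since $\sum_j p_j = 1$, at least one $p_j$ is positive, hence $>1$. If two of them exceeded $1$, the sum of the remaining ones would have to be $<-1$, so that is not yet excluded and needs a separate argument. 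For this we use the Gram matrix $M = \sigma(\mathrm{diag}(1/p_j) - J)$ of $v_1,\dots,v_{d+1}$, where $J$ is the all-ones matrix. Its diagonal is $\sigma(1/p_j - 1)$ and its off-diagonal entries are $-\sigma$, so it matches the data above. This $M$ must be positive semidefinite of rank $d$. For $\sigma<0$ this means $\mathrm{diag}(1/p_j) - J \preceq 0$. Restricting to the $2\times2$ principal minor on two indices with $p_i,p_j>1$ gives the determinant $(1/p_i-1)(1/p_j-1)-1$. Both factors lie in $(-1,0)$, so their product lies in $(0,1)$ and the determinant is negative. A $2\times2$ negative semidefinite matrix must have nonnegative determinant, so this is a contradiction. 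Hence exactly one $p_j$ is positive. By the identity, $\sigma/p_k<0$ at that vertex, so its angle is strongly obtuse, while $\sigma/p_k>0$ at all other vertices, so they are strongly acute.

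\emph{Sufficiency.} Given such $p$, choose $\sigma$ of the matching sign and consider $M=\sigma(D-J)$ with $D=\mathrm{diag}(1/p_j)$. We need to show $M\succeq0$ with kernel spanned by $p$. We have $Mp = \sigma(\mathbf 1 - \mathbf 1\sum_j p_j)=0$. For the semidefiniteness, compute the determinant of $D-tJ$ by the matrix determinant lemma:
\begin{equation*}
\det(D-tJ) = \det D\,(1-t\textstyle\sum_j p_j) = \det D\,(1-t).
\end{equation*}
This is nonzero for $t\in[0,1)$, so along the path $t\mapsto D-tJ$ the inertia is constant on $[0,1)$ and equals that of $D$. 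At $t=1$ exactly one eigenvalue becomes zero. Take the case where all $p_j>0$: then $D\succ0$, so $D-J\succeq0$ of rank $d$. In the case with one positive and $d$ negative $p_j$: $D$ has inertia $(1,d)$, and the continuity argument needs to show it is the positive eigenvalue that vanishes. We will verify this via the quadratic form $x^\top(D-J)x$ on $\mathbf 1^\perp$, or by a Cauchy interlacing check. Then $-(D-J)\succeq0$ of rank $d$, so $M\succeq0$ with $\sigma<0$. Any such $M$ is the Gram matrix of vectors $v_j$ in $\mathcal H$ with $\sum_j p_jv_j=0$. These vectors are affinely independent, because the only affine dependency coefficients lie in the kernel $\lin\{p\}$, and $\sum_j p_j\neq0$. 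By Theorem~\ref{th:orthocentric}(b) they form an orthocentric simplex with orthocentre $0$, with the prescribed $\sigma\neq0$ and barycentric coordinates $p$.

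The main obstacle is the inertia bookkeeping in the mixed-sign case. This means the $2\times2$ minor argument for necessity and tracking which eigenvalue crosses zero for sufficiency. It is the step I would check most carefully.
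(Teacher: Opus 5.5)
The paper gives no proof of this proposition; it imports it from \cite[Theorem 3.8]{edmonds2005orthocentric}, so your argument has to stand on its own.

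Most of it does. The identity $\langle v_i-v_k,v_j-v_k\rangle=\sigma/p_k$ is correct and settles all the vertex-angle claims at once. The necessity direction is complete: the sign of $(1-p_j)/p_j$ comes from \eqref{eq:vertexnorm}, and the $2\times 2$ principal minor of $D-J$ excludes two coordinates exceeding $1$. The realisation step in the sufficiency direction is also sound: a Gram matrix of rank $d$ with kernel $\lin\{p\}$ gives $\sum_j p_jv_j=0$ and affine independence, and then Theorem~\ref{th:orthocentric}(b) applies. The converse readings of ``corresponds to'' and ``is equivalent to'' follow because $\sigma>0$ and $\sigma<0$ are exhaustive and lead to mutually exclusive sign patterns.

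The gap is the step you flag but do not carry out. In the mixed-sign case you show only that $D-J$ is singular of rank $d$, with inertia obtained from $(1,d)$ by losing one eigenvalue, and you leave open which one. That is exactly the content of sufficiency for obtuse simplices. If one of the negative eigenvalues were the one to vanish, $D-J$ would have inertia $(1,d-1,1)$, so $\sigma(D-J)$ would be indefinite for both signs of $\sigma$ and the pattern would not be realisable. As written, the ``if'' direction is therefore proved only for all-positive $p$.

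It closes in one line, in either of two ways. First, $t\mapsto D-tJ$ is decreasing in the Loewner order. By Weyl's monotonicity the $d$ negative eigenvalues of $D$ stay negative for all $t\ge 0$, so the one vanishing at $t=1$ must be the positive one. Second, let $k$ be the positive index and restrict the quadratic form to the $d$-dimensional subspace $\{x: x_k=0\}$. There $x^\top(D-J)x=\sum_{j\neq k}x_j^2/p_j-\bigl(\sum_{j\neq k}x_j\bigr)^2<0$ for $x\neq 0$, since $p_j<0$ for $j\neq k$. Hence $D-J$ has at least $d$ negative eigenvalues, and together with $(D-J)p=0$ this gives $D-J\preceq 0$ with kernel $\lin\{p\}$.

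The all-positive case can be done directly as well. The Cauchy--Schwarz inequality $(\sum_j x_j)^2\le \sum_j p_j\cdot\sum_j x_j^2/p_j$, with equality iff $x\propto p$, gives $D-J\succeq 0$ with kernel $\lin\{p\}$. So the determinant-lemma and continuity detour is not needed at all.
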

The fact that all of the barycentric coordinates are positive is equivalent to $G \in \interior \Sigma$, which means that the orthocentre lies in the interior of the simplex. In the second case, the orthocentre lies outside the simplex. In particular, if an orthocentric simplex $\Sigma$ is non-rectangular, the orthocentre cannot lie in the boundary of $\Sigma$. 
Due to the relation between the sign of $\sigma$ and the vertex angles of $\Sigma$, the quantity $-\sigma$ was called the \textit{obtuseness} of $\Sigma$ in~\cite{edmonds2005orthocentric}. However, as the authors of~\cite{edmonds2005orthocentric} noticed, the obtuseness defined in this way is scale-dependent, as multiplying each vector $R_{j}$ by some positive factor $\lambda$ changes $\sigma$ by a factor of $\lambda^{2}$, though its sign is clearly invariant under scaling.  
Proposition~\ref{pr:barsigns} suggests the following definitions.
\begin{definition}
    A non-rectangular orthocentric simplex $\Sigma$ is called 
    \begin{itemize}
        \item \textit{acute}, if all of its vertex angles are strongly acute;
        \item \textit{obtuse}, if one of its vertex angles is strongly obtuse and the others are strongly acute.
    \end{itemize}
\end{definition}
Accordingly, orthocentric simplices can be divided into three pairwise disjoint classes, some of whose properties are summarised in Table~\ref{tab:orttable}. 
In particular, every regular simplex has to be acute orthocentric due to the property $p_{j}=\text{const}$. 
In what follows, we will mainly consider acute orthocentric simplices. 
\begin{table}[H]
\centering
\begin{tabular}{lccc}
    \toprule
    Class & Acute & Rectangular & Obtuse \\
    \midrule
    Sign of $\sigma$ & $\sigma > 0$ & $\sigma = 0$ & $\sigma < 0$ \\
    \addlinespace
    Orthocentre $G$ & interior point & vertex & exterior point \\
    \addlinespace
    Barycentric & \multirow{2}{*}{all positive} & \multirow{2}{*}{all but one equal $0$} & one negative, \\
    coordinates $p$ & & & the others positive \\
    \addlinespace
    \multirow{2}{*}{Vertex angles} & \multirow{2}{*}{strongly acute} & one right, & one strongly obtuse, \\
    & & the others strongly acute & the others strongly acute \\
    \bottomrule
\end{tabular}
\caption{Three classes of orthocentric simplices and their properties.}
\label{tab:orttable}
\end{table}

\section{Main lemma} \label{sec:mainlemma}

The following notation will be used throughout this paper. 
Sequences of vectors in $\mathcal{H}$ are denoted by uppercase Greek letters (e.g., $\Phi, \Psi$), while their individual elements are represented by corresponding lowercase letters (e.g., $\phi_j, \psi_j$). Similarly, scalar sequences $s, c, p$ consist of elements $s_j, c_j, p_j$. The notation $s\Phi$ refers to the sequence of vectors formed by the element-wise multiplication $s_j\phi_j$.

\begin{lemma} \label{le:mainlemma}
    Let $\Phi = (\phi_{j})_{j=1}^{d+1} \subset \mathcal{H}$, $p \in \Delta^{\circ}_{d+1}$, and let the \textit{closure condition} $\sum\limits_{j=1}^{d+1} p_{j}\phi_{j} = 0$ hold, i.e., let $p$ be the set of barycentric coordinates of $0$ with respect to $\conv \Phi$. 
    The following statements are equivalent:
    \begin{enumerate}[(a)]
        \item $\sqrt{p}\Phi$ is a tight frame; 
        \item $\conv \Phi$ is an acute orthocentric $d$-simplex with the orthocentre at $0$;
         \item $\conv \Phi$ is \textit{homothetically self-dual}, i.e., homothetic to its dual set.
    \end{enumerate}
    Moreover, if these conditions are satisfied, the Gram matrix of $\Phi$ is given by
    \begin{equation} \label{eq:phigram}
        \langle \phi_{j}, \phi_{k} \rangle = A \left( \frac{\delta_{jk}}{\sqrt{p_{j}p_{k}}} - 1 \right),
    \end{equation}
    for $j,k = 1,\dots,d+1$, where $\delta_{jk}$ is the Kronecker delta symbol, and $A$ represents the frame bound in (a), the negated obtuseness in (b), and the homothety ratio in (c).
\end{lemma}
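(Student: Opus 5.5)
The plan is to reduce all three conditions to the single Gram-matrix identity \eqref{eq:phigram} with $A>0$, and to show that each of (a), (b), (c) is equivalent to it.

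\textbf{(a)$\iff$\eqref{eq:phigram}.} Consider the $(d+1)\times(d+1)$ Gram matrix $M$ of $\sqrt{p}\Phi$, with entries $M_{jk}=\sqrt{p_jp_k}\langle\phi_j,\phi_k\rangle$. The closure condition reads $\sum_j\sqrt{p_j}\,(\sqrt{p_j}\phi_j)=0$, so the unit vector $u=(\sqrt{p_j})_j$ lies in $\ker M$.

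A family of $n$ vectors is a tight frame with bound $A$ for the $d$-dimensional space $\mathcal H$ if and only if its Gram matrix equals $A$ times an orthogonal projection of rank $d$. This is a standard restatement of \eqref{eq:tightframe2}: the synthesis–analysis operator $\Theta^*\Theta$ is $A\,\mathrm{Id}$ iff $\Theta\Theta^*$ is $A$ times a rank-$d$ projection.

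Suppose (a) holds. The frame spans $\mathcal H$, so $M$ has rank $d$. Since $u\in\ker M$, the range of $M$ is $u^\perp$, and hence $M=A(I-uu^{T})$. Dividing entrywise by $\sqrt{p_jp_k}$ gives exactly \eqref{eq:phigram}, with $A$ the frame bound. Conversely, \eqref{eq:phigram} with $A>0$ gives $M=A(I-uu^T)$, which is $A$ times a projection of rank $d$. Hence $\sqrt p\Phi$ is a tight frame.

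\textbf{(b)$\iff$\eqref{eq:phigram}.} Suppose (b) holds. Theorem~\ref{th:orthocentric}(b) with $G=0$ says that $\langle\phi_j,\phi_k\rangle=-\sigma$ for all $j\neq k$. Acuteness means $\sigma>0$, by Proposition~\ref{pr:barsigns}. The diagonal then follows from the closure condition:
\[
p_j\lVert\phi_j\rVert^2=-\sum_{k\ne j}p_k\langle\phi_j,\phi_k\rangle=\sigma(1-p_j).
\]
This reproduces \eqref{eq:vertexnorm} and gives \eqref{eq:phigram} with $A=\sigma$.

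Conversely, assume \eqref{eq:phigram} with $A>0$.
\begin{itemize}
\item By the previous step, $\Phi$ spans $\mathcal H$. Together with a closure relation whose coefficients are all positive, this forces $d$ of the vectors to be linearly independent and the $d+1$ points to be affinely independent. So $\conv\Phi$ is a genuine $d$-simplex, and $0\in\interior\conv\Phi$.
\item The off-diagonal entries are the constant $-A$. Theorem~\ref{th:orthocentric}(b) therefore makes $0$ the orthocentre, with $\sigma=A>0$.
\item By Proposition~\ref{pr:barsigns} and Table~\ref{tab:orttable}, $\sigma>0$ means the simplex is acute.
\end{itemize}

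\textbf{(c).} The facets of $(\conv\Phi)^\star=\{x:\langle x,\phi_j\rangle\ge -1,\ j=1,\dots,d+1\}$ are the hyperplanes $\langle x,\phi_j\rangle=-1$. Its vertex $v_k$ is the solution of $\langle v_k,\phi_j\rangle=-1$ for all $j\neq k$.

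Assume \eqref{eq:phigram}. The guess $v_k=\phi_k/A$ works, because $\langle\phi_k,\phi_j\rangle=-A$ for $j\ne k$. Hence $(\conv\Phi)^\star=A^{-1}\conv\Phi$, a homothety centred at $0$. Its ratio is $A$ or $A^{-1}$, and its sign is fixed by the orientation convention used for ``homothetic'' in the statement.

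For the converse, suppose $(\conv\Phi)^\star=\lambda\conv\Phi+t$. The homothety carries facets to facets, so there is a permutation $\pi$ with the following property: the facet of $\conv\Phi$ opposite $\phi_j$ is parallel to the hyperplane $\{\langle x,\phi_{\pi(j)}\rangle=-1\}$. Then $\phi_{\pi(j)}\perp\phi_i-\phi_l$ for all $i,l\ne j$, i.e. $\langle\phi_{\pi(j)},\phi_i\rangle$ is constant over $i\ne j$.

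\textbf{Main obstacle.} I expect the hardest part to be this converse direction of (c). Specifically, one must show that the permutation is the identity and that the homothety centre $t$ is $0$. Only then does the conclusion above become ``$\langle\phi_j,\phi_i\rangle$ is constant over $i\neq j$, for every $j$'', which feeds into Theorem~\ref{th:orthocentric}(b).

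My first attempt would use the vertex–facet incidence. The homothety sends the vertex $\phi_j$ to the vertex $v_{\pi'(j)}$ of the dual simplex, and in both simplices a vertex is the point opposite a facet. Matching the outward normals then gives $\pi'=\pi$. Polarity then swaps the roles of the vertices $\phi_k$ and the facet normals of $(\conv\Phi)^\star$. Applying the same reasoning to $(\conv\Phi)^{\star\star}=\conv\Phi$ should force $\pi=\pi^{-1}$ and make the constant column sums consistent only when $\pi=\mathrm{id}$. With $\pi=\mathrm{id}$, the centre $t$ is recovered from the facet equations: these give $t=0$, and $\lambda$ then equals the constant off-diagonal Gram entry.

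If this combinatorial argument gets stuck, the fallback is to restrict to homotheties centred at $0$, as in the abstract's phrase ``identical to its own polar dual up to a \ldots\ homothety''. With centre $0$, the converse is immediate: $\lambda\phi_k$ must be a vertex of the dual, so $\langle\phi_k,\phi_j\rangle=-1/\lambda$ for all $j\ne k$.
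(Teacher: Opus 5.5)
Your argument for (a)$\iff$\eqref{eq:phigram}$\iff$(b) is correct and differs from the paper's. The paper proves (a)$\Rightarrow$(b) in three steps:
\begin{itemize}
\item it excludes $\phi_j=0$;
\item it expands $f$ in the basis $(\phi_j)_{j\le d}$ using the frame identity;
\item it reads off $\langle\phi_k,\phi_j\rangle=\langle\phi_k,\phi_{d+1}\rangle$ from uniqueness of coefficients.
\end{itemize}
It proves (b)$\Rightarrow$(a) by inserting the norm formula \eqref{eq:vertexnorm} into the variational characterisation (Theorem~\ref{th:variationalcharacterisation}). Only at the end does it obtain \eqref{eq:phigram} and $\sigma=A$, by comparing two expressions for $\lVert\phi_k\rVert^2$.

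You instead note that the Gram matrix of $\sqrt p\Phi$ is $A$ times a rank-$d$ projection with $u=(\sqrt{p_j})_j$ in its kernel, hence equals $A(I-uu^{T})$. This gives both implications, the Gram formula, and the identification of the frame bound with $\sigma$ at once. You also derive the diagonal from the closure condition rather than quoting \eqref{eq:vertexnorm}. Your route relies on the standard Gram-matrix characterisation of tight frames, but it is shorter and explains why one constant $A$ plays every role.

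The gap is in (c)$\Rightarrow$(b). Your general-centre argument is only a hope (``should force $\pi=\pi^{-1}$''). Your fallback is not immediate either. From $(\conv\Phi)^\star=\lambda\conv\Phi$ you only learn that $\lambda\phi_k=v_m$ for \emph{some} vertex $v_m$ of the dual. Concluding $\langle\phi_k,\phi_j\rangle=-1/\lambda$ for all $j\neq k$ requires $m=k$, and that identification is exactly the nontrivial step of the paper's proof.

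The paper reads (c) as $K=AK^\star$ with $A>0$, i.e.\ centre $0$ and positive ratio. The ``negative homothety'' in the abstract refers to the usual polar, since $K^\star$ is the negatively polar set. In that setting the step takes one line: if $m\neq k$, then $j=k$ is among the indices with $\langle v_m,\phi_j\rangle=-1$. This gives $\lambda\lVert\phi_k\rVert^2=-1$, which is impossible for $\lambda>0$. In the paper's language, $F_j\subset H_k$ with $k\neq j$ would force $\lVert\phi_k\rVert^2=-A$.

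The same sign observation also removes your ``main obstacle'' for positive homotheties with an arbitrary centre, without any double-duality combinatorics.
\begin{itemize}
\item A positive homothety preserves outward normals, so the facet $F_j$ opposite $\phi_j$ has outward normal along $-\phi_{\pi(j)}$.
\item If $\pi(j)=m\neq j$, then $\phi_m\in F_j$ maximises $\langle\cdot,-\phi_m\rangle$ over $\conv\Phi$. Hence $-\lVert\phi_m\rVert^2=\max_{x\in\conv\Phi}\langle x,-\phi_m\rangle>0$, since $0\in\interior\conv\Phi$. This is a contradiction.
\item So $\pi=\mathrm{id}$, and $\langle\phi_j,\phi_i\rangle$ is independent of $i\neq j$. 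By symmetry of the Gram matrix it is a single constant off the diagonal.
\item Theorem~\ref{th:orthocentric}(b) and the closure condition then give (b) directly, with no need to determine the centre $t$.
\end{itemize}
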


\begin{proof}    
    Let us assume that $\sqrt{p} \Phi$ is a tight frame with the frame bound $A>0$ and the closure condition holds. If $\phi_{j} = 0$ for some $j$, the remaining vectors of $\Phi$ form an orthogonal basis, because a tight frame consisting of $d$ vectors in a $d$-dimensional real inner product space is necessarily an orthogonal basis of vectors of equal length.
    Thus, $\conv \Phi$ is a rectangular orthocentric simplex with the orthocentre at $0$. This contradicts the assumption $0 \in \interior (\conv \Phi)$, which follows from the closure condition, as the orthocentre of a rectangular orthocentric simplex lies on the boundary. 
    Consequently, we can assume that $\phi_{j} \neq 0$ for every $j$. If $(\phi_{j})_{j=1}^{d}$ spanned a subspace of dimension less than $d$, the closure condition $\sum_{j=1}^{d+1} p_{j}\phi_{j} = 0$ would imply that $\phi_{d+1}$ also belongs to this subspace, contradicting the fact that $\sqrt{p}\Phi$ is a frame and must span the $d$-dimensional space $\mathcal{H}$. Thus, $(\phi_{j})_{j=1}^{d}$ is a basis and the closure condition gives the unique decomposition of $\phi_{d+1}$ with respect to this basis,
    \[ p_{d+1}\phi_{d+1} = - \sum\limits_{j=1}^{d}p_{j}\phi_{j}. \]
    For any $f \in \mathcal{H}$, we obtain
    \begin{align} 
    \nonumber
    f &= \frac{1}{A} \sum\limits_{j=1}^{d+1} p_{j} \langle f,\phi_{j} \rangle \phi_{j} 
    \\ \nonumber
    &= \frac{1}{A} \sum\limits_{j=1}^{d} (p_{j}\langle f,\phi_{j} \rangle - p_{j} \langle f,\phi_{d+1} \rangle) \phi_{j}
    \\
    &= \frac{1}{A} \sum\limits_{j=1}^{d} \langle f,\phi_{j}-\phi_{d+1} \rangle p_{j}\phi_{j}, \label{eq:scalcalculations}
    \end{align}
    so it follows from the uniqueness of decomposition that for any $k,j \in \{1,\dots,d\}$, $k \neq j$ and $f = \phi_{k}$,   
    \begin{equation*}
        \langle \phi_{k}, \phi_{j} \rangle = \langle \phi_{k}, \phi_{d+1} \rangle.
    \end{equation*}
    Since we can repeat this argumentation for any $\phi_{l}$ instead of $\phi_{d+1}$, using Theorem~\ref{th:orthocentric}(b) we get that the simplex $\conv \Phi$ is orthocentric with the orthocentre at $0$. 
    Moreover, the closure condition together with $p_{j} > 0$ guarantees that the simplex is acute orthocentric.

    Conversely, let us assume that $\conv \Phi$ is an acute orthocentric simplex with the orthocentre at $0$ and that $p_{j}$ are the barycentric coordinates of the orthocentre, so the closure condition holds. 
    The orthocentre $0$ lies in the interior of $\conv \Phi$, hence $p \in \Delta^{\circ}_{d+1}$ and $\sigma \coloneqq -\langle \phi_{j}, \phi_{k} \rangle > 0$ for $j \neq k$. 
    The norms of $\phi_{j}$ are given by eq.~\eqref{eq:vertexnorm} as  
    \begin{equation} \label{eq:pvnorm} 
        \lVert \phi_{j} \rVert^{2} = \sigma \left(\frac{1}{p_{j}} - 1 \right). 
    \end{equation}
    
    We use the variational characterisation (Theorem~\ref{th:variationalcharacterisation}) to prove that $\sqrt{p} \Phi$ is a tight frame:
    \begin{align*}
        v(\sqrt{p}\Phi) &= \sum\limits_{j=1}^{d+1} \left( p_{j}^{2} \lVert \phi_{j} \rVert^{4} + \sum\limits_{k \neq j} p_{j}p_{k}\langle\phi_{j},\phi_{k}\rangle^2 \right) - \frac{1}{d}\left( \sum\limits_{j=1}^{d+1}p_{j} \lVert \phi_{j} \rVert^{2} \right)^{2} \\ &= 
        \sum\limits_{j=1}^{d+1} \left( (p_{j}-1)^{2}\sigma^{2} + \sigma^{2}p_{j}\sum\limits_{k \neq j} p_{k} \right) - \frac{\sigma^{2}}{d}\left( \sum\limits_{j=1}^{d+1}(p_{j}-1) \right)^{2} \\ &= 
        \sigma^{2} \sum\limits_{j=1}^{d+1} ( (p_{j}-1)^{2}+p_{j}(1-p_{j}) ) - \frac{\sigma^{2}}{d} \left( \sum\limits_{j=1}^{d+1}p_{j} - (d+1) \right)^{2} \\ &=
        \sigma^{2} \sum\limits_{j=1}^{d+1} ( (p_{j}-1)(p_{j}-1-p_{j}) ) - \frac{\sigma^{2}}{d} \left( \sum\limits_{j=1}^{d+1}p_{j} - (d+1) \right)^{2} \\ &=
        \sigma^{2} \sum\limits_{j=1}^{d+1}(1-p_{j}) - \frac{\sigma^{2}}{d} (1-(d+1))^{2} \\ &=
        \sigma^{2}(d+1-1) - \frac{\sigma^{2}}{d}d^{2} = 0.
    \end{align*}

    Now, let us assume that the equivalent conditions (a)--(b) are satisfied.
    Taking $f = \phi_{k}$ in eq.~\eqref{eq:scalcalculations} and considering the $j=k$ summand, we obtain from the uniqueness of the decomposition
    \begin{equation} 
        \lVert \phi_{k} \rVert^{2} = \frac{A}{p_{k}} + \langle \phi_{k},\phi_{d+1} \rangle = \frac{A}{p_{k}} - \sigma.
    \end{equation}
    Compared to eq.~\eqref{eq:pvnorm}, this condition leads to $\sigma = A$ and the entries of the Gram matrix of $\Phi$ take the form
    \begin{equation*} 
        \langle \phi_{j}, \phi_{k} \rangle = A \left( \frac{\delta_{jk}}{p_{j}} - 1 \right) = A \left( \frac{\delta_{jk}}{\sqrt{p_{j}p_{k}}} - 1 \right)
    \end{equation*}
    as desired.

  Finally, we prove the equivalence of (b) and (c). Let us denote $K \coloneqq \conv \Phi$ and $F_j \coloneqq \conv (\Phi \setminus \{\phi_j\})$ for any $j \in \{1,\dots,d+1\}$, i.e., $F_j$ is the facet of $K$ opposite to $\phi_j$. The dual of $K$ with respect to the sphere with centre at $0$ of radius $\sqrt{A}$ can be represented as \begin{equation*}
        K_{0,\sqrt{A}}^\star=\bigcap_{j=1}^{d+1}H_j^+,
    \end{equation*} where $H_j^+\coloneqq\{x\in\mathcal H:\langle x,\phi_j\rangle\geq -A\}$ and $H_j \coloneqq \{x\in\mathcal H:\langle x,\phi_j\rangle= -A\}$ is a hyperplane perpendicular to $\phi_j$. Note that $K_{0,\sqrt{A}}^\star = A K^\star$, since $K_{0,r}^\star = r^2 K_{0,1}^\star$ for every $r>0$. Since $K$ is a simplex with $0\in\interior K$, so is $K_{0,\sqrt{A}}^\star$, and thus the facets of $K^\star_{0,\sqrt{A}}$ are included in the hyperplanes $H_{j}$, $j = 1,\dots,d+1$. 
        
    Let us assume that $K$ is an acute orthocentric $d$-simplex with the orthocentre at $0$, then there exists $A>0$ such that $\langle \phi_j, \phi_k \rangle = - A$ for $j \neq k$ and consequently $\phi_k \in H_j$ for $k \neq j$.
    This means that $F_j \subset H_j$, and hence $\aff F_j = H_j$, for every $j =1,\dots,d+1$; therefore $K = K_{0,\sqrt{A}}^\star = AK^\star$, as any simplex is determined by the hyperplanes spanned by its facets.

    Conversely, let us assume that $K = AK^\star = K_{0,\sqrt{A}}^\star$ for some $A>0$ and let us fix $j \in \{1,\dots,d+1\}$.
    The facet $F_j$ of the simplex $K$ is a facet of $K^\star_{0,\sqrt{A}}$ as well, so it must be included in one of the bounding hyperplanes, i.e., $F_j \subset H_k$ for some $k \in \{1,\dots,d+1\}$. If $k \neq j$, then $\phi_k \in F_j \subset H_k$, which would imply $\lVert \phi_k \rVert^2 = \langle \phi_k, \phi_k \rangle = -A < 0$, a contradiction. Therefore, we must have $F_j \subset H_j$.
    Consequently, $\phi_k \in H_j$ and $\langle \phi_k,\phi_j \rangle = -A$ for $k \neq j$.
    Therefore, the simplex $K = \conv \Phi$ is orthocentric with the orthocentre at $0$ according to Theorem~\ref{th:orthocentric}(b) and it is acute orthocentric, because $\sigma = A > 0$ (Proposition~\ref{pr:barsigns}).
\end{proof}

\section{Main results} \label{sec:mainresults}

The lemma proven in the previous section applied to GGPTs allows us to thoroughly investigate the relationship between $s$-tight MIC measurements on the one hand, and acute orthocentric simplices centred at the origin (or homothetically self-dual simplices) on the other hand.

\subsection{Orthocentric simplex from minimal \texorpdfstring{$s$}{s}-tight IC measurement}

The proportionality relation will be denoted by $\sim$ throughout this paper.
 
\begin{theorem} \label{th:equiv}
    Let $(\Psi,c)$ be a minimal IC measurement. The following statements are equivalent:
    \begin{enumerate}[(a)]
        \item $(\Psi,c)$ is $s$-tight;
        \item there exists $p \in \Delta^{\circ}_{d+1}$, such that $\conv (\frac{c}{p}\Psi)$ is an acute orthocentric simplex with the orthocentre at $0$;
        \item there exists $p \in \Delta^{\circ}_{d+1}$, such that $\conv (\frac{c}{p}\Psi)$ is homothetic to its dual set.
    \end{enumerate}
    Moreover, if these conditions are satisfied, then $p$ is uniquely defined by
    \begin{equation} \label{eq:pcs}
        p_{j} = (\frac{c_{j}}{s_{j}})^{2} / \sum\limits_{k=1}^{d+1} (\frac{c_{k}}{s_{k}})^{2},    
    \end{equation}
     where $s = (s_{j})_{j=1}^{d+1}$ is any sequence of the scalability constants for $\Psi$
    and the entries of the Gram matrix of $\Psi$ take the form
    \begin{equation} \label{eq:psigram}
    \langle \psi_j, \psi_k \rangle = A \frac{p_j p_k}{c_j c_k} \left( \frac{\delta_{jk}}{\sqrt{p_{j}p_{k}}} - 1 \right),
    \end{equation}
    where $j,k=1,\dots,d+1$, $\delta_{jk}$ denotes the Kronecker delta symbol, $-A$ is the obtuseness of the acute orthocentric simplex $\conv (\frac{c}{p} \Psi)$ in (b), and $A$ is the scale factor in (c).
\end{theorem}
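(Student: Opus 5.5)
The plan is to reduce everything to Lemma~\ref{le:mainlemma} applied to the rescaled sequence $\Phi \coloneqq \frac{c}{p}\Psi$, i.e.\ $\phi_j = \frac{c_j}{p_j}\psi_j$. The key observation is that for any $p\in\Delta^\circ_{d+1}$ the closure condition for the measurement transfers automatically:
\[
\sum_{j} p_j\phi_j=\sum_j c_j\psi_j=0 .
\]
Thus $p$ is exactly the vector of barycentric coordinates of $0$ with respect to $\conv\Phi$; this uses minimality and IC, since $\Psi$ spans $V_0$ and so $\conv\Phi$ is a nondegenerate simplex. Consequently (b) and (c) are, for each fixed $p$, equivalent to each other and to statement (a) of the Lemma, namely that $\sqrt{p}\,\Phi = \bigl(\frac{c_j}{\sqrt{p_j}}\psi_j\bigr)_j$ is a tight frame. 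So (b)$\Leftrightarrow$(c) is immediate.

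For (b)$\Rightarrow$(a), I will simply note that $\sqrt{p}\Phi$ being tight says precisely that $\Psi$ is scalable with constants $s_j = c_j/\sqrt{p_j}>0$. Hence the measurement is $s$-tight.

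For (a)$\Rightarrow$(b), I take scalability constants $s$ and define $p$ by \eqref{eq:pcs}. Then $c_j/\sqrt{p_j} = \lambda s_j$ with $\lambda = \bigl(\sum_k (c_k/s_k)^2\bigr)^{1/2}>0$, so $\sqrt{p}\Phi=\lambda s\Psi$ is again a tight frame. The Lemma then yields (b).

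Uniqueness of $p$ is the step needing care. If $p$ satisfies (b), then $c/\sqrt{p}$ is a sequence of scalability constants. By Remark~\ref{re:scalability} (valid since $d\ge2$), scalability constants of a MIC are unique up to a common positive factor, so $c_j/\sqrt{p_j}=\kappa s_j$. The normalisation $\sum p_j=1$ then forces \eqref{eq:pcs}, which also shows independence from the choice of $s$.

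Finally, the Gram formula \eqref{eq:psigram} follows by substituting $\psi_j=\frac{p_j}{c_j}\phi_j$ into \eqref{eq:phigram}. The identification of $A$ with the frame bound, the negated obtuseness $\sigma$ and the homothety ratio is inherited directly from the Lemma.
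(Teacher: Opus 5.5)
Your proposal is correct and follows essentially the same route as the paper: you transfer the closure condition to $\Phi=\frac{c}{p}\Psi$, invoke Lemma~\ref{le:mainlemma} with $s=c/\sqrt{p}$ to get (a)$\Leftrightarrow$(b)$\Leftrightarrow$(c), obtain uniqueness of $p$ from Remark~\ref{re:scalability}, and read off \eqref{eq:psigram} from \eqref{eq:phigram}. The only cosmetic difference is in (a)$\Rightarrow$(b), where the paper applies the Lemma to $\frac{s^{2}}{c}\Psi$ and then notes that this is proportional to $\frac{c}{p}\Psi$, whereas you work with $\frac{c}{p}\Psi$ directly.
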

\begin{remark} \label{re:unique_p}
   It follows from eq.~\eqref{eq:psigram} that the probabilities $(p_j)_{j=1}^{d+1}$ satisfy the following conditions:
    \begin{equation*}
        p_j p_k = -\frac{\langle c_j\psi_j,c_k\psi_k\rangle}{A}    
    \end{equation*}
    for $j \neq k$, and
        \begin{equation} \label{eq:probabilities}
        p_j(1-p_j) = \frac{\lVert c_j \psi_j \rVert^2}{A}
    \end{equation}
    for $j=1,\dots,d+1$.
\end{remark}
 \begin{proof}[Proof of the Theorem~\ref{th:equiv}]
    The equivalence of (b) and (c) follows from the equivalence of (b) and (c) in Lemma~\ref{le:mainlemma}.
    
    For the implication (a)$\Rightarrow$(b) let us assume that $(\Psi,c)$ is an $s$-tight IC measurement with scalability constants $s = (s_{j})_{j=1}^{d+1}$, i.e., $s\Psi$ is a tight frame.     
    Let us define $\Phi \coloneqq \frac{s^{2}}{c}\Psi$ and $p_{j} \coloneqq (\frac{c_{j}}{s_{j}})^{2} / \sum\limits_{k=1}^{d+1} (\frac{c_{k}}{s_{k}})^{2}$ for $j=1,\dots,d+1$; then $p \coloneqq (p_{j})_{j=1}^{d+1} \in \Delta^{\circ}_{d+1}$. 
    The frame $\sqrt{p}\Phi \sim s \Psi$ is a tight frame and
    \[ \sum\limits_{j=1}^{d+1} p_{j} \phi_{j} = \frac{\sum\limits_{j=1}^{d+1} c_{j} \psi_{j}}{\sum\limits_{k=1}^{d+1} \left(\frac{c_{k}}{s_{k}}\right)^{2}} = 0. \]
    From implication (a)$\Rightarrow$(b) in Lemma~\ref{le:mainlemma} it follows that $\conv \Phi$ is an acute orthocentric simplex with the orthocentre at $0$.
    Consequently, $\conv (\frac{c}{p} \Psi)$ is an acute orthocentric simplex with the orthocentre at $0$, as $\Phi \sim \frac{c}{p} \Psi$.  
   
    Conversely, to prove (b)$\Rightarrow$(a), let us assume that $\conv (\frac{c}{p} \Psi)$ is an acute orthocentric simplex with the orthocentre at $0$.
    Let us define $\Phi \coloneqq \frac{c}{p} \Psi$, $s \coloneqq \frac{c}{\sqrt{p}}$; then
    \[ \sum\limits_{j=1}^{d+1} p_{j} \phi_{j} = \sum\limits_{j=1}^{d+1} c_{j} \psi_{j} = 0 \]
    and using implication (b)$\Rightarrow$(a) in Lemma~\ref{le:mainlemma}, we obtain that $\sqrt{p}\Phi = s\Psi$ is a tight frame. 
    In order to prove the ``moreover'' part, let us first recall that if $\Psi$ is a scalable frame, then $s$ is uniquely determined by $\Psi$ up to rescaling by a common factor (see Remark~\ref{re:scalability}). In consequence, if $p,p'\in\Delta_{d+1}^\circ$ are such that both $\conv(\frac{c}{p}\Psi)$ and $\conv(\frac{c}{p'}\Psi)$  are acute orthocentric simplices with the orthocentre at 0, the scalability constants for $\Psi$ introduced in the proof above, $s=\frac{c}{\sqrt{p}}$ and $s'=\frac{c}{\sqrt{p'}}$, may differ only by a scalar factor. Since both $p$ and $p'$ have coordinates that sum up to 1, they need to coincide. The final formula \eqref{eq:pcs} follows from the construction of $p$ in the first part of the proof. 
    
    Finally, the entries of the Gram matrix of $\Psi$ are determined by eq.~\eqref{eq:phigram} for $\phi_{j} = \frac{c_j}{p_j}\psi_{j}$.
\end{proof}

\begin{corollary} \label{co:scal_ort}
    If $(\Psi,c)$ is an $s$-tight MIC with scalability constants $s$, then $\conv (\frac{s^{2}}{c}\Psi)$ is an acute orthocentric simplex with the orthocentre at $0$.
\end{corollary}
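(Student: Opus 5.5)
This follows directly from the proof of (a)$\Rightarrow$(b) in Theorem~\ref{th:equiv}, made explicit. Let $s$ be scalability constants for $\Psi$, so that $s\Psi$ is a tight frame, and set
\[
\Phi \coloneqq \frac{s^{2}}{c}\Psi, \qquad p_{j} \coloneqq \Bigl(\frac{c_{j}}{s_{j}}\Bigr)^{2} \Big/ \sum_{k=1}^{d+1} \Bigl(\frac{c_{k}}{s_{k}}\Bigr)^{2}.
\]
Since all $c_j,s_j>0$, we have $p\in\Delta^{\circ}_{d+1}$. We then check the two hypotheses of Lemma~\ref{le:mainlemma} for the pair $(\Phi,p)$.

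\emph{Closure condition.} Write $N\coloneqq\sum_k (c_k/s_k)^2$. Then $p_j\phi_j = \frac{c_j^2}{s_j^2 N}\cdot\frac{s_j^2}{c_j}\psi_j = \frac{c_j\psi_j}{N}$. Summing over $j$ and using the measurement closure condition $\sum_j c_j\psi_j=0$ gives $\sum_j p_j\phi_j=0$.

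\emph{Tight frame.} We have $\sqrt{p_j}\,\phi_j = \frac{c_j}{s_j\sqrt N}\cdot\frac{s_j^2}{c_j}\psi_j = \frac{s_j\psi_j}{\sqrt N}$. Hence $\sqrt p\,\Phi = N^{-1/2}\, s\Psi$ is a positive multiple of a tight frame, and is therefore itself a tight frame.

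Implication (a)$\Rightarrow$(b) of Lemma~\ref{le:mainlemma} now shows that $\conv\Phi=\conv(\frac{s^2}{c}\Psi)$ is an acute orthocentric $d$-simplex with orthocentre at $0$.

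The only point needing care is that $s$ is an \emph{arbitrary} choice of scalability constants. This is harmless. The construction above works for any positive $s$ making $s\Psi$ tight, and in any case Remark~\ref{re:scalability} shows that $s$ is unique up to a positive factor. A common factor only dilates $\Phi$ about $0$, which preserves acuteness, orthocentricity, and the position of the orthocentre at $0$. Equivalently, the construction shows $\Phi\sim\frac{c}{p}\Psi$, so the conclusion also agrees with Theorem~\ref{th:equiv}(b).
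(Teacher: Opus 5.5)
Your proof is correct and is essentially the paper's argument: the paper proves this corollary by pointing to the (a)$\Rightarrow$(b) step of Theorem~\ref{th:equiv}, which defines the same $\Phi=\frac{s^{2}}{c}\Psi$ and $p$ and then applies Lemma~\ref{le:mainlemma}. Your version simply writes out explicitly the closure and tight-frame checks that the paper leaves implicit.
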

\begin{proof}
    This follows by the same argument as in the proof of implication (a)$\Rightarrow$(b) in Theorem~\ref{th:equiv} and from eq.~\eqref{eq:pcs}.
\end{proof}

Figure~\ref{fig:sub1} illustrates this rescaling procedure for triangles. Every triangle is orthocentric, although not necessarily with the orthocentre at $0$. Let us observe that the orthocentre of the initial triangle $\conv \Psi$ does not coincide with the origin. However, the triangle $\conv ( \frac{s^{2}}{c} \Psi )$, obtained by appropriately rescaling the vertices of the triangle $\conv \Psi$, is acute orthocentric with the orthocentre at $0$. The altitudes of the transformed triangle intersect at the origin and the points $\psi_{j}$ lie on them.
Since the scalability constants $s_{j}$ are determined up to multiplication, the whole purple triangle can be rescaled arbitrarily. 
Figure~\ref{fig:sub2} shows the analogous construction for tetrahedra (3-simplices), which in general need not be orthocentric  --  the altitudes of the initial tetrahedron are not concurrent. After rescaling the vertices appropriately, we obtain a tetrahedron which is acute orthocentric with the orthocentre at $0$. Again, this tetrahedron is determined up to the scale.

\begin{figure}[H]
\begin{center}
\begin{subfigure}{.48\textwidth}
  \centering
  \includegraphics[width=1\linewidth]{e_triangles.png}
  \caption{Triangles}
  \label{fig:sub1}
\end{subfigure}
\begin{subfigure}{.48\textwidth}
  \centering
  \includegraphics[width=1\linewidth]{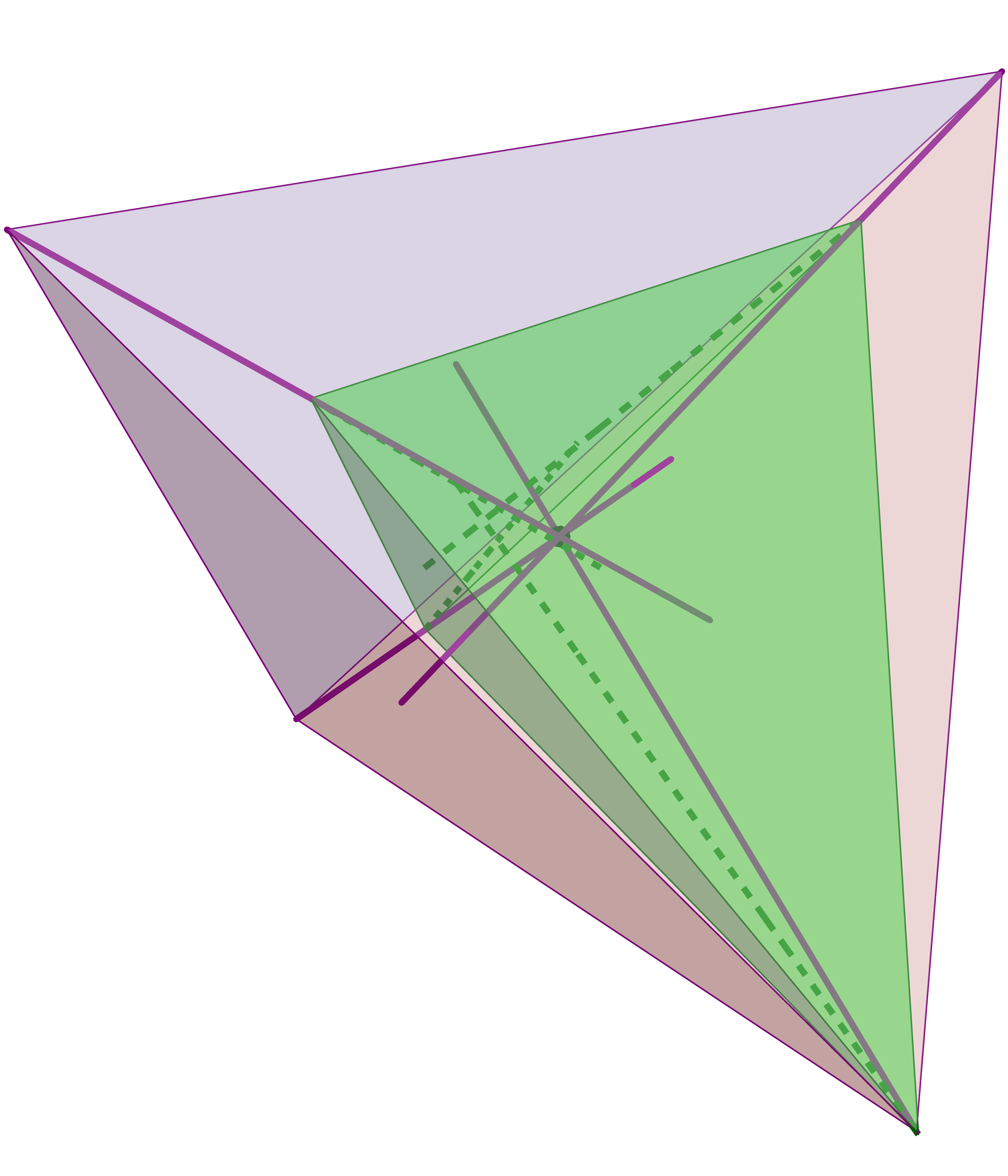}
  \caption{Tetrahedra}
  \label{fig:sub2}
\end{subfigure}
\end{center}
\caption{The vertices of the green triangle $\conv \Psi$, whose orthocentre does not coincide with the origin, can be rescaled to obtain the purple triangle $\conv( \frac{s^{2}}{c}\Psi )$ with the orthocentre at $0$. The altitudes of the initial (green) triangle are represented with dashed green lines, while the altitudes of the transformed (purple) triangle are solid purple lines  --  as can be seen, they intersect at the origin, and the vertices of both triangles lie on them.
The same procedure can be applied to tetrahedra: the altitudes of the initial (green) tetrahedron, represented with dashed green lines, are not concurrent, but the (purple) tetrahedron with rescaled vertices, whose altitudes are represented with solid purple lines, is orthocentric with the orthocentre at $0$. The black point represents the origin in both cases.}
\label{fig:rescaling}
\end{figure}

Theorem~\ref{th:equiv} and eq.~\eqref{eq:psigram} can be applied to known examples of $s$-tight IC measurements such as morphophoric and tight IC measurements, for which the corresponding acute orthocentric simplices take a particularly simple form. 
Namely, for tight IC measurements the scaling is trivial, whereas in the morphophoric case, the rescaling procedure leads to a regular simplex. 
Moreover, in both cases the probabilities $p$ do not appear in the Gram matrix of $\Psi$.

\begin{corollary} \label{co:tightic}
    Let $(\Psi,c)$ be a minimal IC measurement. The following conditions are equivalent:
    \begin{enumerate}[(a)]
        \item $(\Psi,c)$ is a tight IC measurement;
        \item $\conv \Psi$ is an acute orthocentric simplex with the orthocentre at $0$.
    \end{enumerate}
    If these conditions are satisfied, the entries of the Gram matrix of $\Psi$ take the form
    \begin{equation*}
        \langle \psi_j, \psi_k \rangle = A \left( \frac{\delta_{jk}}{\sqrt{c_{j}c_{k}}} - 1 \right).  
    \end{equation*}
\end{corollary}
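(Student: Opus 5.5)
The plan is to apply Lemma~\ref{le:mainlemma} directly with $\Phi \coloneqq \Psi$ and $p \coloneqq c$. For a minimal IC measurement we have $c \in \Delta^{\circ}_{d+1}$, since all $c_j$ are non-zero probabilities summing to one and, being barycentric coordinates of $0 \in \interior(\conv\Psi)$, they are positive. The closure condition $\sum_{j=1}^{d+1} c_j\psi_j = 0$ is exactly the hypothesis $\sum_j p_j\phi_j = 0$ of the lemma, so $c$ is the vector of barycentric coordinates of $0$ with respect to $\conv \Psi$. Condition (a) of the corollary states that $\sqrt{c}\Psi$ is a tight frame, which is condition (a) of the lemma. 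Condition (b) of the corollary is condition (b) of the lemma. The equivalence then follows at once.

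Alternatively, I would check consistency with Theorem~\ref{th:equiv}. If $(\Psi,c)$ is tight IC, then it is $s$-tight with $s = \sqrt{c}$. Formula~\eqref{eq:pcs} then gives $p_j \propto (c_j/\sqrt{c_j})^2 = c_j$, so $p = c$ and $\frac{c}{p}\Psi = \Psi$, which yields (b). Conversely, if $\conv\Psi$ is acute orthocentric with orthocentre $0$, then its orthocentre's barycentric coordinates are the unique barycentric coordinates of $0$, namely $c$. So Theorem~\ref{th:equiv}(b) holds with $p = c$, and the proof of (b)$\Rightarrow$(a) there produces the scalability constants $s = c/\sqrt{p} = \sqrt{c}$, i.e.\ tight IC.

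For the Gram matrix, I would substitute $p = c$ into eq.~\eqref{eq:psigram}: the prefactor $\frac{p_jp_k}{c_jc_k}$ becomes $1$, leaving $\langle\psi_j,\psi_k\rangle = A\bigl(\frac{\delta_{jk}}{\sqrt{c_jc_k}} - 1\bigr)$. Equivalently, this is eq.~\eqref{eq:phigram} with $\Phi = \Psi$ and $p = c$, where $A$ is the frame bound of $\sqrt{c}\Psi$.

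I expect no real obstacle. The only subtle point is the direction (b)$\Rightarrow$(a): one must note that the barycentric coordinates of the orthocentre coincide with $c$. This holds because $\Psi$ is affinely independent, so the barycentric coordinates of $0$ are unique, and the closure condition identifies them as $c$.
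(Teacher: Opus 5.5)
Your proposal is correct and is essentially the paper's argument: your second paragraph, which invokes Theorem~\ref{th:equiv} with $s=\sqrt{c}$ and $p=c$, is exactly the paper's proof. Your primary route, applying Lemma~\ref{le:mainlemma} directly with $\Phi=\Psi$ and $p=c$, is a slightly shorter version of the same idea, because with $p=c$ the rescaling is trivial; it also avoids citing eq.~\eqref{eq:pcs}, and hence the uniqueness of scalability constants from Remark~\ref{re:scalability}, which the paper's converse direction formally leans on.
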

\begin{proof}
    Let us assume that $(\Psi,c)$ is a tight IC measurement, i.e., $\Psi$ is a scalable frame with scalability constants $s_j \sim \sqrt{c_j}$, then $p_j = c_j$ according to eq.~\eqref{eq:pcs} and using implication (a)$\Rightarrow$(b) in Theorem~\ref{th:equiv} we obtain that the simplex $\conv (\frac{c}{p} \Psi) = \conv \Psi$ is acute orthocentric with the orthocentre at $0$. 
    Conversely, if $\conv \Psi$ is an acute orthocentric simplex with the orthocentre at $0$, then it follows from implication (b)$\Rightarrow$(a) in Theorem~\ref{th:equiv} with $p_j = c_j$ that $\Psi$ is a scalable frame. According to eq.~\eqref{eq:pcs}, the scalability constants satisfy $s_j \sim \sqrt{c_j}$, hence $(\Psi,c)$ is a tight IC measurement.
    The formula for the entries of the Gram matrix follows from eq.~\eqref{eq:psigram}.
\end{proof}

\begin{corollary} \label{co:morphophoric}
    Let $(\Psi,c)$ be a minimal IC measurement. The following conditions are equivalent:    
    \begin{enumerate}[(a)]
        \item $(\Psi,c)$ is a morphophoric measurement;
        \item $\conv (c \Psi)$ is a regular simplex with the orthocentre at $0$.
    \end{enumerate}
    If these conditions are satisfied, the entries of the Gram matrix of $\Psi$ take the form
    \begin{align*}
        \langle \psi_j, \psi_k \rangle = A \frac{(d+1)\delta_{jk} - 1}{(d+1)^{2} c_j c_k}.
    \end{align*}
\end{corollary}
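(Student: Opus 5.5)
The plan is to mirror the proof of Corollary~\ref{co:tightic}, now with $p_j = 1/(d+1)$. Morphophoricity means that $c\Psi$ is a tight frame, i.e.\ $\Psi$ is scalable with constants $s_j \sim c_j$. By eq.~\eqref{eq:pcs} we then get $p_j \sim (c_j/s_j)^2 = \text{const}$, so $p_j = 1/(d+1)$. Theorem~\ref{th:equiv}, (a)$\Rightarrow$(b), then says that $\conv(\frac{c}{p}\Psi) = \conv((d+1)c\Psi)$ is an acute orthocentric simplex with orthocentre at $0$. The barycentric coordinates of the orthocentre $0$ are all equal, and by eq.~\eqref{eq:edgenorm} this makes all edges equal. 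Hence the simplex is regular, and so is its homothetic image $\conv(c\Psi)$ under the positive homothety with ratio $1/(d+1)$. This homothety fixes $0$, so the orthocentre is still at $0$.

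For the converse, suppose $\conv(c\Psi)$ is regular with orthocentre $0$. A regular simplex is acute orthocentric (Table~\ref{tab:orttable}), and its orthocentre coincides with its centroid. This follows from eq.~\eqref{eq:vertexnorm}, or directly from the equality of all the $p_j$ established above. So the barycentric coordinates of $0$ with respect to $\conv(c\Psi)$ are all $1/(d+1)$, and consequently $\sum_j c_j\psi_j = 0$, consistent with the closure condition. Rescaling by $d+1$, the simplex $\conv((d+1)c\Psi) = \conv(\frac{c}{p}\Psi)$ with $p_j = 1/(d+1)$ is acute orthocentric with orthocentre $0$. Theorem~\ref{th:equiv}, (b)$\Rightarrow$(a), now shows that $\Psi$ is scalable, and its proof gives explicit constants $s = c/\sqrt{p} \sim c$. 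Thus $c\Psi$ is a tight frame, which is exactly morphophoricity.

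The Gram formula follows by substituting $p_j = 1/(d+1)$ into eq.~\eqref{eq:psigram}:
\[
\langle\psi_j,\psi_k\rangle = A\,\frac{1}{(d+1)^2 c_j c_k}\bigl((d+1)\delta_{jk} - 1\bigr).
\]

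The only point needing care is the step ``regular simplex with orthocentre at $0$ implies all barycentric coordinates of $0$ are equal.'' I handle it through eq.~\eqref{eq:vertexnorm}: in a regular simplex all vertices are equidistant from the centroid, which is also the orthocentre, and since $p\mapsto(1-p)/p$ is injective and $\sigma>0$, the $p_j$ must coincide. Alternatively, I would use eq.~\eqref{eq:edgenorm}: equal edges force $1/p_j + 1/p_i$ to be constant over pairs, which for $d\ge2$ forces all $p_j$ to be equal.
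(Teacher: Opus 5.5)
Your proof is correct and is essentially the paper's argument: uniform $p$ from eq.~\eqref{eq:pcs}, Theorem~\ref{th:equiv} applied in both directions, and substitution into eq.~\eqref{eq:psigram}. The differences are minor: you obtain regularity from eq.~\eqref{eq:edgenorm} rather than citing Edmonds et al.'s centroid--orthocentre theorem, and you read off $s\sim c$ from the explicit constants $s=c/\sqrt{p}$ in the proof of Theorem~\ref{th:equiv} rather than from eq.~\eqref{eq:pcs}.

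One caution about the converse: your eq.~\eqref{eq:vertexnorm} justification is circular, because it already assumes that the centroid is the orthocentre. Rely instead on the eq.~\eqref{eq:edgenorm} argument, which is sound because a regular simplex is non-rectangular and hence $\sigma\neq 0$.
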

\begin{proof}
    Let us assume that $(\Psi,c)$ is a morphophoric measurement, i.e., $\Psi$ is a scalable frame with scalability constants $s_j \sim c_j$, then $p_j = 1/(d+1)$ according to eq.~\eqref{eq:pcs} and using implication (a)$\Rightarrow$(b) in Theorem~\ref{th:equiv} we obtain that $\conv (c\Psi)$ is an acute orthocentric simplex with the orthocentre at $0$. The centroid of $\conv (c\Psi)$ coincides with the orthocentre, so this simplex is regular \cite[Theorem 4.1]{edmonds2005orthocentric}.
    Conversely, if $\conv (c\Psi)$ is a regular simplex with the orthocentre at $0$, then it is necessarily acute orthocentric. It follows from implication (b)$\Rightarrow$(a) in Theorem~\ref{th:equiv} with $p_j = 1/(d+1)$ that $\Psi$ is a scalable frame. According to eq.~\eqref{eq:pcs}, the scalability constants satisfy $s_j \sim c_j$, hence $(\Psi,c)$ is a morphophoric measurement.
    The formula for the entries of the Gram matrix follows from eq.~\eqref{eq:psigram}.
\end{proof}

\subsection{Minimal \texorpdfstring{$s$}{s}-tight IC measurements from orthocentric simplex} \label{sec:converse}

Corollary~\ref{co:scal_ort} establishes that an $s$-tight MIC measurement determines an acute orthocentric simplex. It is, therefore, a natural question whether an acute orthocentric simplex, under some additional assumptions, determines an $s$-tight IC measurement. 
A simplex is defined only by the set of its vertices, while to specify a measurement, both $\Psi$ and $c$ are needed. In the case of minimal IC measurements, $c$ is unique for a given $\Psi$. However, in Corollary~\ref{co:scal_ort}, the vertices of the obtained orthocentric simplex are rescaled vectors of $\Psi$, $\phi_{j} = \frac{s_{j}^{2}}{c_{j}} \psi_{j}$.
Therefore, the obtained measurement should be of the form $(\Psi,c)$, where elements of $\Psi$ are the vertices of the orthocentric simplex multiplied by factors involving $c$ and $s$. As a consequence, $c$ might not be uniquely determined by the set of vertices. 
It turns out that the converse of Corollary~\ref{co:scal_ort} is true in the following sense.

\begin{theorem} \label{th:ort_scal}
    If $\Phi \subset V_0$ is the set of vertices of an acute orthocentric simplex with the orthocentre at $0$, then for any $c \in \Delta^{\circ}_{d+1}$ there exists $s=(s_{j})_{j=1}^{d+1}$ with $s_{j} > 0$, such that  $(\frac{c}{s^{2}}\Phi,c)$ is a minimal $s$-tight IC measurement with the scalability constants $s$.
\end{theorem}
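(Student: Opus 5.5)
The plan is to take $\Phi$ and $c$ as given and find positive constants $s$ that make three things hold: the closure condition, tightness of $s\Psi$, and $\Psi\subset S^\star$. The first two fix the direction of the vector $s$, and the third is then arranged by an overall scaling.

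Let $p\in\Delta^{\circ}_{d+1}$ be the barycentric coordinates of the orthocentre $0$ with respect to $\conv\Phi$. Since the simplex is acute orthocentric, $0\in\interior(\conv\Phi)$ by Proposition~\ref{pr:barsigns}, so $p$ has all coordinates positive and $\sum_j p_j\phi_j=0$.

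Put $\psi_j \coloneqq \frac{c_j}{s_j^2}\phi_j$. The closure condition $\sum_j c_j\psi_j=0$ reads $\sum_j \frac{c_j^2}{s_j^2}\phi_j=0$. The vectors of $\Phi$ are affinely independent, so the linear relations among them form a one-dimensional space spanned by $p$. Hence closure holds exactly when $c_j^2/s_j^2=\lambda p_j$ for some $\lambda>0$, that is,
\[
s_j=\frac{c_j}{\sqrt{\lambda p_j}}.
\]
This matches eq.~\eqref{eq:pcs} read backwards, and it leaves only the free parameter $\lambda>0$.

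With this choice, $s_j\psi_j=\frac{c_j}{s_j}\phi_j=\sqrt{\lambda}\sqrt{p_j}\,\phi_j$. By the implication (b)$\Rightarrow$(a) of Lemma~\ref{le:mainlemma}, $\sqrt{p}\Phi$ is a tight frame, and therefore so is $s\Psi$. This gives $s$-tightness with scalability constants $s$. Information completeness follows because $\Phi$, being the vertex set of a $d$-simplex with $0$ in its interior, spans $V_0$, and $\Psi$ consists of positive multiples of the $\phi_j$. The outcome count is $d+1$, so the measurement is minimal. The normalisation $\sum_j c_j=1$ with $c_j>0$ holds by hypothesis.

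The remaining step, and the only one that uses the state space, is the admissibility condition $\psi_j\in S^\star$. Here $\psi_j=\frac{\lambda p_j}{c_j}\phi_j$ scales linearly in $\lambda$. Since $S$ is compact, $S^\star$ contains a ball around $0$: indeed, if $\lVert x\rVert\le 1/R$ with $R=\max_{y\in S}\lVert y\rVert$, then $\langle x,y\rangle\ge -1$ for all $y\in S$. Choosing $\lambda$ small enough, for instance $\lambda\le \min_j c_j/(R\,p_j\lVert\phi_j\rVert)$, puts every $\psi_j$ in $S^\star$. So for any such $\lambda$, $(\frac{c}{s^2}\Phi,c)$ is a minimal $s$-tight IC measurement with scalability constants $s$.

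This last step is where some care is needed. The theorem only asserts existence of suitable $s$, which is exactly the freedom we exploit. The argument also shows that $s$ is unique up to this single scaling factor, in line with Remark~\ref{re:scalability}.
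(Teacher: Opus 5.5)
Your proposal is correct and follows the paper's proof essentially step for step. Your $\lambda$ is the paper's $C^2$: you make the same choice $s_j=c_j/\sqrt{\lambda p_j}$, invoke Lemma~\ref{le:mainlemma} (b)$\Rightarrow$(a) in the same way, and use the same small overall scaling to land in $S^\star$. Your only additions are small extras: you derive the necessity of this form of $s$ from the one-dimensional space of linear relations, check informational completeness explicitly, and give an explicit ball radius $1/R$ inside $S^\star$.
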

\begin{proof}
    The orthocentre of an acute orthocentric simplex lies in the interior, so $0 \in \interior (\conv \Phi)$ and there exists a unique $p \in \Delta^{\circ}_{d+1}$ such that $\sum\limits_{j=1}^{d+1} p_{j}\phi_{j} = 0$.  
    Let us take $\gamma_{j} \coloneqq C\sqrt{p_{j}}$ for some $C > 0$ (to be determined later).
    Now, let us fix arbitrary $c \in \Delta^{\circ}_{d+1}$ and let us define
    \begin{equation*}
        s_{j} \coloneqq \frac{c_{j}}{\gamma_{j}} = \frac{c_{j}}{C\sqrt{p_{j}}},
    \end{equation*}
    then 
    \begin{equation*}
        \sum \limits_{j=1}^{d+1} \frac{c_{j}^{2}}{s_{j}^{2}}\phi_{j} = 0. 
    \end{equation*}  
    By implication (b)$\Rightarrow$(a) in Lemma~\ref{le:mainlemma}, $\sqrt{p}\Phi$ is a tight frame, so $\frac{c}{s}\Phi$ is also a tight frame for any $C > 0$. 
    It remains to choose $C > 0$ such that $\frac{c_{j}}{s_{j}^{2}}\phi_{j} = \frac{C^{2}p_{j}}{c_{j}}\phi_{j} \in S^\star$ for every $j=1,\dots,d+1$, which is possible for all sufficiently small $C$, since $0 \in \interior S^\star$.
\end{proof}

\begin{remark} 
In Corollary~\ref{co:scal_ort}, $\Psi$ determines $c$ (uniquely) and $s$ (uniquely up to rescaling by a common factor), then $(\Psi,c)$ and $s$ determine the orthocentric simplex. In Theorem~\ref{th:ort_scal}, the orthocentric simplex $\conv \Phi$ and arbitrary $c \in \Delta^{\circ}_{d+1}$ determine $s$ uniquely up to rescaling by a common factor, provided $\frac{c_{j}}{s_{j}^{2}}\phi_{j} \in S^\star$ for every $j=1,\dots,d+1$.
\end{remark}

In what follows, $p \in \Delta^{\circ}_{d+1}$ denotes the barycentric coordinates of $0$ with respect to $\conv \Phi$ and $C>0$ the constant from the proof of Theorem~\ref{th:ort_scal}; rescaling $s$ by a common factor is equivalent to rescaling $C$ by the inverse factor.
Similarly to the case of Theorem~\ref{th:equiv}, we will examine how the special cases of $s$-tight IC measurements, namely morphophoric and tight IC measurements, appear in Theorem~\ref{th:ort_scal}. 
For morphophoric and tight IC measurements, the scalability constants are proportional to $c_{j}$ and $\sqrt{c_{j}}$, respectively. Changing this factor of proportionality (which means varying $s$) is equivalent to changing $C$. In particular, it can be used to guarantee that $\frac{c}{s^{2}}\Phi \subset S^\star$ and we obtain a measurement. 

\begin{corollary} \label{co:morphophoric_c}
     The measurement obtained in Theorem~\ref{th:ort_scal} is morphophoric if and only if the simplex $\conv \Phi$ is a regular simplex with the orthocentre at $0$.     
\end{corollary}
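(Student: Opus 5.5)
Inside the proof of Theorem~\ref{th:ort_scal} the measurement is $\Psi \coloneqq \frac{c}{s^{2}}\Phi$, with scalability constants
\[
s_j = \frac{c_j}{C\sqrt{p_j}},
\]
where $p\in\Delta^{\circ}_{d+1}$ are the barycentric coordinates of $0$ with respect to $\conv\Phi$. Hence
\[
\psi_j=\frac{C^2 p_j}{c_j}\,\phi_j .
\]
Both directions of the corollary will be reduced to Corollary~\ref{co:morphophoric}, which says that a MIC $(\Psi,c)$ is morphophoric iff $\conv(c\Psi)$ is a regular simplex with orthocentre at $0$.

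The key identity is
\[
c_j\psi_j = C^2 p_j\phi_j ,
\]
so $\conv(c\Psi)=C^2\conv(p\Phi)$. Morphophoricity therefore amounts to $\conv(p\Phi)$ being regular with orthocentre $0$. A shorter route avoids regularity of $p\Phi$ altogether. Morphophoricity means the scalability constants are proportional to $c$. By Remark~\ref{re:scalability} the constants are unique up to a common factor, since $d\ge 2$ and the measurement is a MIC. So morphophoricity is equivalent to
\[
\frac{c_j}{C\sqrt{p_j}}\sim c_j ,
\]
i.e.\ to $p_j$ being constant, $p_j = 1/(d+1)$.

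It then remains to show that $p_j=1/(d+1)$ holds iff $\conv\Phi$ is regular. Since $\conv\Phi$ is a non-rectangular orthocentric simplex with orthocentre $0$, eq.~\eqref{eq:edgenorm} gives
\[
\lVert\phi_j-\phi_i\rVert^2=\sigma\Bigl(\frac1{p_j}+\frac1{p_i}\Bigr)
\]
with $\sigma>0$, and the paper already notes that this makes regularity equivalent to all $p_j$ being equal. The orthocentre is at $0$ by hypothesis, so regularity with orthocentre at $0$ is the same condition. Conversely, if $\conv\Phi$ is regular with orthocentre $0$, then the centroid coincides with the orthocentre, so $p_j=1/(d+1)$ and $s_j\sim c_j$.

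The main subtlety is the step ``morphophoric $\Rightarrow s_j\sim c_j$ for these particular $s$''. Morphophoricity only says that some scaling proportional to $c$ works, so this step rests on the uniqueness of scalability constants from Remark~\ref{re:scalability}; I will cite it explicitly. The case $d\ge 2$ is assumed throughout the paper. Everything else is direct substitution.
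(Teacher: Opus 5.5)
Your proposal is correct and follows essentially the paper's argument, which simply refers back to the proof of Corollary~\ref{co:morphophoric}: the uniqueness of scalability constants (Remark~\ref{re:scalability}) turns morphophoricity into $s_j\sim c_j$, i.e.\ $p_j=1/(d+1)$, and equal barycentric coordinates of the orthocentre are equivalent to regularity of $\conv\Phi$. The only cosmetic difference is the last step: you derive it from eq.~\eqref{eq:edgenorm}, whereas the paper uses the coincidence of centroid and orthocentre.
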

\begin{proof}
    See the proof of Corollary~\ref{co:morphophoric}.
\end{proof}

\begin{corollary} \label{co:tightic_c}
    For every $\Phi$ as in Theorem~\ref{th:ort_scal}, there is exactly one $c \in \Delta^{\circ}_{d+1}$ that gives a tight IC measurement.    
\end{corollary}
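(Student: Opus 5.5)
In the setting of Theorem~\ref{th:ort_scal}, fix $\Phi$ and let $p\in\Delta^\circ_{d+1}$ be the barycentric coordinates of $0$ with respect to $\conv\Phi$. The measurement is $(\Psi,c)$ with $\psi_j=\frac{c_j}{s_j^2}\phi_j$ and $s_j=\frac{c_j}{C\sqrt{p_j}}$, so
\[
\psi_j=\frac{C^2p_j}{c_j}\phi_j .
\]
The plan is to find exactly which $c$ make this tight IC. By definition, tight IC means $(\sqrt{c_j}\psi_j)_j$ is a tight frame. In other words, $\sqrt{c}$ must be a sequence of scalability constants for $\Psi$.

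By Remark~\ref{re:scalability}, since $d\ge2$ and $\Psi$ is a minimal scalable frame, its scalability constants are unique up to a common positive factor. We already know one such sequence, namely $s$. Hence the measurement is tight IC if and only if $\sqrt{c_j}=\lambda s_j$ for some $\lambda>0$. Equivalently,
\[
\sqrt{c_j}=\frac{\lambda c_j}{C\sqrt{p_j}},\qquad\text{i.e.}\qquad \sqrt{c_j}\sim\sqrt{p_j}.
\]
Since both $c$ and $p$ lie in the probability simplex, the normalisation forces $c=p$.

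It remains to check that $c=p$ really works. For $c=p$ we get $\psi_j=C^2\phi_j$, so $\conv\Psi$ is a positive rescaling of $\conv\Phi$. It is therefore still an acute orthocentric simplex with orthocentre at $0$. Corollary~\ref{co:tightic} then gives that the measurement is tight IC. The closure condition $\sum_j c_j\psi_j=0$ holds because $p$ is the vector of barycentric coordinates of $0$.

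For uniqueness one could alternatively argue via Corollary~\ref{co:tightic} alone. Tight IC would force $\conv\Psi$ to be acute orthocentric at $0$ with barycentric coordinates $c$. Since $\psi_j$ is a positive multiple of $\phi_j$, both simplices have the same vertex directions, and Theorem~\ref{th:equiv}, via the uniqueness of $p$ in eq.~\eqref{eq:pcs}, then pins down the scaling. The scalability-uniqueness route is cleaner, so I would use that.

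The only subtle point is admissibility, $\Psi\subset S^\star$. For $c=p$ this holds by choosing $C$ small enough, as in the proof of Theorem~\ref{th:ort_scal}. So the unique tight IC choice is $c=p$, which is well-defined up to the overall rescaling of $C$.
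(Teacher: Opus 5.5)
Your proof is correct and takes essentially the paper's route. The paper argues, as in Corollary~\ref{co:tightic}, that tight IC means $s_j\sim\sqrt{c_j}$ and hence $c=p$ via eq.~\eqref{eq:pcs}. That rests on the same uniqueness of scalability constants (Remark~\ref{re:scalability}) that you apply directly to the explicit $s_j=c_j/(C\sqrt{p_j})$, and your converse check via $\Psi=C^2\Phi$ and Corollary~\ref{co:tightic} matches the paper's as well.
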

\begin{proof}
    The obtained measurement is tight IC if and only if $c_j = p_j$, as in the proof of Corollary~\ref{co:tightic}.    
\end{proof}

In Theorem~\ref{th:ort_scal}, an acute orthocentric simplex generates a class of minimal $s$-tight IC measurements. The next remark lists several properties of such class, in particular results concerning morphophoric and tight IC measurements discussed above, as well as the role of $S^\star$.   

\begin{remark} \label{re:class}
\hfill
\begin{enumerate}[(a)]
    \item For any $\Phi$ and $c \in \Delta^{\circ}_{d+1}$, each measurement that can be obtained from these $\Phi$ and $c$ corresponds to a value of $C > 0$ satisfying $\frac{C^{2}p_{j}}{c_{j}}\phi_{j} \in S^\star$ for $j=1,\dots,d+1$, where $p_{j}$ are uniquely determined by $\Phi$. The set $S^\star$ is a neighbourhood of $0$, so the set of allowed values of $C$ is of the form $(0, C_{\max}(\Phi,c,S^\star)]$, where the upper bound depends on $\Phi$, $c$ and $S^\star$ and can be written as
    \begin{equation*}
        C_{\max}(\Phi,c,S^\star) = \max \left\{ C>0 \;\big|\; \frac{C^{2}p_{j}}{c_{j}} \phi_{j} \in S^\star \text{ for } j=1,\dots,d+1\right\}.
    \end{equation*}
    The maximum is attained as $S^\star$ is compact.
    \item For any $\Phi$ and $c \in \Delta^{\circ}_{d+1}$, the possible values of $s$ are given by $s_{j} = \frac{c_{j}}{C\sqrt{p_j}}$, where $p_{j}$ are uniquely determined by $\Phi$ and $C \in (0,C_{\max}(\Phi,c,S^\star)]$ as above. In particular, the set of allowed values of each $s_{j}$ is bounded from below.
    \item For any $\Phi$, the corresponding class of measurements contains exactly one tight IC measurement, up to rescaling of vectors $\frac{c_{j}}{s_{j}^{2}}\phi_{j}$ by a common factor.
    This unique tight IC measurement corresponds to $c_{j}=p_{j}$.
    The possibility of rescaling these vectors by a common factor stems from the fact that $C$ is not uniquely determined by $\Phi$ and $c$, it is only required to satisfy $C^{2}\phi_{j} \in S^\star$ for every $j=1,\dots,d+1$.
    \item All morphophoric measurements which can be obtained in Theorem~\ref{th:ort_scal} are necessarily obtained from a regular simplex. Any measurement obtained from a regular simplex is morphophoric.
    \item If a measurement obtained in Theorem~\ref{th:ort_scal} is both morphophoric and tight IC, then the simplex $\conv \Phi$ is regular (as is required by the morphophoricity of the obtained measurement).
    However, in this case $c$ is unique, because the measurement is tight IC, which implies
    \begin{equation*}
        c_j = p_j = \frac{1}{d+1}.
    \end{equation*}
\end{enumerate}     
\end{remark}

Let us note that the family generated by $\Phi$ depends on the simplex only through the directions of its vertices, along which all the measurement vectors point; indeed, among acute orthocentric simplices with the orthocentre at $0$, the directions determine the vertices up to rescaling by a common factor.
Transforming the directions by a rotation yields a family with identical internal data: the same Gram matrix, angles, and outcome probabilities.
Reflections, too, preserve all the internal data, yet within a fixed GGPT a family and its mirror image need not be physically equivalent. \footnote{The mechanism can be seen in the quantum GGPT of Section~\ref{sec:quantum}, where $\mathfrak H\simeq\mathbb C^{D}$, the space $V_{0}$ consists of the traceless Hermitian operators on $\mathfrak H$, and $d=D^{2}-1$.
By Wigner's theorem the orthogonal symmetries of the quantum state space are the unitary and antiunitary conjugations, $\rho\mapsto U\rho U^{\dagger}$ and $\rho\mapsto U\rho^{T}U^{\dagger}$, forming $\mathrm{Ad}\,U(D)\rtimes\mathbb{Z}_{2}$, the second component being the coset of the transposition $\tau\colon\rho\mapsto\rho^{T}$ \cite{bengtsson2017geometry}. This group has two connected components for every $D$, but their position in $O(d)$ depends on $D$: the subgroup $\mathrm{Ad}\,U(D)$ is connected and hence contained in $SO(d)$, whereas $\tau$ negates precisely the $D(D-1)/2$ imaginary Hermitian directions, so that $\det\bigl(\tau|_{V_{0}}\bigr)=(-1)^{D(D-1)/2}$. 
Thus for $D\equiv 2,3\pmod 4$ the second component lies in $O(d)\setminus SO(d)$ and provides a mirror, so that a family and its reflection always coincide. For $D\equiv 0,1\pmod 4$ -- the smallest instance being the ququart, $D=4$ -- both components lie in $SO(d)$; no symmetry of the state space acts on $V_{0}$ with determinant $-1$, and the two mirror families are genuinely inequivalent.}

\section{\texorpdfstring{$s$}{s}-tight MIC measurements -- algebraic conditions} \label{sec:algebraiccriterion}

Theorem~\ref{th:equiv}, which establishes the fundamental geometric correspondence via a rescaling procedure, gives a criterion for determining whether a MIC measurement is $s$-tight. This criterion takes a particularly simple form for tight IC and morphophoric measurements.

\begin{corollary} \label{co:crit}
    A MIC measurement $(\Psi,c)$ is $s$-tight if and only if there exist $p \in \Delta^{\circ}_{d+1}$ and $A>0$ such that 
    \begin{equation} \label{eq:criterion}
        \langle \frac{c_j}{p_j} \psi_j, \frac{c_k}{p_k} \psi_k \rangle = -A
    \end{equation}
    for every $j,k = 1,\dots,d+1$, $j \neq k$. 
    Moreover, a MIC measurement $(\Psi,c)$ is
    \begin{enumerate}[(a)]
        \item tight IC if and only if $\langle \psi_j, \psi_k \rangle = -A$,
        \item morphophoric if and only if $\langle c_j\psi_j, c_k\psi_k \rangle = -A$,
    \end{enumerate}
    for every $j,k = 1,\dots,d+1$, $j \neq k$, and some $A > 0$.
\end{corollary}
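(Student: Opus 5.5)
The plan is to read Corollary~\ref{co:crit} off Theorem~\ref{th:equiv} combined with Theorem~\ref{th:orthocentric}(b). Throughout, $(\Psi,c)$ is a MIC, so $\conv\Psi$ is a $d$-simplex containing $0$ in its interior, and any positive rescaling $\frac{c}{p}\Psi$ is again a simplex.

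\emph{Main criterion, forward direction.} Suppose $(\Psi,c)$ is $s$-tight. By Theorem~\ref{th:equiv}(b) there is $p\in\Delta^\circ_{d+1}$ such that $\Phi\coloneqq\frac{c}{p}\Psi$ spans an acute orthocentric simplex with orthocentre at $0$. Theorem~\ref{th:orthocentric}(b), applied with $G=0$, says that $\langle\phi_j,\phi_k\rangle$ is a constant $-\sigma$ for all $j\neq k$. Proposition~\ref{pr:barsigns} (acute case) gives $\sigma>0$. Setting $A=\sigma$ yields \eqref{eq:criterion}. Alternatively, one can read this directly from \eqref{eq:psigram}.

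\emph{Main criterion, converse.} Suppose \eqref{eq:criterion} holds for some $p\in\Delta^\circ_{d+1}$ and $A>0$. Then $\Phi=\frac{c}{p}\Psi$ is affinely independent and $0\in\aff\Phi$. Theorem~\ref{th:orthocentric}(b) gives that $\conv\Phi$ is orthocentric with orthocentre $0$ and $\sigma=A>0$. Since $\sigma\neq0$, the simplex is non-rectangular, and Proposition~\ref{pr:barsigns} makes it acute. Theorem~\ref{th:equiv}(b)$\Rightarrow$(a) then gives $s$-tightness.

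The only subtlety is that $p$ here is arbitrary and need not a priori be the barycentric coordinates of $0$ in $\conv\Phi$. This is not actually needed, because Theorem~\ref{th:equiv}(b) only asks for some $p$. For consistency one can still observe that $\sum_j p_j\phi_j=\sum_j c_j\psi_j=0$, so $p$ is automatically those barycentric coordinates.

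\emph{Part (a).} Corollary~\ref{co:tightic} reduces tight IC to $\conv\Psi$ being acute orthocentric with orthocentre at $0$. The same two-way argument via Theorem~\ref{th:orthocentric}(b) and Proposition~\ref{pr:barsigns} converts this into the condition $\langle\psi_j,\psi_k\rangle=-A$ with $A>0$. Equivalently, this is the main criterion with $p=c$.

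\emph{Part (b).} Corollary~\ref{co:morphophoric} reduces morphophoricity to $\conv(c\Psi)$ being regular with orthocentre at $0$. If the simplex is regular, it is acute orthocentric (see Section~\ref{sec:orthocentric}), so the inner products $\langle c_j\psi_j,c_k\psi_k\rangle$ equal a constant $-A$ with $A>0$.

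Conversely, suppose $\langle c_j\psi_j,c_k\psi_k\rangle=-A$ with $A>0$. Then $\conv(c\Psi)$ is acute orthocentric with orthocentre $0$. The closure condition $\sum_j c_j\psi_j=0$ shows that the barycentric coordinates of $0$ with respect to the vertices $c_j\psi_j$ are all equal to $1/(d+1)$. By \eqref{eq:edgenorm}, the edges are then all of equal length, so the simplex is regular. This is the same as the main criterion with $p_j=1/(d+1)$.

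The main care point is exactly this identification of $p$ as the barycentric coordinates via the closure condition, which is what makes the regularity argument in (b) go through. Everything else is direct citation.
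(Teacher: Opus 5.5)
Your proposal is correct and takes the same route as the paper. The paper's proof simply cites the equivalence (a)$\Leftrightarrow$(b) of Theorem~\ref{th:equiv} together with Corollaries~\ref{co:tightic} and~\ref{co:morphophoric}. You additionally spell out the translation between ``acute orthocentric with orthocentre at $0$'' and ``constant negative off-diagonal inner products'' via Theorem~\ref{th:orthocentric}(b) and Proposition~\ref{pr:barsigns}, and give the edge-length argument for regularity in (b), using that $\sum_j p_j\phi_j=\sum_j c_j\psi_j=0$ makes $p$ the barycentric coordinates — details the paper leaves implicit.
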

\begin{proof}
    Follows from the equivalence of (a) and (b) in Theorem~\ref{th:equiv}. 
    The tight IC and morphophoric cases follow from Corollary~\ref{co:tightic} and Corollary~\ref{co:morphophoric}, respectively.
\end{proof}

However, one might ask if $s$-tightness can be diagnosed directly from the measurement vectors $\Psi$, without computing the barycentric coordinates $c$, the probabilities $p$ or the scalability constants $s$. Criteria of this kind are available for scalable frames in general, and the following theorem rests on the one obtained by Kutyniok et al.~\cite[Corollary 2.9]{kutyniok2013scalable}, which characterises
scalability of $d+1$ vectors through the quantities
$\langle\psi_i,\psi_k\rangle\langle\psi_k,\psi_j\rangle/\langle\psi_i,\psi_j\rangle$. What we add is a reformulation adapted to the present setting: the criterion is recast as two conditions that are projective in nature -- an obtuse angle condition and a cross-ratio rule -- depending on the directions of the measurement vectors alone. In this form it is the algebraic counterpart of the orthocentricity of the underlying simplex, and it is this form that yields the classification of
Section~\ref{sec:skeletons}.

\needspace{5\baselineskip}
\begin{theorem} 
\label{th:s_tight_criterion}
Let $(\Psi,c)$ be an IC measurement with $d+1$ outcomes. Then the following statements are equivalent:
\begin{enumerate}[(a)]
\item $(\Psi,c)$ is $s$-tight;
\item
\begin{enumerate}[(I)]
\item (\textit{obtuse angle condition})
$$\langle \psi_j, \psi_k \rangle < 0 \quad \text{for } j \neq k, \ j,k = 1,\dots,d+1;$$ 
\item (\textit{cross-ratio rule}) 
$$\langle \psi_i, \psi_j \rangle \langle \psi_k, \psi_l \rangle = \langle \psi_i, \psi_l \rangle \langle \psi_k, \psi_j \rangle \quad \text{for pairwise distinct } i, j, k, l = 1,\dots,d+1.$$
\end{enumerate}
\end{enumerate}
\end{theorem}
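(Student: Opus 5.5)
The plan is to route both directions through Corollary~\ref{co:crit}: $(\Psi,c)$ is $s$-tight if and only if there are $p\in\Delta^\circ_{d+1}$ and $A>0$ with $\langle\psi_j,\psi_k\rangle=-A\,\frac{p_jp_k}{c_jc_k}$ for all $j\neq k$. Writing $t_j\coloneqq p_j/c_j>0$, the task becomes showing that the off-diagonal Gram entries have the rank-one form $\langle\psi_j,\psi_k\rangle=-A t_jt_k$ with all $t_j>0$ if and only if (I) and (II) hold. The normalisation $\sum p_j=1$ costs nothing, since $(t_j)$ can be rescaled and $A$ adjusted accordingly.

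\emph{(a)$\Rightarrow$(b).} By Corollary~\ref{co:crit} (equivalently eq.~\eqref{eq:psigram}), every off-diagonal entry equals $-A t_jt_k<0$, which is (I). For pairwise distinct indices, both sides of (II) equal $A^2t_it_jt_kt_l$, which is (II).

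\emph{(b)$\Rightarrow$(a).} Set $g_{jk}\coloneqq-\langle\psi_j,\psi_k\rangle>0$ for $j\neq k$. The task is to recover positive $t_j$ with $g_{jk}=t_jt_k$. Define
\[
t_j^2\coloneqq\frac{g_{jk}g_{jl}}{g_{kl}}
\]
for any $k,l$ distinct from each other and from $j$; this requires $d+1\geq3$, which holds since $d\geq2$. For independence of the choice, change one index at a time. Replacing $l$ by $l'$ leaves $t_j^2$ unchanged exactly when $g_{jl}g_{kl'}=g_{jl'}g_{kl}$, and this is (II) applied to the four distinct indices $j,l,k,l'$ after using symmetry of the inner product. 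The main obstacle is the case $d=2$: there are only three indices, so no such swap is available and (II) is vacuous. This is no problem, because $t_j^2$ then has only one possible definition. The definition is symmetric in $k,l$, so independence follows.

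Now take $t_j>0$. One still has to check $t_jt_k=g_{jk}$. Choose a third index $l$ and compute
\[
t_j^2t_k^2=\frac{g_{jk}g_{jl}}{g_{kl}}\cdot\frac{g_{kj}g_{kl}}{g_{jl}}=g_{jk}^2 ,
\]
so positivity of $t_jt_k$ and of $g_{jk}$ gives $t_jt_k=g_{jk}$. Finally put $p_j\coloneqq c_jt_j/\sum_k c_kt_k\in\Delta^\circ_{d+1}$ and $A\coloneqq(\sum_k c_kt_k)^2$. Then $\langle\frac{c_j}{p_j}\psi_j,\frac{c_k}{p_k}\psi_k\rangle=-A$ for $j\neq k$, and Corollary~\ref{co:crit} yields $s$-tightness. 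The IC hypothesis with $n=d+1$ outcomes makes $(\Psi,c)$ a MIC, so Corollary~\ref{co:crit} applies.
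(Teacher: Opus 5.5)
Your proof is correct. For (a)$\Rightarrow$(b) it is the paper's argument. For (b)$\Rightarrow$(a) you take a different route.

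The paper cites the scalability criterion of Kutyniok et al.\ (their Corollary~2.9) for $d+1$ vectors. It then only checks that, under (I), the cross-ratio rule makes $\langle\psi_i,\psi_k\rangle\langle\psi_k,\psi_j\rangle/\langle\psi_i,\psi_j\rangle$ a negative quantity depending on $k$ alone.

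You instead use (II) to factor the off-diagonal Gram entries explicitly as $-t_jt_k$ with $t_j>0$. Your $t_k^2$ is exactly minus that Kutyniok constant, and your index-swapping check is the same observation. You then construct $p$ and $A$ and feed them back into Corollary~\ref{co:crit}. Scalability therefore comes ultimately from Lemma~\ref{le:mainlemma} and the variational characterisation, not from an external frame-theoretic result.

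The paper's version is shorter and rests on a general criterion about frames. Yours is self-contained within the paper. It does lean on the measurement structure, since the closure condition enters through Corollary~\ref{co:crit}, but this is harmless because the measurement is a MIC. It also has two advantages: it produces the skeleton $p$ and the constant $A$ explicitly, and it shows directly why (II) is the right condition. After rescaling $\psi_j$ by $1/t_j$, all pairwise inner products equal $-1$, which is exactly the orthocentre-at-$0$ criterion of Theorem~\ref{th:orthocentric}(b). This is the same factorisation the paper only sketches in Section~\ref{sec:skeletons}.
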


\begin{proof} \hfill 

(a)$\Rightarrow$(b) Let us assume that $(\Psi,c)$ is an $s$-tight MIC. It follows from implication (a)$\Rightarrow$(b) in Theorem~\ref{th:equiv} that the simplex $\conv (\frac{c}{p}\Psi)$ is acute orthocentric with the orthocentre at $0$ for some $p \in \Delta^{\circ}_{d+1}$, so there exists $A>0$ such that $\langle \frac{c_j}{p_j}\psi_{j},\frac{c_k}{p_k}\psi_{k}\rangle = -A$ for $j \neq k$. Consequently, $\langle \psi_j,\psi_k \rangle = -A \frac{p_j p_k}{c_j c_k} < 0$ for $j \neq k$ and 
\begin{align*}
    \frac{\langle \psi_i, \psi_j \rangle \langle \psi_k, \psi_l \rangle}{\langle \psi_i, \psi_l \rangle 
\langle \psi_k, \psi_j \rangle} = 1
\end{align*}
 for pairwise distinct $i, j, k, l = 1,\dots,d+1$. 

(b)$\Rightarrow$(a) The obtuse angle condition guarantees that none of the inner products $\langle\psi_i,\psi_j\rangle$ is equal to zero and thus, by \cite[Corollary 2.9]{kutyniok2013scalable}, it suffices to show that for all $k\in\{1,\dots,d+1\}$ and for all pairwise distinct indices $i, j, k$
\begin{equation} \label{eq:Kutyniokcondition}
    \frac{\langle\psi_i,\psi_k\rangle\langle\psi_k,\psi_j\rangle}{\langle\psi_i,\psi_j\rangle}=\textnormal{const}(k)<0.
\end{equation}
Under condition (I) the cross-ratio rule can be equivalently expressed as \begin{equation*}
    \frac{\langle\psi_k,\psi_l\rangle}{\langle\psi_i,\psi_l\rangle}= \frac{\langle\psi_k,\psi_j\rangle}{\langle\psi_i,\psi_j\rangle} \textnormal{ for pairwise distinct } i, j, k, l = 1,\dots,d+1.
\end{equation*}
By multiplying these expressions on both sides by $\langle\psi_i,\psi_k\rangle$ and using the symmetry of scalar product we deduce that the left-hand side in \eqref{eq:Kutyniokcondition} depends only on $k$. The negative sign of this constant follows from the obtuse angle condition.
\end{proof}

\begin{remark} \label{re:MIC}
For $d=2$ the cross-ratio rule~(II) is vacuous, as there are no four pairwise distinct indices in $\{1,2,3\}$; hence, in this case, a MIC is $s$-tight if and only if the obtuse angle condition~(I) holds. 
This reflects the fact that every triangle is orthocentric, cf. Section~\ref{sec:skeletons}. 
\end{remark}

\section{Examples} \label{sec:examples}

\subsection{Disk}

Let us consider the two-dimensional case of ball GGPT \cite[Example D]{szymusiak2025can}. 
The set of states is the unit disk $S \coloneqq \{x \in \mathbb{R}^2: \lVert x \rVert \leq 1\}$ and this GGPT is self-dual, i.e., $S^\star = S$. 
A measurement is described by a pair $(\Psi,c) = (\psi_j,c_j)_{j=1}^{n}$, where $\Psi \subset S$ and $c \in \Delta_n^\circ$ satisfy the closure condition, $\sum\limits_{j=1}^{n} c_j \psi_j = 0$. 
Geometrically, a measurement is represented as a convex polytope included in the disk $S$, with the origin in its interior.
In this representation, $\Psi$ is the set of vertices of the polytope and $c$ is the set of barycentric coordinates of the origin with respect to $\Psi$. 
If the measurement is a MIC, this convex polytope is a triangle in $S$ and $c$ is uniquely determined by $\Psi$.

\subsection{Boxworld}

The Boxworld (or gbit, ``generalised bit''; see~\cite{barrett2007information, plavala2023general, janotta2011limits}), is the regular polygonal GGPT \cite[Example C]{szymusiak2025can} for $N=4$.
Let $u_k = (\cos(k\pi / 2), \sin(k \pi / 2))$ for $k=1,\dots,4$ be the vertices of a square in $\mathbb{R}^2$.
This GGPT is defined by $V_0 \coloneqq \mathbb{R}^2$ (with the standard inner product) and $S \coloneqq \conv (a u_k)_{k=1}^4$, for some $a > 0$, which is a square. 
The dual of $S$ is $S^\star = \{ (x,y) \in \mathbb{R}^2 : |x| \leq 1/a, |y| \leq 1/a \}$. 
It is also a square, but rotated by $\pi/4$ and rescaled by the factor $\sqrt{2}a^{-2}$, so this GGPT is not self-dual.
A measurement $(\Psi,c)$ is represented as a convex polytope $\conv \Psi$, included in the square $S^\star$, such that $0 \in \interior (\conv \Psi)$, and $c$ is the set of barycentric coordinates of the origin with respect to $\Psi$. Similarly as in the previous example, in the case of MICs this polytope is a triangle and $c$ is uniquely determined by $\Psi$.

\subsection{Quantum}
\label{sec:quantum}
A quantum GGPT \cite[Example B]{szymusiak2025can} with $\mathfrak{H} \simeq \mathbb{C}^{D}$ is defined as follows: $V_0 \coloneqq \{\rho \in \mathcal{L}_{s}(\mathfrak{H}): \tr \rho = 0\}$, where $\mathcal{L}_s(\mathfrak{H})$ denotes the space of linear self-adjoint (i.e., Hermitian) operators on $\mathfrak{H}$, equipped with the Hilbert--Schmidt inner product, $\langle \rho_1, \rho_2 \rangle_{HS} \coloneqq \tr (\rho_1 \rho_2)$ for $\rho_1, \rho_2 \in \mathcal{L}_{s}(\mathfrak{H})$, the inner product on $V_0$ is the restriction of the Hilbert--Schmidt inner product to $V_0$, and the set of states is represented as
\begin{equation*}
  S \coloneqq \{ \sqrt{D} (\rho - I/D): \rho \in \mathcal{L}_s(\mathfrak{H}), \rho \geq 0, \tr \rho = 1\}.
\end{equation*}
Here, $\rho$ is a density matrix and $m \coloneqq I/D$ is the maximally mixed state.
The quantum GGPT is self-dual, $S^\star = S$, and its real dimension is $d = D^2-1$; for the geometry of this state space see
\cite{bengtsson2013geometry,bengtsson2017geometry}.

A measurement is given by a POVM $\Pi = (\Pi_{j})_{j=1}^{n}$, where $\Pi_{j} \in \mathcal{L}_{s}(\mathfrak{H})$, $\Pi_{j} \geq 0$ and $\sum\limits_{j=1}^{n} \Pi_{j} = I$. 
In the language of a quantum GGPT, the measurement $\Pi$ is represented as $(\Psi,c) = (\psi_j, c_j)_{j=1}^{n}$ with
\begin{align*}
    c_{j} = \Pi_{j}(m) = \frac{1}{D} \tr (\Pi_{j}), \quad \psi_{j} = \frac{P_{0}(\Pi_{j})}{\sqrt{D} c_j} = \frac{\sqrt{D}}{\tr(\Pi_j)} P_{0}(\Pi_{j}),
\end{align*}
where $P_0: \mathcal{L}_{s}(\mathfrak{H}) \to V_0$ is the orthogonal projection onto $V_0$, given by $P_0(\rho) = \rho - \tr (\rho) m$ for $\rho \in \mathcal{L}_{s}(\mathfrak{H})$.
A~MIC measurement consists of $n = d+1 = D^{2}$ effects such that $\lin \Psi = \lin (P_0(\Pi_{j}))_{j=1}^{D^2} = V_0$. 
\begin{remark}
A similar representation of POVMs was used in~\cite{debrota2021varieties}. 
Namely, each effect $\Pi_j$ was written in the form
\[ \Pi_j = e_j \rho_j, \quad \text{where } e_j = \tr(\Pi_j),\;\rho_j = \frac{1}{\tr(\Pi_j)}\Pi_j, \]
which implies $\tr(\rho_j)=1$. The correspondence between these two representations is given by
\[ c_j = \frac{e_j}{D}, \quad \psi_j = \sqrt{D}\rho_j - \frac{1}{\sqrt{D}}I. \]
The entries of the Gram matrices of $\Psi$ and $(\rho_j)_{j=1}^{n}$ are related by
\[ \langle \psi_j, \psi_k \rangle_{HS} = D \tr(\rho_j \rho_k) - 1 = D \langle \rho_j,\rho_k \rangle_{HS} - 1.  \]
\end{remark}

According to Corollary~\ref{co:crit}, a given MIC $(\Psi,c)$ is $s$-tight IC if and only if there exist $A > 0$ and $p \in \Delta_{d+1}^\circ$ such that for $j \neq k$,
\begin{equation*} 
    \langle \frac{c_j}{p_j} \psi_{j}, \frac{c_k}{p_k} \psi_{k} \rangle_{HS} = -A.
\end{equation*}
For quantum GGPT, these conditions take the form
\begin{align} \label{eq:quantum2}
    \langle P_{0}(\Pi_{j}), P_{0}(\Pi_{k}) \rangle_{HS} = -D A p_j p_k.
\end{align}

\subsection{Qubit}

Let us consider the qubit case of quantum GGPT, $\mathfrak{H} = \mathbb{C}^{2}$, then the real dimension is $d=3$. 
Each element of $S$ is of the form $\sqrt{2}\rho - I/\sqrt{2}$ for some density matrix $\rho$, and $\rho$ can be written in the Bloch representation, 
\begin{equation*}
    \rho = \frac{1}{2} (I + \vec{b} \cdot \vec{\sigma}), 
\end{equation*}
where $\vec{b} \in \mathbb{R}^3$ and $\vec{\sigma}$ denotes the vector of unnormalised Pauli matrices, i.e., $\lVert\sigma_{x}\rVert_{HS} = \lVert\sigma_{y}\rVert_{HS} = \lVert\sigma_{z}\rVert_{HS} = \sqrt{2}$. 
Positive semi-definiteness of $\rho$ is equivalent to $\lVert \vec{b} \rVert \leq 1$ and 
\begin{equation*}
    \sqrt{2}\rho - I/\sqrt{2} = \frac{1}{\sqrt{2}} \vec{b} \cdot \vec{\sigma},
\end{equation*}
therefore the set of states $S$ is the Bloch ball of radius $1$; indeed, $\vec b\mapsto\frac{1}{\sqrt2}\,\vec b\cdot\vec\sigma$ is an isometric isomorphism of $\mathbb{R}^3$ onto $V_0$, since $\tr \bigl((\vec a\cdot\vec\sigma)(\vec b\cdot\vec\sigma)\bigr)=2\,\vec a\cdot\vec b$.
The Bloch representation can be extended to POVMs. 
Namely, each effect of a POVM $\Pi = (\Pi_j)_{j=1}^n$ can be represented as $\Pi_j = c_j I + \vec{b}_j \cdot \vec{\sigma}$ and the conditions $\Pi_j \geq 0$ and $\sum\limits_{j=1}^{n} \Pi_{j} = I$ take the form
\begin{align} \label{eq:assumptions}
    \lVert \vec{b}_j \rVert \leq c_j, \quad \sum\limits_{j=1}^{n} c_j = 1, \quad \sum\limits_{j=1}^{n} \vec{b}_j = 0.
\end{align}
The projections of the effects $\Pi_{j}$ onto $V_{0}$ are $P_{0}(\Pi_{j}) = \vec{b}_{j} \cdot \vec{\sigma}$, hence the POVM $\Pi$ can be described by $(\Psi,c)$, where 
\begin{align*}
    c_j = \Pi_j(m) = \frac{\tr (\Pi_j)}{2}, \quad \psi_j = \frac{\sqrt{2}}{\tr(\Pi_j)} \vec{b}_{j} \cdot \vec{\sigma}.  
\end{align*}
Substituting the trace yields 
\begin{equation*}
    \psi_j = \frac{1}{\sqrt{2} c_j} \vec{b}_j \cdot \vec{\sigma}
\end{equation*}
and calculating the squared Hilbert--Schmidt norm of $\psi_j$, we obtain
\begin{equation*}
    \lVert \psi_j \rVert_{HS}^2 = \tr (\psi_j^2) = \frac{1}{2c_j^2} \tr \left((\lVert \vec{b}_j \rVert^2 I)\right) = \frac{1}{2c_j^2} \cdot 2\lVert \vec{b}_j \rVert^2 = \frac{\lVert \vec{b}_j \rVert^2}{c_j^2},
\end{equation*}
which reveals a simple proportionality
\begin{equation} \label{eq:psibc}
    \lVert \psi_j \rVert_{HS} = \frac{\lVert \vec{b}_j \rVert}{c_j}.
\end{equation} 

A MIC consists of $n=4$ effects and can be represented as a tetrahedron $\conv \Psi$ included in the Bloch ball $S$ such that $0 \in \interior (\conv \Psi)$ and $c$ are the barycentric coordinates of the origin, determined uniquely by this tetrahedron.

The criterion \eqref{eq:quantum2} for $s$-tightness of a MIC $(\Psi,c)$ takes the form
\begin{equation} \label{eq:qubit}
    \vec{b}_j \cdot \vec{b}_k = -A p_j p_k,
\end{equation}
for $j \neq k$, where $\vec{b}_j \cdot \vec{b}_k$ denotes the standard inner product in $\mathbb{R}^{3}$ and conditions \eqref{eq:assumptions} must be satisfied. 
Eqs. \eqref{eq:qubit} mean that the convex hull of $(1/p_j) \vec{b}_j$, $j = 1,\dots,4$, is an acute orthocentric simplex with the orthocentre at the origin.
Equivalently, a MIC $(\Psi,c)$ is $s$-tight with scalability constants $s$ if and only if the tetrahedron in $\mathbb{R}^3$ whose vertices are $(s_j^2 / c_j^2) \vec{b}_j$, $j=1,\dots,4$, is acute orthocentric with the orthocentre at $0$.

Conditions \eqref{eq:qubit} take simpler forms in the case of morphophoric or tight IC measurements. 
Let us recall that a measurement is morphophoric if and only if it is $s$-tight IC with the scalability constants $s_{j} \sim c_{j}$ and tight IC if and only if it is $s$-tight IC with the scalability constants $s_{j} \sim \sqrt{c_{j}}$. 
If the measurement is a MIC, the former corresponds to $p_j = 1/(d+1)$ and the latter to $p_j = c_j$.

\begin{corollary}
    In the qubit case, a MIC is morphophoric if and only if
    \begin{equation*}
        \vec{b}_j \cdot \vec{b}_k = \text{const}
    \end{equation*}
    and the constant is negative.
    These equations are satisfied if and only if the tetrahedron in $\mathbb{R}^3$ with vertices $\vec{b}_j$, $j=1,\dots,4$, is acute orthocentric with the orthocentre at $0$.
\end{corollary}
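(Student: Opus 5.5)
The plan is to specialise Corollary~\ref{co:crit}(b) to the qubit and then translate it into the language of orthocentric simplices through Theorem~\ref{th:orthocentric}(b) and Proposition~\ref{pr:barsigns}.

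\textbf{Step 1: compute the relevant inner products.} From the qubit representation we have $\psi_j = \frac{1}{\sqrt{2}c_j}\vec b_j\cdot\vec\sigma$, so $c_j\psi_j = \frac{1}{\sqrt2}\vec b_j\cdot\vec\sigma$. The map $\vec b\mapsto \frac{1}{\sqrt2}\vec b\cdot\vec\sigma$ is an isometry $\mathbb R^3\to V_0$, because $\tr\bigl((\vec a\cdot\vec\sigma)(\vec b\cdot\vec\sigma)\bigr)=2\,\vec a\cdot\vec b$. Hence
\begin{equation*}
\langle c_j\psi_j, c_k\psi_k\rangle_{HS} = \vec b_j\cdot\vec b_k \quad\text{for all } j,k.
\end{equation*}

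\textbf{Step 2: the first equivalence.} By Corollary~\ref{co:crit}(b), a MIC is morphophoric if and only if $\langle c_j\psi_j,c_k\psi_k\rangle=-A$ for all $j\neq k$ and some $A>0$. By Step 1 this says exactly that $\vec b_j\cdot\vec b_k$ equals a negative constant for all $j\neq k$. Equivalently, one may set $p_j=1/4$ in eq.~\eqref{eq:qubit}.

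\textbf{Step 3: the geometric reformulation.} First I would check that $\conv(\vec b_j)_{j=1}^4$ is a non-degenerate $3$-simplex. Since the $\vec b_j$ are positive multiples of the vectors corresponding to $\psi_j$, informational completeness gives that they span $\mathbb R^3$. Together with $\sum_j \vec b_j=0$, any three of them are linearly independent, so the four points are affinely independent.

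Moreover, $\sum_j\vec b_j=0$ shows that the barycentric coordinates of $0$ are all equal to $1/4$. Now I apply Theorem~\ref{th:orthocentric}(b) with $G=0\in\aff\Sigma=\mathbb R^3$: the simplex is orthocentric with orthocentre $0$ if and only if $\vec b_j\cdot\vec b_k$ is constant for $j\neq k$. That constant equals $-\sigma$, in the notation of eq.~\eqref{eq:sigma}. By Proposition~\ref{pr:barsigns} (see also Table~\ref{tab:orttable}), the simplex is acute exactly when $\sigma>0$, i.e.\ when the constant is negative. This rules out both the rectangular case ($\sigma=0$) and the obtuse case ($\sigma<0$).

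The only step needing any care is the non-degeneracy check in Step 3, which is needed to apply Theorem~\ref{th:orthocentric}. It follows directly from the IC assumption combined with the closure condition, so I do not expect a genuine obstacle.
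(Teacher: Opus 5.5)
Your proposal is correct and follows essentially the paper's route: the paper specialises the criterion of Corollary~\ref{co:crit}(b) in its qubit form~\eqref{eq:qubit} with $p_j=1/4$, which is what your isometry computation $\langle c_j\psi_j,c_k\psi_k\rangle_{HS}=\vec b_j\cdot\vec b_k$ does, and then reads off acute orthocentricity with orthocentre $0$ from Theorem~\ref{th:orthocentric}(b) and the sign of $\sigma$. One small tightening in Step~3: ``any three are linearly independent'' does not by itself give affine independence, but the needed fact follows from what you have, since spanning $\mathbb{R}^3$ forces every linear relation among the $\vec b_j$ to be a multiple of $\sum_j\vec b_j=0$, whose coefficients do not sum to zero.
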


\begin{corollary}
   In the qubit case, a MIC is tight IC if and only if
   \begin{equation*}
       \frac{\vec{b}_j \cdot \vec{b}_k}{c_j c_k} = \text{const}
    \end{equation*}
    and the constant is negative.
    These equations are satisfied if and only if the tetrahedron in $\mathbb{R}^3$ with vertices $(1/c_j) \vec{b}_j$, $j=1,\dots,4$, is acute orthocentric with the orthocentre at $0$.
\end{corollary}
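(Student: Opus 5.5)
The statement to prove is the last corollary: in the qubit case, a MIC is tight IC if and only if $\vec b_j\cdot\vec b_k/(c_jc_k)$ is a negative constant for $j\neq k$. This is equivalent to the tetrahedron with vertices $(1/c_j)\vec b_j$ being acute orthocentric with orthocentre at $0$. I will specialise Corollary~\ref{co:crit}(a), or equivalently Corollary~\ref{co:tightic}, through the isometric identification of $V_0$ with $\mathbb R^3$.

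\textbf{Translating to Bloch vectors.} From the qubit computations, $\psi_j=\frac{1}{\sqrt2 c_j}\vec b_j\cdot\vec\sigma$. The map $\vec b\mapsto\frac{1}{\sqrt2}\vec b\cdot\vec\sigma$ is an isometric isomorphism $\mathbb R^3\to V_0$, since $\tr((\vec a\cdot\vec\sigma)(\vec b\cdot\vec\sigma))=2\vec a\cdot\vec b$. Hence $\psi_j$ corresponds to $\vec b_j/c_j$, and
\[
\langle\psi_j,\psi_k\rangle_{HS}=\frac{\vec b_j\cdot\vec b_k}{c_jc_k}.
\]

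\textbf{First equivalence.} By Corollary~\ref{co:crit}(a), a MIC is tight IC if and only if $\langle\psi_j,\psi_k\rangle=-A$ for all $j\neq k$ and some $A>0$. Substituting the identity above, this says exactly that $\vec b_j\cdot\vec b_k/(c_jc_k)$ is a negative constant.

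\textbf{Geometric restatement.} Corollary~\ref{co:tightic} states that tight IC is equivalent to $\conv\Psi$ being acute orthocentric with orthocentre $0$. An isometry preserves orthocentricity, acuteness and the orthocentre, so this is the same as the tetrahedron with vertices $\vec b_j/c_j$ having these properties. This gives the second sentence of the corollary.

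For a direct check without Corollary~\ref{co:tightic}, the converse can be argued from Theorem~\ref{th:orthocentric}(b). If the off-diagonal inner products equal $-A<0$, the simplex is orthocentric with orthocentre $0$ and $\sigma=A>0$. It is non-rectangular, hence acute by Proposition~\ref{pr:barsigns}. Affine independence of the vertices comes from the MIC assumption together with $\sum c_j\psi_j=0$, which forces $0\in\interior\conv\Psi$.

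\textbf{Expected obstacle.} There is no real difficulty here. The only point needing care is the normalisation factor $\sqrt2$ and the $1/c_j$ scaling. These must be checked so that the Hilbert--Schmidt Gram matrix of $\Psi$ matches the Euclidean Gram matrix of $(\vec b_j/c_j)$ exactly, not merely up to a factor. A factor would not change the sign or the constancy, so the argument survives either way.
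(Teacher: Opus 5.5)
Your proposal is correct and follows essentially the paper's route. The paper obtains this corollary by setting $p_j=c_j$ in the qubit form $\vec b_j\cdot\vec b_k=-Ap_jp_k$ of the criterion in Corollary~\ref{co:crit}; this is the same specialisation of Corollary~\ref{co:crit}(a) and Corollary~\ref{co:tightic} that you carry out through the isometry $\vec b\mapsto\frac{1}{\sqrt2}\vec b\cdot\vec\sigma$, under which $\psi_j$ corresponds exactly to $\vec b_j/c_j$.
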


\subsubsection{Axially symmetric configuration (one-parameter subfamily)} \label{sec:axial}

To gain a deeper understanding of the permissible direction space in Theorem~\ref{th:s_tight_criterion}, it is instructive to analyse an example with high geometric symmetry.
Let us consider a one-parameter family of MIC measurements in the qubit case, such that the measurement vectors $\psi_j$, $j=1,\dots,4$, lie on the unit sphere. We further assume that one of the measurement vectors $(\psi_j)_{j=1}^4$ is fixed at the North Pole of the unit sphere $\mathbb{S}^2$,
\begin{equation*}
    \psi_1 = \begin{pmatrix} 0 \\ 0 \\ 1 \end{pmatrix},
\end{equation*}
and that the remaining three measurement vectors $\psi_2, \psi_3, \psi_4$ lie on the same latitude at a constant height $z_0$, 
\begin{equation*}
    \psi_k = \begin{pmatrix} \sqrt{1-z_0^2} \cos\theta_k \\ \sqrt{1-z_0^2} \sin\theta_k \\ z_0 \end{pmatrix}, \quad \text{for } k \in \{2,3,4\},
\end{equation*}
where $\theta_k$ denote the respective azimuthal angles on the sphere.

Applying the obtuse angle condition (Condition I in Theorem~\ref{th:s_tight_criterion}) to the pairs of measurement vectors containing the pole yields
\begin{equation*}
    \langle \psi_1, \psi_k \rangle = z_0 < 0, \quad \text{for } k \in \{2,3,4\},
\end{equation*}
which means that the analysed latitude $z_0$ must lie strictly on the lower hemisphere. Now, the cross-ratio rule (Condition II in Theorem~\ref{th:s_tight_criterion}) takes the form
\begin{align*}
    z_0 \langle \psi_3, \psi_4 \rangle = z_0 \langle \psi_2, \psi_4 \rangle = z_0 \langle \psi_2, \psi_3 \rangle.
\end{align*}
The above equations reduce to a relation of equiangularity between the measurement vectors lying on the latitude,
\begin{align*}
    \langle \psi_2, \psi_3 \rangle = \langle \psi_2, \psi_4 \rangle = \langle \psi_3, \psi_4 \rangle = q
\end{align*}
for some $q < 0$ (negativity of $q$ follows from the obtuse angle condition), which implies that the measurement vectors $\psi_2, \psi_3, \psi_4$ define the vertices of an equilateral triangle. 
Without loss of generality, we can make a rigid choice of azimuthal phases: $\theta_2 = 0$, $\theta_3 = \frac{2\pi}{3}$, $\theta_4 = \frac{4\pi}{3}$.
The inner product $q$ is related to $z_0$ by a polynomial relation,
\begin{align*}
    q = (1-z_0^2)\cos\left(\frac{2\pi}{3}\right) + z_0^2 = -\frac{1}{2}(1-z_0^2) + z_0^2 = \frac{3}{2}z_0^2 - \frac{1}{2}
\end{align*}
and the requirement $q < 0$ gives the final restriction on the value of $z_0$,
\begin{align*}
    \frac{3}{2}z_0^2 - \frac{1}{2} < 0 \implies z_0 \in \left(-\frac{1}{\sqrt{3}}, 0\right) \approx (-0.577, 0).
\end{align*} 

The barycentric coordinates of the origin with respect to $\Psi$, forming a probability distribution $c$, are uniquely determined by the closure condition and the normalisation as functions of $z_0$, 
\begin{align*}
    c_1(z_0) &= \frac{-z_0}{1 - z_0} = \frac{|z_0|}{1 + |z_0|}, \\
    c_k(z_0) &= \frac{1}{3(1 - z_0)} = \frac{1}{3(1 + |z_0|)}, \quad \text{for } k \in \{2,3,4\}.
\end{align*}

Within the determined range of possible structures, there exists a unique configuration of maximal symmetry, in which the mutual inner product of all four measurement vectors $\psi_1,\dots,\psi_4$ becomes identical, i.e., $z_0 = q$. 
This solution occurs for $z_0 = -1/3$, which corresponds to the situation where the configuration of directions generates a regular tetrahedron inscribed in the sphere.
This case represents the structure with the highest degree of symmetry, namely the symmetric informationally complete (SIC) measurement.

The geometric constraint that the measurement vectors $\psi_j$ lie on the unit sphere, $\lVert \psi_j \rVert_{HS}=1$, together with eq.~\eqref{eq:psibc}, implies $\lVert \vec{b}_j \rVert = c_j$.
Since the eigenvalues of the effect $\Pi_j$ are given by $c_j \pm \lVert \vec{b}_j \rVert$, this equality guarantees that the smallest eigenvalue is always exactly zero. 
Consequently, all POVMs in this one-parameter family are composed entirely of rank-1 effects (proportional to pure states).
Applying the derived relation $\lVert \vec{b}_j \rVert = c_j$, the Bloch vectors are parametrised as:
\begin{align}
    \vec{b}_1 &= \begin{pmatrix} 0 \\ 0 \\ \frac{-z_0}{1-z_0} \end{pmatrix}, \\
    \vec{b}_k &= \frac{1}{3(1-z_0)} \begin{pmatrix} \sqrt{1-z_0^2}\cos\theta_k \\ \sqrt{1-z_0^2}\sin\theta_k \\ z_0 \end{pmatrix}, \quad \text{for } k \in \{2,3,4\}.
\end{align}

As the parameter $z_0$ varies within the permissible bounds $z_0 \in (-\frac{1}{\sqrt{3}}, 0)$, the measurement changes accordingly:
\begin{itemize}
    \item \textbf{Flat limit ($z_0 \to 0^-$):} The latitude approaches the equator, the measurement vectors $\psi_2, \psi_3, \psi_4$ approach the equatorial plane. The Bloch vector $\vec{b}_1$ continuously vanishes, as $\lVert \vec{b}_1 \rVert=c_1 \to 0$ and thus the measurement vector $\psi_1$ no longer takes active part in the measurement process. In consequence, the measurement loses its informational completeness along the Z-axis.
    \item \textbf{SIC-POVM ($z_0 = -1/3$):} The point of highest symmetry: substituting $z_0 = -1/3$, we obtain a SIC-POVM -- namely, $\conv \Psi$ is a regular tetrahedron inscribed in the unit sphere. The probability distribution given by the barycentric coordinates of $0$ is unbiased ($c_j = 1/4$) and the Bloch vectors have equal lengths ($\lVert \vec{b}_j \rVert = 1/4$).
    \item \textbf{Obtuse angle limit ($z_0 \to -1/\sqrt{3}$):} The value of $c_1$ approaches its maximum ($c_1 \to \frac{1}{1+\sqrt{3}} \approx 0.366$), measurement vectors $\psi_2, \psi_3, \psi_4$ approach mutual orthogonality, because
    \[ \langle \psi_2, \psi_3 \rangle = \langle \psi_2, \psi_4 \rangle = \langle \psi_3, \psi_4 \rangle = \frac{3}{2}z_0^2 - \frac{1}{2} \to 0. \]
    The lengths of corresponding Bloch vectors $\vec{b}_2, \vec{b}_3, \vec{b}_4$, equal to $c_2,c_3,c_4$ respectively, shrink as $c_1$ increases. 
\end{itemize}

Figure~\ref{fig:three_examples} presents three examples of $s$-tight MICs belonging to the axially symmetric family, represented as the sets of measurement vectors $\Psi$ in the unit sphere. They correspond to different values of the parameter $z_0$: $z_0 = -0.1$ (close to the upper bound), $z_0 = -1/3$ (maximal symmetry, SIC-POVM) and $z_0 = -0.55$ (close to the lower bound). 
For $z_0 = -0.1$, the measurement vectors $\psi_2$, $\psi_3$, $\psi_4$ are close to the equatorial plane and almost perpendicular to $\psi_1$. 
In the case of SIC-POVM, $\conv \Psi$ is a regular tetrahedron inscribed in the unit sphere. 
In the third example, with $z_0 = -0.55$ (close to the obtuse angle limit), the measurement vectors $\psi_2$, $\psi_3$, $\psi_4$ are almost mutually orthogonal and the angles between $\psi_1$ and $\psi_k$, $k=2,3,4$, are almost as large as possible for an $s$-tight MIC measurement belonging to this family.

\begin{figure}[H] 
\begin{center}
\resizebox{\linewidth}{!}{%
\tdplotsetmaincoords{75}{115} 

\begin{tikzpicture}[font=\sffamily]

\begin{scope}[xshift=0cm]
    \node[anchor=south, font=\bfseries] at (0, 3.2) {Flat limit ($z_0 \to 0^-$)};
    \node[anchor=north, align=center, font=\small, text=black] at (0, -3.0) {$z_0 = -0.1$ \\ Polar effect vanishes};

    \draw[gray!50, line width=0.5pt] (0,0) circle (2);

    \begin{scope}[tdplot_main_coords, scale=2]
        \coordinate (O) at (0,0,0);
        \coordinate (P1) at (0, 0, 1);
        \coordinate (P2) at (0.995, 0, -0.1);
        \coordinate (P3) at (-0.497, 0.862, -0.1);
        \coordinate (P4) at (-0.497, -0.862, -0.1);

        \tdplotdrawarc[gray!30, dashed]{(0,0,0)}{1}{0}{360}{}{}

        \draw[black!30, dashed] (P1) -- (P2); \draw[black!30, dashed] (P1) -- (P3); \draw[black!30, dashed] (P1) -- (P4);
        \draw[black!30, dashed] (P2) -- (P3); \draw[black!30, dashed] (P2) -- (P4); \draw[black!30, dashed] (P3) -- (P4);

        \draw[-latex, gray!40] (0,0,-1.2) -- (0,0,1.2) node[anchor=south]{Z};

        \draw[-latex, violet, line width=1.2pt] (O) -- (P1);
        \draw[-latex, violet, line width=1.2pt] (O) -- (P2);
        \draw[-latex, violet, line width=1.2pt] (O) -- (P3);
        \draw[-latex, violet, line width=1.2pt] (O) -- (P4);

        \fill[violet] (P1) circle (1pt); \fill[violet] (P2) circle (1pt);
        \fill[violet] (P3) circle (1pt); \fill[violet] (P4) circle (1pt);
        \fill[black] (O) circle (0.8pt);
    \end{scope}
\end{scope}

\begin{scope}[xshift=6cm]
    \node[anchor=south, font=\bfseries] at (0, 3.2) {SIC-POVM ($z_0 = -1/3$)};
    \node[anchor=north, align=center, font=\small, text=black] at (0, -3.0) {$z_0 = -1/3$ \\ Regular tetrahedron};

    \draw[gray!50, line width=0.5pt] (0,0) circle (2);

    \begin{scope}[tdplot_main_coords, scale=2]
        \coordinate (O) at (0,0,0);
        \coordinate (S1) at (0, 0, 1);
        \coordinate (S2) at (0.943, 0, -0.333);
        \coordinate (S3) at (-0.471, 0.816, -0.333);
        \coordinate (S4) at (-0.471, -0.816, -0.333);

        \tdplotdrawarc[gray!30, dashed]{(0,0,0)}{1}{0}{360}{}{}
        
        \draw[black!30, dashed] (S1) -- (S2); \draw[black!30, dashed] (S1) -- (S3); \draw[black!30, dashed] (S1) -- (S4);
        \draw[black!30, dashed] (S2) -- (S3); \draw[black!30, dashed] (S2) -- (S4); \draw[black!30, dashed] (S3) -- (S4);

        \draw[-latex, gray!40] (0,0,-1.2) -- (0,0,1.2) node[anchor=south]{Z};

        \draw[-latex, violet, line width=1.2pt] (O) -- (S1);
        \draw[-latex, violet, line width=1.2pt] (O) -- (S2);
        \draw[-latex, violet, line width=1.2pt] (O) -- (S3);
        \draw[-latex, violet, line width=1.2pt] (O) -- (S4);

        \fill[violet] (S1) circle (1pt); \fill[violet] (S2) circle (1pt);
        \fill[violet] (S3) circle (1pt); \fill[violet] (S4) circle (1pt);
        \fill[black] (O) circle (0.8pt);
    \end{scope}
\end{scope}

\begin{scope}[xshift=12cm]
    \node[anchor=south, font=\bfseries] at (0, 3.2) {Obtuse angle limit ($z_0 \to -1/\sqrt{3}$)};
    \node[anchor=north, align=center, font=\small, text=black] at (0, -3.0) {$z_0 = -0.55$ \\ Polar weight maximal};

    \draw[gray!50, line width=0.5pt] (0,0) circle (2);

    \begin{scope}[tdplot_main_coords, scale=2]
        \coordinate (O) at (0,0,0);
        \coordinate (O1) at (0, 0, 1);
        \coordinate (O2) at (0.835, 0, -0.55);
        \coordinate (O3) at (-0.418, 0.723, -0.55);
        \coordinate (O4) at (-0.418, -0.723, -0.55);

        \tdplotdrawarc[gray!30, dashed]{(0,0,0)}{1}{0}{360}{}{}
        
        \draw[black!30, dashed] (O1) -- (O2); \draw[black!30, dashed] (O1) -- (O3); \draw[black!30, dashed] (O1) -- (O4);
        \draw[black!30, dashed] (O2) -- (O3); \draw[black!30, dashed] (O2) -- (O4); \draw[black!30, dashed] (O3) -- (O4);

        \draw[-latex, gray!40] (0,0,-1.2) -- (0,0,1.2) node[anchor=south]{Z};

        \draw[-latex, violet, line width=1.2pt] (O) -- (O1);
        \draw[-latex, violet, line width=1.2pt] (O) -- (O2);
        \draw[-latex, violet, line width=1.2pt] (O) -- (O3);
        \draw[-latex, violet, line width=1.2pt] (O) -- (O4);

        \fill[violet] (O1) circle (1pt); \fill[violet] (O2) circle (1pt);
        \fill[violet] (O3) circle (1pt); \fill[violet] (O4) circle (1pt);
        \fill[black] (O) circle (0.8pt);
    \end{scope}
\end{scope}

\end{tikzpicture}
}%
\caption{
    Three examples of axially symmetric measurements, represented as the tetrahedra $\conv \Psi$ for $z_0 = -0.1$ (close to the flat limit), $z_0 = -1/3$ (SIC-POVM) and $z_0 = -0.55$ (close to the obtuse angle limit).
}
\label{fig:three_examples}
\end{center}
\end{figure}

Table~\ref{tab:symmetric_subfamily} presents the parameters of the analysed one-parameter subfamily of $s$-tight MIC measurements for the qubit, depending on the value of the structural parameter $z_0 \in (-\frac{1}{\sqrt{3}},0)$. 
Each measurement is described by a pair $(\Psi,c)$, where $c \in \Delta_{d+1}^\circ$ is uniquely determined by $\Psi$. 
The unique probability distribution $p$ such that the tetrahedron $\conv (\frac{c}{p}\Psi)$ is acute orthocentric with the orthocentre at $0$ is given by the formula \eqref{eq:probabilities} and $A>0$ is the negated obtuseness of this orthocentric tetrahedron, as in Theorem~\ref{th:equiv}.

The upper bound on the value of $z_0$ follows from the obtuse angle condition for the pairs of measurement vectors containing $\psi_1$ and the informational completeness of the measurement. In the flat limit, the measurement vectors $\psi_2, \psi_3, \psi_4$ approach the equatorial plane and the volume of the sub-tetrahedron $\conv (\{0,\psi_2,\psi_3,\psi_4\})$ vanishes, hence $c_1 \to 0$.
The configuration of maximal symmetry corresponds to $z_0 = -1/3$ and the MIC measurement $(\Psi,c)$ being a SIC-POVM. In this case, the simplex $\conv \Psi$ is a regular tetrahedron inscribed in the unit sphere and the measurement is unbiased, $c_j = 1/4$ for $j=1,\dots,4$.
The lower bound on the value of $z_0$ follows from the obtuse angle condition for $\psi_k$, $k=2,3,4$. Namely, as $z_0 \to -1/\sqrt{3}$, these vectors approach mutual orthogonality, so that $\langle \psi_j,\psi_k\rangle \to 0$ for $j,k \in \{2,3,4\}$, $j\neq k$. In this limit $p_1 \to 1$ and $p_k \to 0$ for $k=2,3,4$. Note that, under the normalisation $\lVert \psi_j \rVert_{HS}=1$ adopted here, the remaining vertices $\frac{c_k}{p_k}\psi_k$, $k=2,3,4$, recede to infinity and $A$ diverges. The simplex $\conv(\frac{c}{p}\Psi)$ degenerates to an unbounded trirectangular tetrahedron. 

\begin{table}[H]
\centering
\begin{tabular}{Sl Sc Sc Sc Sc}
\toprule

\multirow{3}{*}{\textbf{Parameter}} &
\multirow{2}{*}{\textbf{Flat limit}} &
\multirow{2}{*}{\textbf{SIC-POVM}} &
\textbf{Obtuse angle} & 
\multirow{2}{*}{\textbf{General formula}} \\

&
& 
&
\textbf{limit} &
\\

&
$z_0 \to 0^-$ &
$z_0 = -1/3$ &
$z_0 \to -1/\sqrt{3}$ &
$z_0 \in (-1/\sqrt{3}, 0)$ \\ 
\midrule

$\psi_1$ & 
$\begin{pmatrix} 0 \\ 0 \\ 1 \end{pmatrix}$ & 
$\begin{pmatrix} 0 \\ 0 \\ 1 \end{pmatrix}$ & 
$\begin{pmatrix} 0 \\ 0 \\ 1 \end{pmatrix}$ & 
$\begin{pmatrix} 0 \\ 0 \\ 1 \end{pmatrix}$ \\ 

$\psi_k$ ($k=2,3,4$) & 
$\begin{pmatrix} \cos\theta_k \\ \sin\theta_k \\ 0 \end{pmatrix}$ & 
$\begin{pmatrix} \frac{\sqrt{8}}{3}\cos\theta_k \\ \frac{\sqrt{8}}{3}\sin\theta_k \\ -1/3 \end{pmatrix}$ & 
$\begin{pmatrix} \frac{\sqrt{2}}{\sqrt{3}}\cos\theta_k \\ \frac{\sqrt{2}}{\sqrt{3}}\sin\theta_k \\ -1/\sqrt{3} \end{pmatrix}$ & 
$\begin{pmatrix} \sqrt{1-z_0^2}\cos\theta_k \\ \sqrt{1-z_0^2}\sin\theta_k \\ z_0 \end{pmatrix}$ \\ 
\midrule

$c_1$ & $0$ & $1/4$ & $\frac{1}{1+\sqrt{3}} \approx 0.366$ & $\frac{|z_0|}{1+|z_0|}$ \\
$c_k$ ($k=2,3,4$) & $1/3$ & $1/4$ & $\frac{1}{3+\sqrt{3}} \approx 0.211$ & $\frac{1}{3(1+|z_0|)}$ \\ 
\midrule

$p_1$ & $0$ & $1/4$ & $\to 1$ & $\frac{2z_0^2}{1-z_0^2}$ \\
$p_k$ ($k=2,3,4$) & $1/3$ & $1/4$ & $\to 0$ & $\frac{1-3z_0^2}{3(1-z_0^2)}$ \\ 
\midrule

$A$ & $1/2$ & $1/3$ & $\to +\infty$ & $\frac{(1+z_0)^2}{2(1-3z_0^2)}$ \\ 
\bottomrule
\end{tabular}
\caption{Geometric and physical parameters of the axially symmetric configuration for the qubit.}
\label{tab:symmetric_subfamily}
\end{table}

\begin{remark} \label{re:curran}
The axially symmetric configurations coincide exactly with the skewed SIC measurements introduced by Curran~\cite{curran2026quantum} for the qubit ($D=2$), albeit parametrised differently and with a crucial restriction on the parameter range.
\end{remark}

\subsection{Equiangular MIC measurements} \label{sec:equiangular}

The class of equiangular MIC measurements, introduced in \cite[Section 3.4]{debrota2021varieties} for the quantum case, can be generalised to an arbitrary GGPT in a natural way. In the standard language we would say that the measurement $(g_j)_{j=1}^{d+1}$ is an equiangular MIC if there exist $\alpha,\zeta >0$ such that
\begin{equation} 
    \langle g_i, g_j \rangle = \alpha \delta_{ij} + \zeta
\end{equation}
for $i,j=1,\dots,d+1$. In the new approach this definition translates as follows: a MIC measurement $(\Psi,c)$ is called \textit{equiangular} if there exist $\beta,\xi >0$ such that
\begin{equation} \label{eq:gequiangular}
    \langle \psi_i, \psi_j \rangle_0 = \frac{\beta \delta_{ij} + \xi}{c_ic_j}-1
\end{equation}
for $i,j=1,\dots,d+1$. In order to see the equivalence of these definitions it suffices to note how these inner products are related if we put $g_j = T( \frac{c_j}{\sqrt{\mu}} \psi_j + \frac{c_j}{\mu}m )$ as in Proposition~\ref{pr:effect}:
\begin{equation*} 
\begin{aligned} 
    \langle g_i, g_j \rangle &= \left\langle T( \frac{c_i}{\sqrt{\mu}} \psi_i + \frac{c_i}{\mu}m ), T( \frac{c_j}{\sqrt{\mu}} \psi_j + \frac{c_j}{\mu}m ) \right\rangle = \frac{c_i c_j}{\mu} \left\langle \psi_i + \frac{1}{\sqrt{\mu}}m, \psi_j + \frac{1}{\sqrt{\mu}}m \right\rangle \\ &= \frac{c_i c_j}{\mu} \left( \langle \psi_i, \psi_j \rangle_0 + \frac{\lVert m \rVert^2}{\mu} \right) = \frac{c_i c_j}{\mu} \left( \langle \psi_i, \psi_j \rangle_0 + 1 \right).
\end{aligned}
\end{equation*}
In particular, the constants from the definitions are related in the following way: $\beta=\alpha\mu$ and $\xi=\zeta\mu$.

An equiangular MIC is necessarily unbiased. Indeed, from the closure condition we have
\begin{equation*}
   0=\left\langle\sum_{i=1}^{d+1}c_i\psi_i,c_j\psi_j\right\rangle_0 =\sum_{i=1}^{d+1}(\beta\delta_{ij}+\xi-c_ic_j)=\beta+(d+1)\xi-c_j
\end{equation*}
and thus $c_j$ does not depend on $j$, so that $c_j=1/(d+1)$. Consequently, $\xi=1/(d+1)^2-\beta/(d+1)$. Accordingly, eq.~\eqref{eq:gequiangular} takes the form
\begin{equation*}
    \langle \psi_i, \psi_j \rangle_0=(d+1)^2(\beta \delta_{ij}+1/(d+1)^2-\beta/(d+1)) -1=(d+1)^2\beta\delta_{ij}-(d+1)\beta
\end{equation*}
and so, for $i \neq j$ we obtain
\begin{equation} \label{eq:psiequiangular}
    \langle \psi_i, \psi_j \rangle_0 = -(d+1)\beta<0.  
\end{equation} 

   This fact guarantees that the simplex $\conv \Psi$ is acute orthocentric. 
    Finally, it follows from Corollary~\ref{co:crit} that every equiangular MIC measurement is tight IC; being also unbiased, it is therefore morphophoric.

\section{Conclusions} \label{sec:conclusions}

\subsection{Summary} \label{sec:concl-summary}

The operational notion of tightness thus admits a geometric counterpart in classical Euclidean geometry. For a minimal IC measurement $(\Psi,c)$ in an arbitrary GGPT, being \mbox{$s$-tight} is equivalent to the concurrence of the altitudes of a suitably rescaled measurement simplex, with the orthocentre located at the origin of the state space, and, equivalently, to the homothetic self-duality of this simplex. 
The correspondence works in both directions: every acute orthocentric simplex with the orthocentre at the origin arises in this way, generating a family of minimal $s$-tight IC measurements that contains, up to overall rescaling, exactly one tight IC measurement. 
In this dictionary the known special classes acquire transparent geometric signatures: tight IC measurements correspond to trivial rescaling, whereas morphophoric ones are precisely those whose rescaled simplex is regular. Moreover, the class of minimal $s$-tight IC measurements is governed entirely by the measurement directions, through the obtuse angle condition~(I) and the cross-ratio condition~(II).

A configuration studied since the nineteenth century for its intrinsic elegance thus re-emerges as the geometric form of operational tightness. Read from
geometry to physics, the correspondence organises the landscape of
minimal IC measurements and reduces their design to a constructive
procedure: choose an admissible set of directions, fix the target
probabilities, and anchor the resulting simplex in the state space. Read from physics to geometry, it raises the question of which orthocentric simplices can be anchored in a given convex body -- see problem~(ii) below.

\subsection{Skeletons and anchoring} \label{sec:skeletons}

We close by summarising, in condensed form, the complete classification of minimal $s$-tight IC measurements that our results entail.\footnote{Full details of the arguments sketched in this subsection will be given in a forthcoming work.}

\emph{The skeleton.}
By Theorem~\ref{th:equiv} and Remark~\ref{re:scalability}, every minimal $s$-tight IC measurement $(\Psi,c)$ determines a unique vector $p\in\Delta^{\circ}_{d+1}$ -- its \emph{skeleton} -- for which $\conv(\frac{c}{p}\Psi)$ is an acute orthocentric simplex with the orthocentre at $0$; equivalently, $p$ is the vector of barycentric coordinates of the orthocentre of the simplex.
The skeleton encodes the entire angular structure of the measurement: passing to the unit directions \mbox{$\eta_i\coloneqq\psi_i/\lVert\psi_i\rVert$,} $i=1,\dots,d+1$, eq.~\eqref{eq:psigram} gives
\begin{equation} \label{eq:fingerprint}
\langle\eta_i,\eta_j\rangle=-t_it_j \quad (i\neq j), \qquad t_i\coloneqq\sqrt{\frac{p_i}{1-p_i}}.
\end{equation}
Conversely, conditions (I)--(II) of Theorem~\ref{th:s_tight_criterion} force precisely this form: the cross-ratio rule factorises the off-diagonal Gram entries as $-t_it_j$ with $t_k^{2}=-g_{ik}g_{kj}/g_{ij}>0$ (cf.~\eqref{eq:Kutyniokcondition}), and the requirement that $d+1$ unit vectors realising this Gram matrix fit into $\mathbb{R}^{d}$ amounts, by the matrix determinant lemma, exactly to $\sum_{i=1}^{d+1} p_i=1$ for $p_i\coloneqq t_i^{2}/(1+t_i^{2})$. 
Since pairwise-obtuse vectors have the property that any $d$ of them are linearly independent, the sign $\varepsilon \coloneqq \sgn\det(\eta_1,\dots,\eta_d)$ is well defined and invariant under rotations. 
To summarise: modulo rotations, the admissible direction configurations of minimal $s$-tight IC measurements are classified bijectively by the pair $(p,\varepsilon)\in\Delta^{\circ}_{d+1}\times\{\pm1\}$. For each $p$ we fix once and for all a \emph{reference configuration}
$(\eta_1,\dots,\eta_{d+1})$ realising \eqref{eq:fingerprint}; by the
above, every configuration with the same skeleton is then of the form
$g\eta_1,\dots,g\eta_{d+1}$ for some $g\in O(d)$.
The uniform skeleton $p_i\equiv 1/(d+1)$ labels the morphophoric (regular) family, $p=c$ singles out the tight IC measurement within each family, and skeletons with repeated entries correspond to configurations with extra symmetry, such as the axially symmetric family of Section~\ref{sec:axial}. 
In this sense the skeleton determines the structure of the $s$-tight MIC measurements near the white noise POVMs.
Indeed, conditions (I)--(II) are invariant under positive rescaling of the individual vectors, so the admissible tuples form a cone, while $S^{\star}$ contains a ball $B(0,r)$: every tuple of length below $r$ is therefore admissible in one GGPT exactly when it is admissible in another.
Consequently, in a neighbourhood of the white noise POVMs the set of minimal $s$-tight IC measurements is the same in all GGPTs of a given dimension -- determined by $d$ alone, through the moduli space $\Delta^{\circ}_{d+1}\times\{\pm1\}$ of directions and the freely varying lengths -- while the individual state space makes itself felt only globally, by truncating those lengths.

\begin{figure}[H]
\centering
\begin{tikzpicture}[scale=3.2, >=stealth]
\draw[->, gray] (0,0) -- (1.75,0) node[right, font=\scriptsize, text=gray] {$\lambda_j$};
\draw[->, gray] (0,0) -- (0,1.25) node[above, font=\scriptsize, text=gray] {$\lambda_k$};
\fill[violet!10] (0,0) rectangle (1.45,0.95);
\draw[violet!70!black, thick] (0,0) rectangle (1.45,0.95);
\node[violet!70!black, font=\small] at (0.42,0.72) {$\Sigma(g)$};
\draw[->, thick, green!40!black] (0,0) -- (1.36,1.166);
\draw[green!40!black, thick] (0,0) -- (1.108,0.95);
\filldraw[green!40!black] (1.108,0.95) circle (0.35pt);
\node[green!40!black, font=\scriptsize, above left=-2pt] at (1.14,0.97) {maximal tight IC};
\node[green!40!black, font=\scriptsize, rotate=40.6] at (0.72,0.54) {tight IC ray};
\filldraw[violet!70!black] (1.45,0.95) circle (0.35pt);
\node[violet!70!black, font=\scriptsize, right=1pt] at (1.45,0.93) {maximal simplex};
\filldraw (0,0) circle (0.35pt);
\node[font=\scriptsize, below left=-1pt] at (0.02,0) {white noise POVMs};
\end{tikzpicture}
\caption{Anatomy of the family of minimal $s$-tight IC measurements generated by a fixed acute orthocentric simplex (Theorem~\ref{th:ort_scal}), drawn schematically in the lengths $\lambda_1,\ldots,\lambda_{d+1}$ of the measurement vectors (two of the coordinates shown). 
For a fixed anchoring $g$ of the directions the admissible lengths fill a box $\Sigma(g)$ determined by $S^{\star}$ (shaded); the tight IC measurements ($c=p$, Corollary~\ref{co:tightic_c}) form a ray, which leaves the box at the maximal tight IC measurement, while the outer corner marks the maximal simplex, at which all lengths attain their bounds simultaneously; the white noise POVMs sit in the degenerate limit $\lambda\to 0$, which is not itself a member of the family (the limiting measurements are no longer informationally complete).
Only one such box is shown: rotating the configuration yields a continuum of boxes, its mirror reflections yield another, and whether the two continua describe the same physical family is the chirality question addressed in a forthcoming paper.}
\label{fig:box}
\end{figure}

\emph{Anchoring.}
A skeleton is state-space-independent mathematical data; to become a physical measurement it must be anchored in a concrete state space $S$. 
An \emph{anchoring} is a choice of $g\in O(d)$, carrying the reference directions to $g\eta_i$, together with lengths $\lambda_i>0$; the vectors $\psi_i\coloneqq\lambda_i\,g\eta_i$ (with $c$ recovered from $p$ and $\lambda$) form a measurement if and only if $\psi_i\in S^{\star}$, i.e., if each length does not exceed the radial function of the dual body, $\lambda_i\leq 1/h_S(-g\eta_i)$, where $h_S(\eta)\coloneqq\max_{x\in S}\langle x,\eta\rangle$.
Each anchoring thus carries its own box of admissible lengths, $\Sigma(g)\coloneqq\prod_{i}\bigl(0,\,1/h_S(-g\eta_i)\bigr]$, whose shape varies continuously with $g$ through the support function of $S$, and the set of all minimal $s$-tight IC measurements with skeleton $p$ is the bundle of boxes
\begin{equation*}
\mathcal{M}_p\;\cong\;\bigl\{(g,\lambda)\,:\,g\in O(d),\ \lambda\in\Sigma(g)\bigr\},
\end{equation*}
within each box the tight IC measurements forming a single ray, $\lambda_i\propto 1/t_i$ -- the geometric picture behind Theorem~\ref{th:ort_scal}, Remark~\ref{re:class} and Corollary~\ref{co:tightic_c}, illustrated in Figure~\ref{fig:box}. 

Two anchored measurements are physically identical when a symmetry of the theory combined with a relabelling of the outcomes carries one onto the other: \mbox{$(g,\lambda)\sim(u\,g\,h_{\pi},\,\pi\cdot\lambda)$,} where $u$ runs over the orthogonal symmetries of $S$ and $\pi$ over the permutations preserving $p$, implemented by $h_{\pi}\in O(d)$ with $\det h_{\pi}=\sgn\pi$. 
The physical classes of measurements with skeleton $p$ therefore correspond to the points of the double quotient of $O(d)$ by these two groups, fibred by the boxes. 
Altogether, a minimal $s$-tight IC measurement is fully determined by its intrinsic skeleton $p\in\Delta^{\circ}_{d+1}$ together with the class of the pair $(g,\lambda)$ modulo the identifications above -- the first datum independent of the state space, the second recording how the configuration sits inside it.

\subsection{The fiducial perspective} \label{sec:concl-fiducial}

As stated at the outset, tightness and morphophoricity are relative to the Euclidean structure with which the underlying GGPT is endowed.
Indeed, let $(\Psi,c)$ be an arbitrary minimal IC measurement, and let
\begin{equation} \label{eq:linear-parts}
\ell_j \coloneqq c_j\langle\psi_j,\,\cdot\,\rangle_0 = \pi_j - c_j, \qquad j=1,\dots,d+1,
\end{equation}
be the linear parts of its outcome functionals~\eqref{eq:pj}. 
Suppose that we treat this measurement as \emph{fiducial} -- i.e., as a trusted reference standard in the spirit of QBism, where a fixed IC measurement serves as the informational ``coordinate system'' through which all states and all other measurements are expressed \cite{fuchs2013quantum} -- and fix a \emph{calibration}: a target vector $P\in\Delta^{\circ}_{d+1}$ of reference probabilities together with a scale $A>0$.

The \emph{fiducial scalar product} of this calibration,
\begin{equation} \label{eq:fiducial-ip}
\langle x,y\rangle_{P,A} \coloneqq \frac{1}{A}\sum_{j=1}^{d+1}\frac{1}{P_j}\,\ell_j(x)\,\ell_j(y), \qquad x,y\in V_0,
\end{equation}
is then the Euclidean structure that the fiducial measurement itself induces on the state space. 
It depends only on the measurement and the calibration, leaves all outcome probabilities unchanged, and renders the measurement $s$-tight with skeleton $P$. 
With respect to $\langle\cdot,\cdot\rangle_{P,A}$, the Riesz vectors of the functionals \eqref{eq:linear-parts} form, after the rescaling of Theorem~\ref{th:equiv}, an acute orthocentric simplex with the orthocentre at $0$. Moreover, it is the only inner product with this property.

\begin{proposition} \label{pr:fiducial-unique}
Let $(\Psi,c)$ be a minimal IC measurement, $P\in\Delta^{\circ}_{d+1}$ and $A>0$.
If, with respect to an inner product $g$ on $V_0$, the measurement is $s$-tight with skeleton $P$ and frame bound $A$, then $g=\langle\cdot,\cdot\rangle_{P,A}$. 
Consequently, the inner products compatible with a given minimal IC measurement are parametrised, up to an overall scale, bijectively by $\Delta^{\circ}_{d+1}$.
\end{proposition}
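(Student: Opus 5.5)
The plan is to show that, under the hypothesis, the tight-frame identity for the rescaled measurement vectors forces $g$ to coincide with the formula \eqref{eq:fiducial-ip}. Fix an inner product $g$ on $V_0$ and write $\psi_j^g$ for the measurement vectors computed with respect to $g$. The outcome functionals $\pi_j$ are intrinsic (they do not depend on the inner product), so their linear parts $\ell_j$ of \eqref{eq:linear-parts} are fixed, and $c_j$ is fixed as $\pi_j(0)$. Hence $\psi_j^g$ is the $g$-Riesz vector of $\ell_j/c_j$, i.e.\ $\ell_j(x)=c_j\,g(\psi_j^g,x)$ for all $x$. The closure condition holds automatically, since $\sum_j\ell_j=\sum_j\pi_j-1=0$.

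By hypothesis the measurement is $s$-tight with respect to $g$, with skeleton $P$ and frame bound $A$. By Theorem~\ref{th:equiv} and the proof of its implication (b)$\Rightarrow$(a), this means that $\Phi\coloneqq\frac{c}{P}\Psi^g$ satisfies Lemma~\ref{le:mainlemma} with $p=P$. Consequently $\sqrt{P}\,\Phi$ is a $g$-tight frame with bound $A$. The identification of $A$ in Lemma~\ref{le:mainlemma} as the frame bound, the obtuseness and the homothety ratio fixes the normalisation of $s$, so ``frame bound $A$'' is unambiguous; I would state this normalisation explicitly. Writing the tight frame condition in the form \eqref{eq:tightframe} gives
\[
A\,g(x,x)=\sum_j P_j\,g(\phi_j,x)^2=\sum_j P_j\frac{c_j^2}{P_j^2}\,g(\psi_j^g,x)^2=\sum_j\frac{1}{P_j}\,\ell_j(x)^2 .
\]
Polarisation then yields $g=\langle\cdot,\cdot\rangle_{P,A}$, which proves uniqueness.

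For the parametrisation claim I also need existence: for every $P$ and $A$, the measurement is $s$-tight with respect to $\langle\cdot,\cdot\rangle_{P,A}$ with skeleton $P$. First, $\langle\cdot,\cdot\rangle_{P,A}$ is positive definite, because the $\ell_j$ span $V_0^*$ by informational completeness. Reversing the computation above, $\sqrt{P}\,\Phi$ is tight with bound $A$ with respect to this inner product, and the closure condition $\sum_j P_j\phi_j=\sum_j c_j\psi_j=0$ holds. Lemma~\ref{le:mainlemma} then shows that $\conv\Phi$ is acute orthocentric with orthocentre at $0$, and Theorem~\ref{th:equiv}(b) gives $s$-tightness with skeleton $P$. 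The skeleton is uniquely determined by Theorem~\ref{th:equiv}, because the scalability constants are unique by Remark~\ref{re:scalability}, which needs $d\ge2$.

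The map $(P,A)\mapsto\langle\cdot,\cdot\rangle_{P,A}$ scales by $1/A$ in $A$. Distinct $P$ give inner products that are not proportional, since the skeleton is an invariant of any compatible inner product and rescaling the inner product leaves the skeleton unchanged. This gives the bijection with $\Delta^\circ_{d+1}$ up to overall scale.

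The main obstacle is the bookkeeping of normalisations. One must check that the scalars defining $\psi_j^g$ (the $c_j$) are $g$-independent, and that ``frame bound $A$'' refers to $\sqrt{P}\,\frac{c}{P}\Psi$ rather than to an arbitrarily scaled $s\Psi$. I would settle the second point first by fixing $s=c/\sqrt{P}$, as in the proof of Theorem~\ref{th:equiv}.
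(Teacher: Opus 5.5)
Your proof is correct and follows the paper's argument: the tight-frame identity for the rescaled frame $\sqrt{P}\,\frac{c}{P}\Psi$, written in terms of the intrinsic functionals $\ell_j$, gives $A\,g(x,y)=\sum_j P_j^{-1}\ell_j(x)\ell_j(y)$, whose right-hand side does not depend on $g$. Your additional checks fill in what the paper compresses into ``the second statement follows'': positive definiteness of $\langle\cdot,\cdot\rangle_{P,A}$ via informational completeness, $s$-tightness with skeleton $P$ with respect to it, and scale-invariance of the skeleton, which gives injectivity.
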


\begin{proof}
The frame identity for the rescaled frame of Theorem~\ref{th:equiv} yields $A\,g(x,y)=\sum_{j=1}^{d+1}P_j^{-1}\,\ell_j(x)\,\ell_j(y)$ for all $x,y\in V_0$, whose right-hand side does not refer to $g$ at all; by \eqref{eq:fiducial-ip} this is $A\,\langle x,y\rangle_{P,A}$. The second statement follows, as $P$ ranges over $\Delta^{\circ}_{d+1}$.
\end{proof}

In particular, every minimal IC measurement becomes tight IC under one choice of the inner product ($P=c$) and morphophoric under another ($P$ uniform). Read through Theorem~\ref{th:equiv}, this has a purely geometric consequence:

\begin{corollary} \label{co:every-simplex}
Every $d$-simplex containing the origin in its interior is an acute orthocentric simplex with the orthocentre at $0$ for a suitably chosen Euclidean structure.
\end{corollary}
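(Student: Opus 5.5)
Let $\Sigma=\conv\{R_1,\dots,R_{d+1}\}\subset\mathcal H$ be a $d$-simplex with $0\in\interior\Sigma$. We must exhibit an inner product $g$ on $\mathcal H$ for which $\Sigma$ is acute orthocentric with orthocentre $0$. Let $P\in\Delta^{\circ}_{d+1}$ be the barycentric coordinates of the origin, so $\sum_j P_jR_j=0$; all $P_j>0$ because $0$ is interior. We may then invoke Lemma~\ref{le:mainlemma} with $\Phi=(R_j)$ and $p=P$. Condition (b) there (with respect to $g$) holds exactly when $\sqrt{P}\Phi$ is a tight frame for $(\mathcal H,g)$. So the task reduces to finding $g$ that makes $(\sqrt{P_j}R_j)_{j}$ a tight frame.

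The natural choice is the inverse frame operator. Let $\langle\cdot,\cdot\rangle$ be any reference inner product. Define $S\coloneqq\sum_j P_j R_jR_j^{\top}$, i.e.\ $Sx=\sum_jP_j\langle x,R_j\rangle R_j$. Since $R_1,\dots,R_d$ already span $\mathcal H$, $S$ is symmetric positive definite. Set $g(x,y)\coloneqq\langle S^{-1}x,y\rangle$.

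The frame operator of $\sqrt P\Phi$ with respect to $g$ is
\[
T_gx=\sum_jP_j\,g(x,R_j)R_j=\sum_jP_j\langle S^{-1}x,R_j\rangle R_j=S S^{-1}x=x .
\]
Hence $\sqrt P\Phi$ is a Parseval (tight, bound $1$) frame in $(\mathcal H,g)$, and Lemma~\ref{le:mainlemma}(a)$\Rightarrow$(b) gives the claim. Rescaling $g$ by $1/A$ yields any prescribed frame bound $A$.

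Equivalently, one can phrase this through the paper's own machinery. Choose a GGPT with $V_0=\mathcal H$ and $S$ a large ball, and rescale the $R_j$ into $S^\star$ if necessary; this does not affect orthocentricity up to a common factor. Then $(\Psi,c)=(R_j,P_j)$ is a MIC. Proposition~\ref{pr:fiducial-unique} supplies $\langle\cdot,\cdot\rangle_{P,A}$ with $P=c$, under which the measurement is tight IC. Corollary~\ref{co:tightic} then gives that $\conv\Psi=\Sigma$ is acute orthocentric with orthocentre $0$.

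The only point needing care is that the fiducial product \eqref{eq:fiducial-ip} genuinely makes the frame tight. This comes down to the identity $T_g=\mathrm{id}$ above, together with positive definiteness of $S$, which follows from IC (spanning). I expect no real obstacle; the main subtlety is checking that Lemma~\ref{le:mainlemma} is applied with respect to the new inner product throughout, including the notion of altitude.
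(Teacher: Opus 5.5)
Your direct argument is correct and complete. $S=\sum_jP_jR_jR_j^{\top}$ is positive definite because the $R_j$ span, and $g=\langle S^{-1}\cdot,\cdot\rangle$ makes $(\sqrt{P_j}R_j)_j$ a Parseval frame. Lemma~\ref{le:mainlemma}(a)$\Rightarrow$(b), applied in $(\mathcal H,g)$, then gives the claim. Concretely, the $g$-Gram matrix of the vertices is $\mathrm{diag}(1/P_j)-\mathbf{1}\mathbf{1}^{\top}$, so Theorem~\ref{th:orthocentric}(b) applies with $\sigma=1$.

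The paper argues differently. It reads the corollary off the fiducial construction: view the simplex as a MIC, pass to $\langle\cdot,\cdot\rangle_{P,A}$ with $P=c$, under which the measurement is tight IC, and invoke Theorem~\ref{th:equiv} or Corollary~\ref{co:tightic}. Your route isolates the only linear algebra involved (the inverse frame operator turns a frame into a Parseval frame) and needs none of the GGPT scaffolding. It also gives uniqueness up to scale for free: if $g=\langle G\cdot,\cdot\rangle$ makes $\sqrt{P}\Phi$ tight with bound $A$, then $SG=A\,\mathrm{id}$. The paper's route instead buys the interpretation of this Euclidean structure as the one induced by a reference measurement.

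Your ``equivalent'' second paragraph does not work as written. The closing claim that the fiducial product is tight because $T_g=\mathrm{id}$ conflates two different inner products. For the MIC $(R_j,P_j)$ you have $\ell_j=P_j\langle R_j,\cdot\rangle$, so $\langle x,y\rangle_{P,A}=\frac1A\sum_jP_j\langle R_j,x\rangle\langle R_j,y\rangle=\frac1A\langle Sx,y\rangle$. This is essentially the inverse of your $g$. In this product the frame operator of $(\sqrt{P_j}R_j)_j$ is $\frac1A S^2$, which is not a multiple of the identity unless $S$ is.

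The measurement does become tight IC, but only because its measurement vectors are now the $\langle\cdot,\cdot\rangle_{P,A}$-Riesz representers of the fixed functionals, namely $AS^{-1}R_j$. Corollary~\ref{co:tightic} then makes $\conv(AS^{-1}R_j)$ orthocentric. That is a linear image of $\Sigma$, not $\Sigma$ itself.

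To obtain $\Sigma$ via the paper's route, you must start from the MIC with vectors $AS^{-1}R_j$, up to a common factor. Its fiducial product is then $A\,g$ and its Riesz vectors are the $R_j$. Alternatively, pull the product back along $AS^{-1}$. Either way you recover your $g$.

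Two smaller points. Proposition~\ref{pr:fiducial-unique} is a uniqueness statement, not an existence one. Frame bound $A$ is obtained by rescaling $g$ by $A$, not by $1/A$. Dropping that second paragraph leaves a correct proof.
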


A systematic development of this fiducial point of view, in which the Euclidean structure is not presupposed but constructed from the reference measurement and its calibration, is the subject of a companion work in preparation~\cite{slomczynski2026fiducial}.

\subsection{Open problems} \label{sec:concl-open}

Several natural problems remain open.
\begin{enumerate}[(i)]
    \item \emph{Beyond minimality.} For measurements with $n > d+1$ outcomes the simplex is replaced by a general polytope and scalable frames lose their barycentric interpretation. Which geometric configuration plays the role of the orthocentric simplex for non-minimal $s$-tight IC measurements?
    \item \emph{Realisability.} Given a GGPT with its canonical Euclidean structure -- in the quantum case, the Hilbert--Schmidt one -- which acute orthocentric simplices can actually be anchored in its state space? In the quantum case this asks for a characterisation of the admissible probability vectors $p$ for a given dimension $D$; the maximally symmetric point of this problem touches upon the existence of SICs.
    \item \emph{Primary equation.} The generalised Urgleichung is by now available for the whole $s$-tight class~\cite{szymusiak2025morphophoricity}, extending the morphophoric case~\cite{szymusiak2025can}. 
    What remains open is the accompanying theory of state update: Bayes' rule of Appendix~\ref{app:instrument} constrains the post-measurement states, but a characterisation of the instruments compatible with a given $s$-tight IC measurement -- including their sequential statistics and repeatability properties -- is still lacking; first steps in this direction are taken in the companion work~\cite{slomczynski2026fiducial}.
   
\end{enumerate}

We hope that the bridge established here -- between the operational demand for optimal information extraction and the concurrence of altitudes -- will prove useful on both of its ends.

\section*{Acknowledgements}

We are grateful to Karol Życzkowski, whose question prompted us to complement the abstract theory with the explicit quantum example of Section~\ref{sec:axial}.

This work was supported by the European Research Council (ERC) Advanced Grant ``Typical and Atypical structures in quantum theory'' (TAtypic) under the European Union's Horizon Europe research and innovation programme (grant agreement No.~101142236).

\appendix
\section{Instrument and Bayes' rule} \label{app:instrument}

Let $x = \lambda x_1 + (1-\lambda)x_2 \in S$ be a convex combination (statistical mixture) of two states, with $x_1, x_2 \in S$, $\lambda \in (0,1)$, and let $j=1,\dots, n$.
As $\pi_j$ is affine, we have
\[
  \pi_j(x) = \lambda \pi_j(x_1) + (1-\lambda)\pi_j(x_2).
\]
Moreover, the requirement that $\mathcal{I}_j = \pi_j F_j$ is affine implies
\[
  \pi_j(x)\, F_j(x)
  = \lambda\, \pi_j(x_1)\, F_j(x_1) + (1-\lambda)\, \pi_j(x_2)\, F_j(x_2).
\]
Dividing both sides by $\pi_j(x)$ (assuming $\pi_j(x) \neq 0$) and substituting the affine expansion of $\pi_j(x)$ into the denominator, we get 
\[ 
  F_j(x) = \left( \frac{\lambda \pi_j(x_1)}{\lambda \pi_j(x_1) + (1-\lambda) \pi_j(x_2)} \right) F_j(x_1) + \left( \frac{(1-\lambda) \pi_j(x_2)}{\lambda \pi_j(x_1) + (1-\lambda) \pi_j(x_2)} \right) F_j(x_2). 
\]

This has a direct Bayesian interpretation. 
Consider a classical two-hypothesis inference problem: the system was prepared in state $x_1$ with \textit{prior probability} $\lambda$, or in state $x_2$ with \textit{prior probability} $1-\lambda$. 
Upon observing outcome $j$, Bayes' rule gives the \textit{posterior probability} that the system was prepared in state $x_i$:
\[
  \Pr(\text{prepared in } x_i \mid \text{outcome is } j)
  = \frac{\lambda_i\, \pi_j(x_i)}{\pi_j(x)},
\]
where $\lambda_1 = \lambda$, $\lambda_2 = 1-\lambda$, and $\pi_j(x_i)$ is the \textit{likelihood} of outcome $j$ given preparation $x_i$. 
The post-measurement state $F_j(x)$ is therefore the convex combination of $F_j(x_1)$ and $F_j(x_2)$ weighted by these posteriors.
In other words, the affinity of $\mathcal{I}_j = \pi_j F_j$ is the mathematical expression of the requirement that state update is consistent with Bayesian inference about the preparation.

\section{Equivalence of approaches} \label{app:equivalence}

In Section \ref{sec:newapproach} we showed a path from the GGPT $(V,C,B,\langle\cdot,\cdot\rangle_0,m,\mu)$ to the simpler representation given by $(V_0,\langle\cdot,\cdot\rangle_0,S)$ that preserves the \emph{internal} geometry of the theory. Now we will show that this triple alone allows for the reconstruction of the GGPT  up to the choice of the parameter $\mu$, which now characterises the \emph{external} geometry of the theory.

We begin with a finite-dimensional real vector space $V_0$ equipped with an inner product $\langle\cdot,\cdot \rangle_0$ and a compact convex neighbourhood of the origin $S\subset V_0$ to build a GPT. States will be now represented by the elements of $S$. In order to build the set of effects we start with defining an ancillary vector space $V_0\times \mathbb R$ with nonstandard vector space operations by introducing a bijection $F \colon V_0\times \mathbb R \to V_0\times \mathbb R$ given by
\[
   F(\psi,q) \coloneqq \begin{cases} (q\psi, q) & \text{if } q \neq 0, \\ (\psi, 0) & \text{if } q = 0. \end{cases}
\]
We then equip $V_0\times \mathbb R$ with the linear structure pulled back from the canonical vector space operations on $V_0\times \mathbb R$ via $F$, i.e., $x + y \coloneqq F^{-1}(F(x) + F(y))$ and $t \cdot x \coloneqq F^{-1}(t F(x))$. This guarantees that $V_0\times \mathbb R$ is a vector space and, in particular, the linear structure on $V_0\times\{0\}$ coincides with that on $V_0$. Let $S^\star \coloneqq \{y\in V_0:\langle y,x\rangle\geq -1 \quad \text{for all }x\in S\}$. Then the set of effects $\mathcal E$ is defined as 
\[
   \mathcal E \coloneqq \left\{(\psi,c)\in S^\star\times(0,1):\frac{c}{c-1}\psi\in S^\star\right\}\cup\{(0,1),(0,0)\}.
\]
To show that $\mathcal E$ is convex, let us take $t \in [0,1]$ and first observe that for $(\psi_1,c_1),(\psi_2,c_2)\in\mathcal E$ with $c_1,c_2 \in (0,1)$ and for any $y\in S$ we get
\[
   \left\langle \frac{tc_1\psi_1+(1-t)c_2\psi_2}{tc_1+(1-t)c_2},y \right\rangle=\frac{tc_1\langle\psi_1,y\rangle+(1-t)c_2\langle\psi_2,y\rangle}{tc_1+(1-t)c_2}\geq \frac{-tc_1-(1-t)c_2}{tc_1+(1-t)c_2}=-1,
\]
which shows that $\frac{tc_1\psi_1+(1-t)c_2\psi_2}{tc_1+(1-t)c_2}\in S^\star$.
Now, since the condition $\frac{c}{c-1}\psi\in S^\star$ can be equivalently expressed as $c\langle\psi,y\rangle\leq 1-c$ for all $y\in S$, we also get
\begin{align*}
   (tc_1+(1-t)c_2)\left\langle \frac{tc_1\psi_1+(1-t)c_2\psi_2}{tc_1+(1-t)c_2},y \right\rangle &=tc_1\langle\psi_1,y\rangle+(1-t)c_2\langle\psi_2,y\rangle \\
   &\leq t(1-c_1)+ (1-t)(1-c_2)\\
   &= 1-(tc_1+(1-t)c_2)
\end{align*}
and thus $t(\psi_1,c_1)+(1-t)(\psi_2,c_2)=\left(\frac{tc_1\psi_1+(1-t)c_2\psi_2}{tc_1+(1-t)c_2},tc_1+(1-t)c_2\right)\in \mathcal E$. 
The convex combinations involving the boundary elements $(0,0)$ and $(0,1)$ are handled by a direct check; e.g., $t(\psi_1,c_1)+(1-t)(0,0)=(\psi_1,tc_1)$, and the condition $\frac{tc_1}{tc_1-1}\psi_1\in S^\star$ follows from the convexity of $S^\star$ and $0\in S^\star$, since $\frac{tc_1}{1-tc_1}\leq\frac{c_1}{1-c_1}$.
Finally, the function assigning probability takes the form 
\[
   \mathfrak{p}(x,(\psi,c))=c(\langle\psi,x\rangle+1).
\] 
It is easy to see that $0\leq \mathfrak{p}(x,(\psi,c))\leq 1$ for every $x\in S$ and $(\psi,c)\in\mathcal E$, and that $\mathfrak p$ is affine in the state and linear in the effect (with respect to the linear structure introduced above). Note that, if we measure the effect $(\psi,c)$, then $c$ can  be interpreted as the probability of obtaining the answer ``yes'' if the state of the system was initially $0$.

To see the equivalence of this approach, we will now reconstruct the GGPT $(V,C,B,\langle\cdot,\cdot\rangle_0,m,\mu)$. First, we build a linear space $V$ as an abstract orthogonal direct sum of $V_0$ and some 1-dimensional vector space $V_1$, i.e., $V \coloneqq V_0\oplus V_1$. We choose $m\in V_1\setminus\{0\}$ and set $\mu \coloneqq \lVert m \rVert^2$. Note that, due to the one-dimensionality of $V_1$, this choice involves only one free parameter: $\mu$. So we will obtain a 1-parameter family of GGPTs, all of which share the same internal geometry. We define the set of states $B \coloneqq \sqrt{\mu}S+m$ and, consequently, the convex cone $C \coloneqq \{\alpha x:\alpha\geq 0,x\in B\}$. Note that $C$ does not depend on the choice of $m$ (and thus $\mu$). Indeed, for $m,m'\in V_1\setminus\{0\}$ with squared lengths $\mu$ and $\mu'$ we necessarily have $m'=\sqrt{\frac{\mu'}{\mu}}m$ and thus $B'=\sqrt{\mu'}S+m'=\sqrt{\frac{\mu'}{\mu}}B$. Having defined $V,C$ and $B$, we also obtain $V^*,C^*$ and a unique $e\in C^*$ such that $B$ lies in the $1$-level set of $e$; note that
$e=T(\tfrac{1}{\mu}m)$, since
$\langle\tfrac{1}{\mu}m,\sqrt{\mu}y+m\rangle=1$ for every $y\in S$. Since $V$ is now from the beginning equipped with an inner product, we also get an induced isometric linear isomorphism $T:V\to V^*$. We can now establish the one-to-one correspondence between the sets of effects via $(\psi,c)\mapsto T(\tfrac{c}{\sqrt{\mu}}\psi+\frac{c}{\mu}m)$. In order to see that this function maps effects onto effects, let us first observe that 
\begin{align*}
    T(\frac{c}{\sqrt{\mu}}\psi+\frac{c}{\mu}m)\in C^*&\iff \left\langle\frac{c}{\sqrt{\mu}}\psi+\frac{c}{\mu}m,x\right\rangle\geq 0 \quad \text{for all }x\in B\\
    &\iff \left\langle\frac{c}{\sqrt{\mu}}\psi+\frac{c}{\mu}m,\sqrt{\mu}y+m\right\rangle\geq 0 \quad \text{for all }y\in S\\
     &\iff c(\langle\psi,y\rangle+1)\geq 0 \quad \text{for all }y\in S\\
    &\iff \psi\in S^\star.
\end{align*}

Analogously, since $e-T(\tfrac{c}{\sqrt{\mu}}\psi+\tfrac{c}{\mu}m) = T\bigl(\tfrac{1}{\mu}m-\tfrac{c}{\sqrt{\mu}}\psi-\tfrac{c}{\mu}m\bigr)$, we obtain
\begin{align*}
    e - T(\tfrac{c}{\sqrt{\mu}}\psi+\tfrac{c}{\mu}m)\in C^*
    &\iff (1-c) - c\langle\psi,y\rangle \geq 0 \quad \text{for all }y\in S\\
    &\iff \psi = 0 \textnormal{ (if } c=1\textnormal{)} \quad\textnormal{or}\quad \tfrac{c}{c-1}\psi\in S^\star \textnormal{ (if } c<1\textnormal{)}.
\end{align*}
Together, these two equivalences show that $(\psi,c)\mapsto T(\frac{c}{\sqrt{\mu}}\psi+\frac{c}{\mu}m)$ is a bijection between $\mathcal E$ and the set of effects of the reconstructed GGPT, which completes the proof of the equivalence of the two approaches.

\end{document}